\documentclass[12pt,en,authoryear]{elegantpaper}

%%%% Author added packages
\usepackage{caption}
\usepackage{subcaption}
\usepackage{wrapfig}
\usepackage{nicefrac}
\usepackage{mathtools}
\mathtoolsset{showonlyrefs=true}
\usepackage{relsize}
% For restating theorems and lemmas
\usepackage{thmtools}
\usepackage{thm-restate}
%%%%

\newtheorem{claim}[theorem]{Claim}

%% Author-defined-macros
% Notes

% Math shortcuts
\newcommand{\bsucc}{\boldsymbol{\succ}}
\newcommand{\bbR}{\mathbb{R}}

\newcommand{\calR}{\mathcal{R}}
\newcommand{\cR}{\mathcal{R}}
\newcommand{\cF}{\mathcal{F}}

\newcommand{\pairwise}{pairwise exchange }
%% End-of-author-defined-macros

\title{Strategy-proof and Envy-free Mechanisms for House Allocation\footnote{We thank Haluk Ergin, David Ahn, Yuichiro Kamada, Chris Shannon and Shachar Kariv for their helpful comments and suggestions.}}
\author{Priyanka Shende \footnote{Email: priyanka.s@berkeley.edu} \\ University of California, Berkeley \and Manish Purohit\footnote{Email: mpurohit@google.com} \\ Google Research, Mountain View}

\begin{document}

\maketitle

\begin{abstract}
We consider the problem of allocating indivisible objects to agents when agents have strict preferences over objects. There are inherent trade-offs between competing notions of efficiency, fairness and incentives in assignment mechanisms. It is, therefore, natural to consider mechanisms that satisfy two of these three properties in their strongest notions, while trying to improve on the third dimension. In this paper, we are motivated by the following question: Is there a strategy-proof and envy-free random assignment mechanism more efficient than equal division?

Our contributions in this paper are twofold. First, we further explore the incompatibility between efficiency and envy-freeness in the class of strategy-proof mechanisms. We define a new notion of efficiency that is weaker than ex-post efficiency and prove that any strategy-proof and envy-free mechanism must sacrifice efficiency even in this very weak sense. Next, we introduce a new family of mechanisms called Pairwise Exchange mechanisms and make the surprising observation that strategy-proofness is equivalent to envy-freeness within this class.
We characterize the set of all neutral and strategy-proof (and hence, also envy-free) mechanisms in this family and show that they admit a very simple linear representation.
\jel{C78, D61, D63, D71}
\keywords{random assignment; ordinal; strategy-proofness; envy-freeness; equal division}
\end{abstract}

\newpage

% Paper body
\section{Introduction}
\label{sec:introduction}

The house allocation problem is a fundamental resource allocation problem that deals with the assignment of indivisible objects to agents without the use of monetary transfers. Monetary transfers are undesirable %(and sometimes even illegal!)
in many real-world applications such as placement of students to public schools \citep{abdulkadirouglu2003school}, course allocation \citep{budish2011combinatorial}, organ donation \citep{roth2005pairwise}, and on-campus housing allocation \citep{chen2002improving}.
In the classical \emph{ordinal} setting, each agent reports a strict preference ranking over the set of alternatives and the mechanism outputs an assignment of objects to agents. An ideal mechanism used in such markets must have certain qualities; it must be non-manipulable, efficient and fair.

Since objects are indivisible, any assignment of objects to agents is bound to be perceived as unfair, ex-post. Randomization is, therefore, commonly used as a tool to restore fairness from an ex-ante perspective. Perhaps the most natural randomized mechanism is the \emph{Random Serial Dictatorship} (RSD) mechanism, also known as \emph{Random Priority} mechanism. In this mechanism, agents are ordered uniformly at random, and each agent successively chooses her favorite object from the set of available objects according to that order. RSD is known to satisfy a number of attractive properties \citep{abdulkadirouglu1998random,bogomolnaia2001new}. It is strategy-proof, meaning that revealing true preferences is always a dominant strategy for every agent. It is also ex-post efficient, which implies that it always induces efficient eventual outcomes.
However, it satisfies fairness only in the weak sense of equal treatment of equals (where agents with identical preferences face identical lotteries over objects).

A stronger notion of fairness is envy-freeness. Introduced by \cite{foley1967resource}, the classic definition of envy-freeness requires that each agent should prefer her allocation to anyone else's allocation. This notion is often considered as the gold standard of fairness in many different settings such as resource allocation~\citep{foley1967resource}, cake-cutting~\citep{robertson1998cake}, and rent division~\citep{edward1999rental}. In the context of random assignment mechanisms, \cite{bogomolnaia2001new} formulate this property using the first-order stochastic dominance relation. 
They proposed the \emph{Probablistic Serial} (PS) mechanism that satisfied stronger efficiency and fairness properties than RSD. PS is ordinally efficient\footnote{A mechanism is ordinally efficient if its outcome is not first-order stochastically dominated by any other random assignment.} and envy-free. However, this mechanism is not strategy-proof. 

Unfortunately, a growing body of work~\citep{bogomolnaia2001new,nesterov2017fairness,zhou1990conjecture,martini2016strategy} has demonstrated the inherent incompatibility between efficiency, fairness, and strategy-proofness.
\citet{bogomolnaia2001new} proved that no mechanism  simultaneously satisfies strategy-proofness, ordinal efficiency and equal treatment of equals. More recently, \citet{nesterov2017fairness} showed the incompatibility between strategy-proofness, envy-freeness and ex-post efficiency.
Given these trade-offs, it is natural to consider mechanisms that can be designed when one would like to have two of the three properties of efficiency, fairness and strategy-proofness to be satisfied in their strongest notions, where the choice of the properties depends on the application, while trying to improve on the third dimension. 

Since the primary motivation for randomization in house allocation mechanisms is to provide fairness guarantees, in this paper, we focus our attention on strategy-proof random allocation mechanisms that satisfy envy-freeness. Indeed, there has been a resurgence of interest in fairness at the intersection of Economics and Computer Science in recent years. 
See, for instance, the recent survey by \citet{moulin2019fair} and the EC workshop\footnote{Workshop on Fairness at ACM Conference on Economics and Computation, 2019: \url{https://users.cs.duke.edu/~rupert/fair-division-ec19/index.html}} on fair resource allocation for an excellent overview. 
To the best of our knowledge, the equal division (ED) mechanism that allocates each object equally among all agents is the only known mechanism that is strategy-proof and envy-free. However, since this mechanism completely ignores agents' preferences, it is almost always inefficient. This raises the natural question: are there other strategy-proof and envy-free mechanisms that are more efficient than the equal division mechanism?

\subsection{Our Contributions}

We first show a strong impossibility result to demonstrate that strategy-proof and envy-free mechanisms must sacrifice efficiency even in a very weak sense. We define a notion of \emph{contention-free efficiency} that is much weaker than ex-post efficiency and show that no strategy-proof mechanism that satisfies envy-freeness can be contention-free efficient. Our result thus strengthens and subsumes the hardness result by \citet{bogomolnaia2001new} and \citet{nesterov2017fairness} regarding the incompatibility of ex-post efficiency, strategy-proofness and envy-freeness.

In order to design new strategy-proof and envy-free mechanisms, we first argue, in Section \ref{subsec:random_mechanisms_representation}, that any random mechanism can be thought of as starting from equal division mechanism and allowing every pair of agents to exchange probabilistic shares of objects. These exchanges can be represented using preference profile specific \emph{transfer functions}. Next, we restrict our attention to a simple class of transfer functions that depend only on the preferences of the pair of agents involved in an exchange and characterize the family of mechanisms called \emph{\pairwise mechanisms} that admit such transfers. In Section \ref{subsec:equiv}, we demonstrate a surprising equivalence result between strategy-proofness and envy-freeness within this class of mechanisms. We provide a characterization for the set of neutral, strategy-proof and envy-free mechanisms in this family in Section \ref{subsection:sp_neutral_characterize}. These mechanisms have a simple linear representation, where the transfer for any object within a pair of agents is determined by the relative rank for that object in the agents' preference orderings. Finally in Section \ref{sec:pareto-dominance}, we characterize the set of all Pareto-efficient mechanisms within this class.

\subsection{Other Related Work}
This paper adds to the broad literature on random assignment of indivisible objects that was pioneered by \citet{hylland1979efficient}. They adapt the competitive equilibrium from equal incomes (CEEI) solution to define a pseudo-market mechanism that elicits agents' von Neumann-Morgenstern preferences over individual objects and gives a solution that is efficient and fair with respect to the utility functions (i.e., ex-ante efficient and envy-free). However, the mechanism is not strategy-proof. \citet{zhou1990conjecture}, in fact, show (proving a conjecture by \citet{gale1987}) that there exists no strategy-proof mechanism that satisfies ex-ante efficiency and equal treatment of equals.

When agents' report only ordinal preferences over individual objects, the simplest and the most widely known strategy-proof mechanism is serial dictatorship (SD) \citep{svensson1994queue,satterthwaite1981strategy}: using a fixed ordering of agents, each agent successively chooses her most preferred object from the set of remaining objects. \citet{abdulkadirouglu1998random} show that these are the only Pareto efficient matching mechanisms. However, while these are very unfair, using a random ordering of agents results in restoration of fairness in the sense of equal treatment of equals. The resulting mechanism, RSD, was also analyzed by \citet{abdulkadirouglu1998random} and shown to be equivalent to the \emph{core from random endowments} mechanism, where each agent is initially endowed with an object that is chosen uniformly at random and the mechanism then uses Gale's \emph{Top Trading Cycles} (TTC) algorithm \citep{shapley1974cores} to arrive at a random assignment. 

In another successful line of research, several papers have focused on characterizing (families of) mechanisms that satisfy certain desirable properties. For deterministic mechanisms, \citet{svensson1999strategy} proves that serial dictatorships are the only group strategy-proof and neutral mechanisms. \citet{papai2000strategyproof} introduces a class of mechanisms called hierarchical
exchange mechanisms and proves that these mechanisms characterize the class of group strategy-proof, Pareto-efficient and reallocation-proof mechanisms. More recently, \citet{pycia2017incentive} have shown that their trading-cycles mechanisms characterize the full class of group-strategy proof and Pareto-efficient mechanisms. Within random mechanisms, when there are three agents and three objects \cite{bogomolnaia2001new} show that RSD is the unique mechanism that satisfies strategy-proofness, ex-post efficiency and equal treatment of equals while PS is the unique mechanism that satisfies ordinal efficiency, evny-freeness and weak strategy-proofness. \cite{bogomolnaia2012probabilistic,heo2014extended,heo2014probabilistic,hashimoto2014two,heo2015characterization} provide other axiomatic characterizations of PS. In large regular markets, \citet{liu2016ordinal} show that there is a unique mechanism that is ordinally efficient, envy-free and strategy-proof and all uniform randomizations over known deterministic mechanisms such as serial dictatorships,  hierarchical exchange and trading-cycles mechanisms coincide with this unique mechanism. \cite{chambers2004consistency} introduces a notion of probabilistic consistency and showed that ED is the only mechanism that satisfies probabilistic consistency and equal treatment of equals.
\section{Preliminaries}
\label{sec:prelims}

\subsection{Model and Notation}
\label{sec:model}

In this section, we formally define the canonical house allocation problem (also called in the literature as the assignment problem), random assignment mechanisms and properties of these mechanisms. 

Let $N$ be the set of agents and $O$ be the set of objects. Throughout this paper, we assume that the sets $N$ and $O$ are fixed and finite with $|N|=|O|=n \geq 3$. Each agent $i \in N$ has a strict preference relation $\succ_i$ on $O$. The corresponding weak preference relation on $O$ is denoted by $\succeq_i$. 
When $\succ= a_1 \succ a_2 \succ \ldots \succ a_n$, for brevity, we will denote such a preference relation by $\succ = \langle a_1, a_2, \ldots, a_n\rangle$.
A set of individual preferences of all agents constitutes a preference profile $\boldsymbol{\succ}=(\succ_i)_{i \in N}$. Let $\boldsymbol{\succ}_{-i} \ =\ \boldsymbol{\succ} \setminus \{\succ_i\}$ denote the set of preferences of all agents other than agent $i$. We will use $\bsucc=(\succ_i=\succ,\succ_j=\succ',\bsucc_{-\{i,j\}})$ to denote a preference profile when agent $i$'s preference is $\succ$, agent $j$'s preference is $\succ'$ and all other agent's preferences are given by $\bsucc_{-\{i,j\}}$. Let $\mathcal{R}$ be the set of all individual preferences and $\mathcal{R}^n$ be the set of all possible preference profiles. 

A \emph{deterministic} assignment is a bijection from $N$ to $O$, where every agent receives one object and every object is assigned to exactly one agent. Let $\mathcal{D}$ be the set of all deterministic assignments.
A \emph{random} assignment is a probability distribution over deterministic assignments. A random assignment $P=\Big[P_{i,a}\Big]_{i \in N, a \in O}$ can be represented as a doubly stochastic matrix of size $n \times n$, where each element $P_{i,a}$ of the matrix $P$ represents the probability with which agent $i$ is assigned object $a$\footnote{For convenience in this paper, we will often think of each object as an infinitely divisible good of one unit that will be distributed among $n$ agents. Allocating a fractional unit $x$ of an object $a$ to agent $i$ is interpreted as setting $P_{i,a} = x$.}. 
Let $\mathcal{L}(\mathcal{D})$ be a set of all possible random assignments. The $i^\text{th}$ row of the matrix $P$, $P_i$, represents the allocation received by agent $i$ in the random assignment. This allocation is simply a probability distribution over the set of objects $O$. The set of all random allocations will be denoted by $\mathcal{L}(O)$. 

In order to compare random assignments, we extend agents' preferences over the set of objects $O$ to the set of random allocations $\mathcal{L}(O)$. Let $U(\succ,a) = \{o \in O \mid o \succeq a\}$ be the set of objects that are weakly preferred to object $a$ according to the preference relation $\succ$. For example, if $\succ = \langle a_1, a_2, \ldots, a_n\rangle$, then $U(\succ, a_k) = \{a_1, \ldots, a_k\}$. Agent $i$ prefers a random allocation $P_i \in \mathcal{L}(O)$ to another random allocation $Q_i \in \mathcal{L}(O)$ (denoted as $P_i \geq_i Q_i$) if and only if the random allocation $P_i$ first-order stochastically dominates $Q_i$ according to agent $i$'s preference $\succ_i$. Formally,
\begin{align*}
    P_i \geq_i Q_i \Longleftrightarrow \sum_{o \in U(\succ_i, a)} P_{i,o} \geq \sum_{o \in U(\succ_i, a)} Q_{i,o} \ , \forall \ a \in O
\end{align*}
If $P_i \geq_i Q_i$, and in addition, there is an object $b \in O$, such that $\sum_{o \in U(\succ_i, b)} P_{i,o} > \sum_{o \in U(\succ_i, b)} Q_{i,o}$, then we say that agent $i$ strictly prefers $P_i$ to $Q_i$, which is denoted by $P_i >_i Q_i$. Lastly, a random assignment $P$ \emph{dominates} another assignment $Q$ if every agent prefers the random allocation that she receives in $P$ to her random allocation in $Q$. That is, $P$ \emph{dominates} $Q$, if $\forall \, i \in N$, $P_i \geq_i Q_i$.

A random assignment \emph{mechanism} is a mapping, $ \varphi:\mathcal{R}^n \rightarrow \mathcal{L}(\mathcal{D})$, that associates each preference profile $\bsucc \in \mathcal{R}^n$ with some random assignment $P \in \mathcal{L}(\mathcal{D})$.
For ease of exposition, we often use $P^{(x)}$:=$\varphi(\bsucc^{(x)})$ to denote the random assignment associated with preference profile $\bsucc^{(x)}$ in the mechanism $\varphi$.

We define some additional notation that we will use throughout the paper. Let $[n] = \{1, 2, \ldots, n\}$. Let $\pi:O \rightarrow O$ be a permutation, i.e. a bijection from $O$ to itself. For any preference relation $\succ$ and permutation $\pi$, let $\pi(\succ)$ denote the preference relation obtained by applying the permutation $\pi$ to every object in $\succ$ in order. In other words, objects are re-labeled according to the function $\pi$. Formally, if $\succ = \langle a_1, a_2, \ldots, a_n\rangle$ and $\pi$ is a permutation, then $\pi(\succ) = \langle \pi(a_{1}), \pi(a_{2}), \ldots, \pi(a_{n})\rangle$. Similarly for any preference profile $\bsucc = (\succ_i)_{i \in N}\in \cR^n$, let $\pi(\bsucc) = (\pi(\succ_i))_{i \in N}$ be the preference profile obtained by re-labeling the objects according to $\pi$. 
Let $\sigma(\succ, k)$ denote the $k^{\text{th}}$ most preferred object according to the preference relation $\succ$. Finally, we use $rank(\succ, a)$ to denote the rank of object $a$ in $\succ$.

\subsection{Properties of Mechanisms}

We now formally define the different notions of efficiency, fairness, and incentive-compatibility that we address in this paper.

\paragraph{Efficiency.} A random mechanism $\varphi$ is said to be \emph{ex-post efficient} if for any preference profile $\bsucc$, the random assignment $\varphi(\bsucc)$ is a distribution over Pareto-optimal deterministic assignments. 
An even stronger notion of efficiency was proposed by \citet{bogomolnaia2001new}. 
 A mechanism $\varphi$ is said to be \emph{ordinally efficient} if for any profile $\bsucc$, the assignment $\varphi(\bsucc)$ is not dominated by any other random assignment.

\paragraph{Incentives.} A mechanism $\varphi$ is said to be \emph{strategy-proof} (SP) if reporting true preferences is a dominant strategy for every agent. Formally, a mechanism $\varphi$ is strategy-proof if for every $i \in N$, for every $\boldsymbol{\succ} \in \mathcal{R}^n$ and $\succ'_i \in \mathcal{R}$, $P^{\boldsymbol{\succ}}_i \geq_i P^{(\succ'_i,\boldsymbol{\succ}_{-i})}_i$ where $P^{\boldsymbol{\succ}}=\varphi(\boldsymbol{\succ})$ and $P^{(\succ'_i,\boldsymbol{\succ}_{-i})}=\varphi(\succ'_i,\boldsymbol{\succ}_{-i})$. 

As shown by \cite{mennle2014partial,mennle2014axiomatic}, strategy-proofness is equivalent to the following three axioms: \emph{swap monotonicity, upper-invariance}, and \emph{lower-invariance}. We define these axioms below.
For any preference relation $\succ_i$, we define the \emph{neighborhood} of $\succ_i$, $\Gamma(\succ_i)$, to be set of all preferences that can arise by swapping any two consecutively ranked objects in the preference $\succ_i$. For example, suppose $O=\{a,b,c,d\}$ and $\succ_i := \langle a, b, c, d \rangle$. Then $\Gamma(\succ_i)=\{\langle b, a, c, d\rangle, \langle a, c, b ,d\rangle, \langle a, b, d, c \rangle \}$. 

A mechanism $\varphi$ is swap-monotonic, if for every agent $i \in N$, for every preference profile $\boldsymbol{\succ} = (\succ_i, \boldsymbol{\succ}_{-i}) \in \mathcal{R}^n$ and every $\succ'_i \in \Gamma(\succ_i)$ with $a \succ_i b$ but $b \succ'_i a$ for some $a,b \in O$, then either $P'_i = P_i$ or $P'_{i,b} > P_{i,b}$ where $P = \varphi(\boldsymbol{\succ})$ and $P' = \varphi(\succ'_i,\boldsymbol{\succ_{-i}})$.
In other words, swap monotonicity requires the mechanism $\varphi$ to either disregard agent $i$'s mis-report or to react to her mis-report by allocating her a higher probability of the object that's been brought up in the preference. 

A mechanism is upper-invariant if no agent, by swapping some object $a$ with a less preferred  $b$ in her preference, can get more allocation for any of the objects that are strictly preferred to object $a$. Formally, $\varphi$ is upper-invariant if for every agent $i \in N$, for every $\boldsymbol{\succ} = (\succ_i, \boldsymbol{\succ}_{-i}) \in \mathcal{R}^n$ and every $\succ'_i \in \Gamma(\succ_i)$ with $a \succ_i b$ but $b \succ'_i a$ for some $a,b \in O$, we have  $P_{i,o}=P_{i,o}' \ \forall o \in U(\succ_i, a) \setminus \{a\}$, where $P = \varphi(\boldsymbol{\succ})$ and $P' = \varphi(\succ'_i,\boldsymbol{\succ_{-i}})$. 

Similarly, a mechanism is lower-invariant  if for every agent $i \in N$, for every $\boldsymbol{\succ} = (\succ_i, \boldsymbol{\succ}_{-i}) \in \mathcal{R}^n$ and every $\succ'_i \in \Gamma(\succ_i)$ with $a \succ_i b$ but $b \succ'_i a$ for some $a,b \in O$, we have  $P_{i,o}=P_{i,o}' \ \forall o \notin U(\succ_i, b)$, where $P = \varphi(\boldsymbol{\succ})$ and $P' = \varphi(\succ'_i,\boldsymbol{\succ_{-i}})$.

\begin{lemma}[\cite{mennle2014partial,mennle2014axiomatic}]\label{lemma:sp_swap_upper_lower}
A mechanism $\varphi$ is strategy proof if and only if it is swap-monotonic, upper invariant, and lower invariant.
\end{lemma}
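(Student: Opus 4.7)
The equivalence splits into two directions, and the bulk of the work is a careful bookkeeping of prefix-sum inequalities.

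For the ($\Rightarrow$) direction, assume $\varphi$ is strategy-proof and fix an agent $i$, a profile $\bsucc$, and a neighbor $\succ'_i \in \Gamma(\succ_i)$ that swaps consecutively-ranked objects $a,b$ with $a \succ_i b$ but $b \succ'_i a$. Writing $P = \varphi(\bsucc)$ and $P' = \varphi(\succ'_i,\bsucc_{-i})$, applying strategy-proofness at both $\bsucc$ (deviation $\succ'_i$) and $(\succ'_i,\bsucc_{-i})$ (deviation $\succ_i$) yields, for every $c \in O$,
\begin{align}
\sum_{o \in U(\succ_i,c)} P_{i,o} \;\geq\; \sum_{o \in U(\succ_i,c)} P'_{i,o}, \qquad \sum_{o \in U(\succ'_i,c)} P'_{i,o} \;\geq\; \sum_{o \in U(\succ'_i,c)} P_{i,o}.
\end{align}
For any $c$ strictly above $a$ in $\succ_i$ we have $U(\succ_i,c) = U(\succ'_i,c)$, so both inequalities force equality of that prefix sum; inducting down from the top gives $P_{i,o} = P'_{i,o}$ for every $o$ strictly preferred to $a$, which is upper-invariance. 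Symmetrically, for $c$ strictly below $b$ the prefix sets again coincide, and inducting up from the bottom using $\sum_o P_{i,o} = \sum_o P'_{i,o} = 1$ gives lower-invariance. These two invariances imply $P_{i,a}+P_{i,b} = P'_{i,a}+P'_{i,b}$. Taking $c=a$ in the first inequality and $c=b$ in the second and canceling the invariant prefixes, we get $P_{i,a} \geq P'_{i,a}$ and $P'_{i,b} \geq P_{i,b}$; combined with the conservation identity, either both are equalities (so $P_i = P'_i$) or both are strict (so $P'_{i,b} > P_{i,b}$), which is exactly swap-monotonicity.

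For the ($\Leftarrow$) direction, assume the three axioms hold and fix $i, \bsucc, \succ'_i$; the goal is $P^{\bsucc}_i \geq_i P^{(\succ'_i,\bsucc_{-i})}_i$ with respect to $\succ_i$. I build a chain $\succ^{(0)} = \succ'_i, \succ^{(1)}, \ldots, \succ^{(m)} = \succ_i$ by bubble-sorting $\succ'_i$ toward $\succ_i$: at each step, pick two objects $b,a$ that are adjacent in $\succ^{(t)}$ with $b \succ^{(t)} a$ but $a \succ_i b$, and swap them to produce $\succ^{(t+1)}$. Such a path exists because bubble-sort decreases the number of inversions relative to $\succ_i$ by one per swap. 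At profile $(\succ^{(t+1)},\bsucc_{-i})$ with deviation $\succ^{(t)}$, upper- and lower-invariance force the allocation outside $\{a,b\}$ to be identical, and swap-monotonicity then yields $P^{(t+1)}_{i,a} \geq P^{(t)}_{i,a}$ with $P^{(t+1)}_{i,b} \leq P^{(t)}_{i,b}$. To promote this single-step comparison to SD with respect to $\succ_i$, note that for any $c \in O$ the prefix $U(\succ_i, c)$ contains both of $a,b$, neither, or only $a$ (never only $b$, since $a \succ_i b$ by choice of the swap); in the first two cases the prefix sums of $P^{(t+1)}$ and $P^{(t)}$ coincide, and in the third $P^{(t+1)}$ has the weakly larger prefix sum. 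Hence $P^{(t+1)}_i \geq_i P^{(t)}_i$ at every step, and transitivity of first-order stochastic dominance delivers $P^{(m)}_i \geq_i P^{(0)}_i$, the desired strategy-proofness inequality.

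The main delicacy is in the converse direction: one must choose the chain of adjacent swaps so that every step is a ``correction'' of a pair inverted relative to $\succ_i$, rather than an arbitrary geodesic between $\succ'_i$ and $\succ_i$. Only then does each single-swap comparison translate from dominance with respect to the intermediate report $\succ^{(t+1)}$ (which is all the axioms directly give) into dominance with respect to the true preference $\succ_i$. Once the bubble-sort path is fixed, the three axioms map one-to-one onto the three prefix cases above, and transitivity closes the argument.
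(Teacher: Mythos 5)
Your proposal is correct, but note that the paper itself gives no proof of this lemma: it is imported verbatim from \citet{mennle2014partial,mennle2014axiomatic}, so there is no in-paper argument to compare against, and what you have written is essentially a self-contained reconstruction of the Mennle--Seuken equivalence. Both directions check out. In the forward direction, the two-sided application of strategy-proofness at $\bsucc$ and at $(\succ'_i,\bsucc_{-i})$ on the coinciding upper-contour sets correctly yields upper- and lower-invariance and then, after cancelling invariant prefixes at $c=a$ and $c=b$, the dichotomy that gives swap monotonicity; the only presentational omission is that to pin down the allocation of the object ranked immediately below $b$ you also need prefix-sum equality on $U(\succ_i,b)$ itself, which follows by the same device since $U(\succ_i,b)=U(\succ'_i,a)$ and both strategy-proofness inequalities apply to that common set---a one-line fix in the spirit of what you already do. In the converse direction, your key choice is exactly the right one: bubble-sorting $\succ'_i$ toward $\succ_i$ so that every adjacent transposition corrects an inversion relative to the \emph{true} preference guarantees that each upper-contour set $U(\succ_i,c)$ contains both, neither, or only the promoted object $a$, never only $b$, so the single-step comparison (equality off $\{a,b\}$ from the two invariances, plus $P^{(t+1)}_{i,a}\geq P^{(t)}_{i,a}$ from swap monotonicity and conservation) lifts to first-order stochastic dominance with respect to $\succ_i$ at every link, and transitivity closes the argument. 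This localization-to-adjacent-swaps strategy is the same decomposition idea that underlies the cited original proof, so your write-up buys the paper a self-contained justification of a result it otherwise only cites.
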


\paragraph{Fairness.} Different notions of fairness have been considered in literature. \emph{Equal treatment of equals} is a weak fairness criterion that requires two agents with the same reported preferences to get the same random allocations. On the other hand, \emph{envy-freeness} is a well established strong fairness criterion that requires that no agent envies the allocation of any other agent. Formally, for any preference profile $\bsucc = (\succ_k)_{k \in N}$, a random assignment $P$ is envy-free if we have for all $i, j \in N$, $P_i \geq_i P_j$. A mechanism $\varphi$ is envy-free if it always produces envy-free assignments. It can be readily seen that envy-freeness implies equal treatment of equals.

\paragraph{Neutrality.} In this paper, we often restrict our attention to neutral mechanisms. Neutrality is a natural notion of symmetry that restricts the mechanism to treat all objects identically, i.e., the mechanism is invariant to any renaming of objects. Formally, a mechanism $\varphi$ is \emph{neutral} if for any preference profile $\bsucc \in \cR^n$ and permutation $\pi : O \rightarrow O$, if $P = \varphi(\bsucc)$ and $P^\pi = \varphi(\pi(\bsucc))$, then for all agents $i \in N$ and any $a \in O$, we have $P_{i,a} = P^\pi_{i, \pi(a)}$.

\paragraph{Anonymity.} Anonymity is another restriction we place on the mechanisms that we study in this paper. It is a common requirement imposed on a mechanism that requires
them to treat all agents identically, i.e., the mechanism is invariant to any renaming of agents. Let $\pi : N \rightarrow N$ be a permutation of agents. If $\bsucc=(\succ_1,\succ_2,\ldots,\succ_n)$, then $\pi(\bsucc)=(\succ_{\pi(1)},\succ_{\pi(2)},\ldots,\succ_{\pi(n)})$ is the resulting preference profile where agent $i$ now reports the preference that was reported by agent $\pi(i)$ in $\bsucc$, i.e., $\succ_{\pi(i)}$. A mechanism $\varphi$ is \emph{anonymous} if for any preference profile $\bsucc \in \cR^n$ and permutation $\pi : N \rightarrow N$, if $P = \varphi(\bsucc)$ and $P^\pi = \varphi(\pi(\bsucc))$, then for all agents $i \in N$ and any $a \in O$, we have $P_{\pi(i),a} = P^\pi_{i, a}$.

\section{An Impossibility Result}
\label{sec:hardness}

In this section, we present our first main result regarding the inherent trade-offs between the competing notions of efficiency, fairness, and strategy-proofness. Our goal is to demonstrate that any strategy-proof and envy-free mechanism must sacrifice efficiency in even the weakest sense. 

Before stating our theorem, we first introduce a very weak notion of efficiency, which we call \emph{contention-free efficiency}. Intuitively, this notion captures the desideratum that if there is no competition for objects among the different agents, then an efficient mechanism must allocate to each agent her most preferred object\footnote{This property is analogous to the unanimity axiom defined in the social choice literature \citep{muller1977equivalence}.}. We capture this intuition formally as follows.

\begin{definition}[Contention-Free  Profile] A preference profile $\bsucc$ is called a \emph{contention-free} preference profile if every agent prefers a distinct object as her top choice. Formally, a preference profile $\bsucc \in \calR^n$ is \emph{contention-free} if and only if for all $i, j \in N$, $\sigma(\succ_i,1)=\sigma(\succ_j,1) \Leftrightarrow i=j$.
\end{definition}

Let $\mathcal{C} \subset \mathcal{R}^n$ be the set of all contention-free preference profiles.

\begin{definition}[Contention-Free Efficiency] A mechanism $\varphi$ is defined to be \emph{contention-free efficient} if and only if for every contention-free preference profile it allocates to every agent, her most preferred object fully. That is, $\varphi$ is contention-free efficient $\Leftrightarrow \forall \bsucc \in \mathcal{C} \text{ and } \forall i \in N, P_{i,\sigma(\succ_i,1)}=1$, where $P = \varphi(\bsucc)$. 
\end{definition}

Such a notion clearly imposes a very minimal efficiency requirement on a mechanism. Indeed, it places no restrictions at all at any profile that is not contention-free. Further, for any contention-free profile, the unique Pareto-optimal deterministic assignment is one that allocates to every agent her most preferred object. So any ex-post efficient mechanism must also be contention-free efficient. 
Surprisingly, we show that strategy-proofness, envy-freeness and contention-free efficiency are incompatible. 

\begin{theorem}
\label{thm:main_hardness}
For any $n >= 3$, no strategy-proof and envy-free mechanism can be contention-free efficient.
\end{theorem}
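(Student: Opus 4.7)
The plan is to prove the theorem for $n=3$ by contradiction and then extend to arbitrary $n \geq 3$ by a padding argument. Assume $\varphi$ is strategy-proof, envy-free, and contention-free efficient. The strategy is to construct a short chain of preference profiles, each obtained from the previous by a single adjacent-swap deviation, and use Lemma \ref{lemma:sp_swap_upper_lower} (swap-monotonicity, upper-invariance, lower-invariance) together with envy-freeness to derive a system of equalities and inequalities on the induced random allocations that turns out to be infeasible. The CFE axiom plays the role of anchoring: it completely pins down $\varphi(\bsucc)$ at any profile in the chain that happens to be contention-free, and SP then propagates these anchors forward through the chain.

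The concrete construction I would attempt begins with a contention-free profile $\bsucc^A$ in which agents $1,2,3$ have top choices $a,b,c$ respectively; CFE forces $\varphi(\bsucc^A)$ to be the identity assignment. Next, build $\bsucc^B$ by letting agent~$1$ swap her top two objects so that both agents~$1$ and~$2$ now prefer $b$ first. Lower-invariance across this swap forces $P^B_{1,c}=P^A_{1,c}=0$; envy-freeness between agents~$1$ and~$2$ (who now share a top choice) forces $P^B_{1,b}=P^B_{2,b}$; and an analogous argument on the $c$-column together with envy-freeness of agent~$3$ forces $P^B_{3,c}=1$. Feeding these back into the row- and column-sum constraints determines $\varphi(\bsucc^B)$ completely, giving $P^B_1=P^B_2=(1/2,1/2,0)$ and $P^B_3=(0,0,1)$. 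From $\bsucc^B$ I would then perform one or two further adjacent swaps (say agent~$2$ swapping $a$ and $c$, and subsequently agent~$3$ promoting $a$ over $c$), at each step applying the three SP axioms to carry forward equalities on the preserved entries, and applying envy-freeness at the intermediate profile to couple the allocations of the two agents that share a top choice there.

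The main obstacle is choosing the chain so that it closes back to a contention-free profile via two different routes: along one route CFE at the terminal profile combined with SP's invariances gives one value for a specific entry, while along the other route the equal-treatment consequence of envy-freeness at an intermediate profile, combined with the first-order stochastic dominance comparisons that envy-freeness enforces on second-rank objects, forces a strictly different value for the same entry. Concretely, I expect the contradiction to come from an entry like $P_{3,a}$ at a profile in which agent~$3$'s preference has been rotated so that a contention-free anchor pins it to $0$, whereas envy-freeness at the intermediate step (combined with a column-sum relation induced by two agents sharing top choices) forces it to be strictly positive. Swap-monotonicity is used here to rule out the trivial case in which the allocation is simply copied from a neighboring profile.

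Finally, for $n > 3$, I would extend the counterexample by introducing $n-3$ additional agents $4,\ldots,n$ and $n-3$ new objects $d_1,\ldots,d_{n-3}$, where agent $i \geq 4$ ranks $d_{i-3}$ first and places the original objects $\{a,b,c\}$ below it, while agents $1,2,3$ rank $d_1,\ldots,d_{n-3}$ at the bottom of their preferences (in any fixed order). Restricting attention to preference profiles of this form, contention-free efficiency on the full profile forces each agent $i \geq 4$ to receive $d_{i-3}$ with probability $1$ whenever the top-three choices are distinct, so by counting the rows and columns of the doubly stochastic assignment matrix, the $3\times 3$ submatrix on agents $\{1,2,3\}$ and objects $\{a,b,c\}$ is itself doubly stochastic and inherits strategy-proofness, envy-freeness, and contention-free efficiency from $\varphi$. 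The $n=3$ impossibility then yields the contradiction in general.
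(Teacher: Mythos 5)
Your overall architecture matches the paper's: prove the $n=3$ case by chaining preference profiles, combining the swap-monotonicity/upper-invariance/lower-invariance decomposition of strategy-proofness with envy-freeness and contention-free anchors, and then reduce $n>3$ to $n=3$ by padding with extra agents whose top choices are the extra objects. But in both halves the step where the real work happens is missing. For $n=3$ you only pin down your profiles A and B (a mirror image of the paper's first two profiles), and the contradiction itself is merely conjectured: ``one or two further adjacent swaps,'' ``I expect the contradiction to come from an entry like $P_{3,a}$.'' That is not a proof. The paper needs a six-profile chain in which an undetermined parameter (the split of $b$ and $c$ between agents $2$ and $3$ at its profile C) is pinned down by strategy-proofness against a \emph{second} contention-free anchor, and the infeasibility finally appears as agent $1$'s row summing to more than $1$ (envy-freeness plus strategy-proofness force her to get $b$ with probability $1$ while strategy-proofness also forces her to retain probability $1/6$ of $c$). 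Your concretely suggested moves (agent $2$ swapping $a$ and $c$, then agent $3$ promoting $a$ over $c$) do not even return to a contention-free profile, since agents $1$ and $2$ still share top object $b$, so the ``second route via a CFE anchor'' you describe is not available, and no inconsistent value of any entry is actually derived.

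The $n>3$ reduction has a second genuine hole. Contention-free efficiency constrains the full $n$-agent profile only when \emph{all} tops are distinct, but the $n=3$ argument essentially uses profiles in which agents $1,2,3$ share top choices. At such profiles nothing you have said prevents agents $1,2,3$ from receiving positive shares of $d_1,\ldots,d_{n-3}$, or agents $i\geq 4$ from losing part of $d_{i-3}$; hence the $3\times 3$ submatrix on agents $\{1,2,3\}$ and objects $\{a,b,c\}$ need not be doubly stochastic and the reduced mechanism is not well defined. ``Counting rows and columns'' does not substitute for this. The paper closes exactly this gap with a separate lemma: starting from the contention-free anchor it lets agents $1,2,3$ deviate one at a time within the family, argues that strategy-proofness preserves each deviator's total probability on her top-three set $\{a,b,c\}$ (that set is unchanged by the deviation, so the mass must remain $1$), and then uses envy-freeness to force the other two agents' mass on $\{a,b,c\}$ to equal $1$ as well; iterating over the three agents gives the zero-mass property at \emph{every} profile in the family, which is what makes the restriction inherit strategy-proofness, envy-freeness, and contention-free efficiency. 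You need an argument of this kind for your padded family before the $n=3$ impossibility can be invoked.
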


We first prove the claim for $n=3$ in the following lemma. Theorem \ref{thm:main_hardness} follows from a reduction to this case.

\begin{lemma}
\label{theorem:impossibility_=3}
For $n = 3$, no strategy-proof and envy-free mechanism can be contention-free efficient.
\end{lemma}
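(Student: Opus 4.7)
The plan is a proof by contradiction: assume $\varphi$ is strategy-proof, envy-free, and contention-free efficient, and derive two inconsistent values for the single entry $P^G_{3,b}$ of the allocation at the target profile $\bsucc^G := (\langle a,b,c\rangle,\langle a,c,b\rangle,\langle a,c,b\rangle)$, in which all three agents share the same top choice. The workhorse is Lemma~\ref{lemma:sp_swap_upper_lower}: each adjacent-swap deviation propagates via upper-invariance (freezes the upper part of the changing agent's row), lower-invariance (freezes the lower part), and swap monotonicity (the row is either unchanged or the raised object strictly gains), while envy-freeness and the implied equal treatment of equals close the remaining freedom at each intermediate profile. Contention-free efficiency anchors every chain by fixing identity allocations at contention-free profiles.

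The first chain starts at the contention-free $\bsucc^A := (\langle a,b,c\rangle,\langle b,a,c\rangle,\langle c,a,b\rangle)$ with $P^A = I$. A single swap on agent 2 yields $\bsucc^D := (\langle a,b,c\rangle,\langle a,b,c\rangle,\langle c,a,b\rangle)$; ruling out the ``unchanged'' branch of swap monotonicity via an envy violation, equal treatment of equals plus envy-freeness force $P^D_1 = P^D_2 = (1/2,1/2,0)$ and $P^D_3 = (0,0,1)$. A further swap on agent 3 yields $\bsucc^F := (\langle a,b,c\rangle,\langle a,b,c\rangle,\langle a,c,b\rangle)$ with $P^F_1 = P^F_2 = (1/3,1/2,1/6)$ and $P^F_3 = (1/3,0,2/3)$. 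A final swap on agent 2 reaches $\bsucc^G$: upper-invariance fixes $P^G_{2,a}=1/3$, and equal treatment of equals between agents 2 and 3 together with envy-freeness reduces $\bsucc^G$ to a one-parameter family $P^G_1 = (1/3,\,1-2u,\,2u-1/3)$, $P^G_2 = P^G_3 = (1/3,\,u,\,2/3-u)$ with $u \in [1/6,1/3]$. To pin $u$, I run a parallel chain from the contention-free $\bsucc^B := (\langle b,a,c\rangle,\langle a,c,b\rangle,\langle c,a,b\rangle)$ through $\bsucc^X := (\langle b,a,c\rangle,\langle a,c,b\rangle,\langle a,c,b\rangle)$, where equal treatment of equals for agents 2,3 and envy-freeness uniquely force $P^X_1 = (0,1,0)$, $P^X_2 = P^X_3 = (1/2,0,1/2)$; the lower-invariance step $\bsucc^X \to \bsucc^G$ on agent 1 then gives $P^G_{1,c} = P^X_{1,c} = 0$, so $u = 1/6$ and $P^G_{3,b} = 1/6$.

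The second chain routes through the auxiliary $\bsucc^W := (\langle a,b,c\rangle,\langle a,c,b\rangle,\langle c,a,b\rangle)$, reached by two converging adjacent swaps: the step $\bsucc^D \to \bsucc^W$ on agent 2 gives $P^W_{2,a} = 1/2$ by upper-invariance, while $\bsucc^B \to \bsucc^W$ on agent 1 gives $P^W_{1,c} = 0$ by lower-invariance. Envy-freeness between agents 1 and 2 then forces $P^W_{1,a} = P^W_{2,a} = 1/2$, and the remaining envy-free conditions uniquely yield $P^W_3 = (0,1/4,3/4)$. The concluding swap $\bsucc^W \to \bsucc^G$ on agent 3 is another lower-invariance step, producing $P^G_{3,b} = P^W_{3,b} = 1/4$, which contradicts $P^G_{3,b} = 1/6$ obtained above.

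The main obstacle is the combinatorial bookkeeping: a single chain of adjacent swaps typically leaves a positive-dimensional family of envy-free allocations at each intermediate profile, so the contradiction materialises only once two chains starting from different contention-free anchors converge on $\bsucc^G$ from different sides. Each adjacent-swap step also requires ruling out the ``unchanged'' branch of swap monotonicity by a bespoke envy-freeness argument, and the choice of the target profile together with the two anchoring chains is genuinely ad hoc; the short chain above appears to be among the most compact witnesses of the impossibility.
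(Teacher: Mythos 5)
Your proposal is correct and is essentially the paper's own argument: your profiles $A$, $D$, $W$, $B$, $G$, $X$ coincide with the paper's Profiles A--F (your extra profile $F$ is redundant, since the one-parameter family at $G$ already follows from envy-freeness, equal treatment of equals and feasibility), the two chains use the same contention-free anchors and the same upper/lower-invariance steps with the same numbers $1/2$, $1/4$, $1/6$, and the contradiction is merely relocated from the paper's Profile F (where $P_{1,b}=1$ and $P_{1,c}=1/6$ violate feasibility) to Profile E, expressed as the incompatible values $P_{3,b}=1/6$ versus $1/4$. One small caution: at your profiles $D$ and $X$ the allocations are pinned down only because lower invariance carries the zero probability of the bottom object over from the contention-free anchor (swap monotonicity, equal treatment of equals and envy-freeness alone leave a one-parameter family there), so those steps should cite lower invariance explicitly rather than only the axioms you name in those sentences.
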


\begin{proof}
Let $N=\{1,2,3\}$ and $O=\{a,b,c\}$ denote the set of agents and objects respectively. Suppose for contradiction that there exists a mechanism $\varphi$ that is strategy-proof, envy-free and contention-free efficient. Recall that, we adopt the notation $P^{(x)} = \varphi(\bsucc^{(x)})$ for any profile $\bsucc^{(x)}$.  We will proceed by considering six preference profiles, which are shown in Table \ref{tab:table_profiles}. The corresponding random assignments given by the mechanism $\varphi$ are shown in Table \ref{tab:table_assignment}. 

\begin{table}[h]
    \begin{subtable}[h]{0.3\textwidth}
        \centering
        \begin{tabular}{llll}
        \\
        \toprule
1 & a & b & c \\
2 & b & a & c \\
3 & c & a & b \\
        \bottomrule
       \end{tabular}
       \caption{Profile A}
       \label{tab:profile_a}
    \end{subtable}
    \begin{subtable}[h]{0.3\textwidth}
        \centering
        \begin{tabular}{llll}
        \\
        \toprule
1 & a & b & c \\
2 & a & b & c \\
3 & c & a & b \\
        \bottomrule
       \end{tabular}
       \caption{Profile B}
       \label{tab:profile_b}
    \end{subtable}
    \begin{subtable}[h]{0.3\textwidth}
        \centering
        \begin{tabular}{llll}
        \\
        \toprule
1 & a & b & c \\
2 & a & c & b \\
3 & c & a & b \\
        \bottomrule
       \end{tabular}
       \caption{Profile C}
       \label{tab:profile_c}
    \end{subtable}\\
    \begin{subtable}[h]{0.3\textwidth}
        \centering
        \begin{tabular}{llll}
        \\
        \toprule
1 & b & a & c \\
2 & a & c & b \\
3 & c & a & b \\
        \bottomrule
       \end{tabular}
       \caption{Profile D}
       \label{tab:profile_d}
    \end{subtable}
    \begin{subtable}[h]{0.3\textwidth}
        \centering
        \begin{tabular}{llll}
        \\
        \toprule
1 & a & b & c \\
2 & a & c & b \\
3 & a & c & b \\
        \bottomrule
       \end{tabular}
       \caption{Profile E}
       \label{tab:profile_e}
    \end{subtable}
    \begin{subtable}[h]{0.3\textwidth}
        \centering
        \begin{tabular}{llll}
        \\
        \toprule
1 & b & a & c \\
2 & a & c & b \\
3 & a & c & b \\
        \bottomrule
       \end{tabular}
       \caption{Profile F}
       \label{tab:profile_f}
    \end{subtable}
     \caption{Six preference profiles to demonstrate incompatibility of strategy-proofness, envy-freeness, and contention-free efficiency.}
     \label{tab:table_profiles}
\end{table}

\begin{table}[h]
    \begin{subtable}[h]{0.3\textwidth}
        \centering
        \begin{tabular}{llll}
 & a & b & c \\
\bottomrule
1 & 1 & 0 & 0 \\
2 & 0 & 1 & 0 \\
3 & 0 & 0 & 1 \\
        \bottomrule
       \end{tabular}
       \caption{Profile A}
       \label{tab:alloc_a}
    \end{subtable}
    \begin{subtable}[h]{0.3\textwidth}
        \centering
        \begin{tabular}{llll}
 & a & b & c \\
\bottomrule
1 & $\nicefrac{1}{2}$ & $\nicefrac{1}{2}$ & 0 \\
2 & $\nicefrac{1}{2}$ & $\nicefrac{1}{2}$ & 0 \\
3 & 0 & 0 & 1 \\
        \bottomrule
       \end{tabular}
       \caption{Profile B}
       \label{tab:alloc_b}
    \end{subtable}
    \begin{subtable}[h]{0.3\textwidth}
        \centering
        \begin{tabular}{llll}
 & a & b & c \\
\bottomrule
1 & $\nicefrac{1}{2}$ & $1-2y$ & $2y-\nicefrac{1}{2}$ \\
2 & $\nicefrac{1}{2}$ & $y$ & $\nicefrac{1}{2}-y$ \\
3 & $0$ & $y$ & $1-y$ \\
        \bottomrule
       \end{tabular}
       \caption{Profile C}
       \label{tab:alloc_c}
    \end{subtable}\\
    \begin{subtable}[h]{0.3\textwidth}
        \centering
        \begin{tabular}{llll}
 & a & b & c \\
\bottomrule
1 & $0$ & $1$ & $0$ \\
2 & $1$ & $0$ & $0$ \\
3 & $0$ & $0$ & $1$ \\
        \bottomrule
       \end{tabular}
       \caption{Profile D}
       \label{tab:alloc_d}
    \end{subtable}
    \begin{subtable}[h]{0.3\textwidth}
        \centering
        \begin{tabular}{llll}
 & a & b & c \\
\bottomrule
1 & $\nicefrac{1}{3}$ & $\nicefrac{1}{2}$ & $\nicefrac{1}{6}$ \\
2 & $\nicefrac{1}{3}$ & $\nicefrac{1}{4}$ & $\nicefrac{5}{12}$ \\
3 & $\nicefrac{1}{3}$ & $\nicefrac{1}{4}$ & $\nicefrac{5}{12}$ \\
        \bottomrule
       \end{tabular}
       \caption{Profile E}
       \label{tab:alloc_e}
    \end{subtable}
    \begin{subtable}[h]{0.3\textwidth}
        \centering
        \begin{tabular}{llll}
 & a & b & c \\
\bottomrule
1 &  & $1$ & $\nicefrac{1}{6}$ \\
2 &  & $0$ &  \\
3 &  & $0$ &  \\
        \bottomrule
       \end{tabular}
       \caption{Profile F}
       \label{tab:alloc_f}
    \end{subtable}
     \caption{Assignments for the six preference profiles in Table \ref{tab:table_profiles}.}
     \label{tab:table_assignment}
\end{table}

\begin{description}

\item [Profile A ($\succ^{(A)}$) :] Consider, first, the preference profile in Table \ref{tab:profile_a}, where 
agent $1$ prefers $a \succ_1^{(A)} b \succ_1^{(A)} c$, agent 2 prefers $b \succ_2^{(A)} a \succ_2^{(A)} c$ while agent 3 prefers $c \succ_3^{(A)} a \succ_3^{(A)} b$.
Since the mechanism $\varphi$ is contention-free efficient, every agent must receive her top choice with probability $1$. That is, $P^{(A)}_{1,a}=P_{2,b}^{(A)}=P_{3,c}^{(A)}=1$.

\item[Profile B ($\succ^{(B)}$) :] Next, suppose agents $1$ and $3$ report their respective preferences as in profile $A$. But agent $2$ swaps her first two choices and reports $a \succ_2^{(B)} b \succ_2^{(B)} c$. First, since strategy-proofness implies lower-invariance, we have $P_{2,c}^{(B)}=P_{2,c}^{(A)}=0$. Second, envy-freeness implies equal treatment of equals. Since agents $1$ and $2$ have the same preferences, both must receive the same random allocation of objects. So, $P_{1,c}^{(B)} = P_{2,c}^{(B)} = 0$ and thus $P_{3,c}^{(B)} = 1$. This implies that  $P_{1,a}^{(B)} = P_{2,a}^{(B)} = \frac{1}{2}$ and $P_{1,b}^{(B)} = P_{2,b}^{(B)} = \frac{1}{2}$.

\item[Profile C ($\succ^{(C)}$) :] If instead agent $2$ reports $a \succ_2^{(C)} c \succ_2^{(C)} b$ while agents $1$ and $3$ report the same preferences as in profile B, upper invariance implies that she should receive an equal probability of being allocated her top object $a$. Therefore, $P_{2,a}^{(C)} = P_{2,a}^{(B)} = \frac{1}{2}$. For agent $1$ to not envy agent $2$, we must have $P_{1,a}^{(C)} = P_{2,a}^{(C)} = \frac{1}{2}$. Consequently,  $ P_{3,a}^{(C)} = 0$. To maintain envy-freeness between agents $2$ and $3$, we have $P_{2,a}^{(C)}+P_{2,c}^{(C)} = P_{3,a}^{(C)}+P_{3,c}^{(C)}$ and $P_{2,b}^{(C)}=P_{3,b}^{(C)}$. Let $P_{2,b}^{(C)}=P_{3,b}^{(C)} =y$. Note than $y \in [0, \frac{1}{2}]$. So we have, $P_{2,c}^{(C)} = \frac{1}{2}-y$ and $P_{3,c}^{(C)}=1-y$.

\item[Profile D ($\succ^{(D)}$) :] Consider the contention-free profile where agents $2$ and $3$ do not change their preferences from profile $C$ but agent $1$ reports $b \succ_1^{(D)} a \succ_1^{(D)} c$. By the assumption that the mechanism is contention-free efficient, each agent receives their top choice with probability $1$. For the mechanism to be strategy-proof, agent $1$ must receive the same probability of receiving object $c$ in profiles $C$ and $D$. That is, $P_{1,c}^{(C)}=P_{1,c}^{(D)} \implies 2y-\frac{1}{2}=0 \implies y=\frac{1}{4}$. 

Thus the resulting assignment for profile $C$ in mechanism $\varphi$ is shown in Table \ref{tab:table_profile_c}.

\begin{table}[h]
    \begin{subtable}[h]{0.45\textwidth}
        \centering
        \begin{tabular}{llll}
\\
\toprule
1 & a & b & c \\
2 & a & c & b \\
3 & c & a & b \\
        \bottomrule
       \end{tabular}
    \end{subtable}
    \begin{subtable}[h]{0.45\textwidth}
        \centering
        \begin{tabular}{llll}
 & a & b & c \\
\bottomrule
1 & $\nicefrac{1}{2}$ & $\nicefrac{1}{2}$ & $0$ \\
2 & $\nicefrac{1}{2}$ & $\nicefrac{1}{4}$ & $\nicefrac{1}{4}$ \\
3 & $0$ & $\nicefrac{1}{4}$ & $\nicefrac{3}{4}$ \\
        \bottomrule
       \end{tabular}
    \end{subtable}
     \caption{Profile C and its final assignment}
     \label{tab:table_profile_c}
\end{table}

\item[Profile E ($\succ^{(E)}$) :]  Suppose, next, that agent $3$ mimics agent $2$ in profile $C$ while agents $1$ and $2$ keep the same preferences as profile $C$. For all three agents to not envy one another, the mechanism must allocate object $a$ equally to all agents. That is, $P_{1,a}^{(E)}=P_{2,a}^{(E)}=P_{3,a}^{(E)} = \frac{1}{3}$. By strategy-proofness, $P_{3,b}^{(E)}=P_{3,b}^{(C)}=\frac{1}{4}$. This implies that $P_{3,c}^{(E)}=\frac{5}{12}$. Since agent $2$ and $3$ have the same preferences, they must get the same allocation. So $P_{2,b}^{(E)}=\frac{1}{4}$ and $P_{2,c}^{(E)}=\frac{5}{12}$. The allocations for the two agents $2$ and $3$ completely determine the allocation that agent $1$ must receive. 

\item[Profile F ($\succ^{(F)}$) :]  Finally, consider the profile in which agent $1$ reports $b \succ_1^{(F)} a \succ_1^{(F)} c$ while agents $2$ and $3$ keep the same preferences as in profile $E$. Since the mechanism is strategy-proof, for agent $3$ to report her preferences truthfully in profiles $F$ and $D$, $P_{3,b}^{(F)}=P_{3,b}^{(D)} = 0$. For agents $2$ and $3$ to not envy each other, $P_{2,b}^{(F)}=0$, which in turn implies $P_{1,b}^{(F)}=1$. However, since the mechanism is strategy-proof, agent $1$ must receive the same probability for object $c$ in profiles $E$ and $F$. That is, $P_{1,c}^{(F)}=P_{1,c}^{(E)}=\frac{1}{6}$, which contradicts feasibility of the allocation.\qedhere
\end{description}
\end{proof}

In order to prove Theorem \ref{thm:main_hardness} for $n>3$, we show that any strategy-proof, envy-free, and contention-free efficient mechanism can used to define an equivalent mechanism for the $n=3$ case. We defer its proof to Appendix \ref{app:hardness}.

When there are three agents and three objects,  \citet{bogomolnaia2001new} mention that strategy-proofness and envy-freeness are incompatible with ex-post efficiency. This was proven formally for $n \geq 3$ by \citet{nesterov2017fairness}. Since ex-post efficiency implies contention-free efficiency, their results arise as an immediate corollary of Theorem \ref{thm:main_hardness}.

\begin{corollary}[{\cite{bogomolnaia2001new,nesterov2017fairness}}] 
For $n\geq 3$, there does not exist a mechanism that is strategy-proof, envy-free and ex-post efficient.
\end{corollary}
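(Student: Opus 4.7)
The plan is to derive the corollary as an immediate consequence of Theorem \ref{thm:main_hardness}. The only substantive step is to verify that ex-post efficiency implies contention-free efficiency; once that implication is in hand, any mechanism that is strategy-proof, envy-free, and ex-post efficient would also be strategy-proof, envy-free, and contention-free efficient, contradicting Theorem \ref{thm:main_hardness}.

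To establish the implication, I would fix an arbitrary contention-free profile $\bsucc \in \calR^n$ and argue that the unique Pareto-optimal deterministic assignment at $\bsucc$ is the one, call it $D^\star$, that gives every agent her top choice. Existence of $D^\star$ as a well-defined bijection follows from the contention-free condition, which guarantees that the map $i \mapsto \sigma(\succ_i, 1)$ is itself a bijection from $N$ to $O$. For uniqueness, I would take any other deterministic assignment $D$, and note that some agent $i$ receives $D(i) \neq \sigma(\succ_i,1)$ in $D$, hence $\sigma(\succ_i,1) \succ_i D(i)$; meanwhile every agent weakly prefers her $D^\star$ allotment to her $D$ allotment since $\sigma(\succ_j,1) \succeq_j D(j)$ for all $j$. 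Thus $D^\star$ strictly Pareto-dominates every alternative deterministic assignment, and any ex-post efficient mechanism must place probability one on $D^\star$ at $\bsucc$. That is exactly the statement $P_{i,\sigma(\succ_i,1)} = 1$ for all $i$, so contention-free efficiency holds.

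With this implication in hand, the corollary follows in one line: if $\varphi$ were strategy-proof, envy-free, and ex-post efficient for some $n \geq 3$, then $\varphi$ would in particular be strategy-proof, envy-free, and contention-free efficient, contradicting Theorem \ref{thm:main_hardness}. There is no genuine obstacle here; the entire content of the corollary is absorbed into the stronger impossibility result, and the only care needed is the elementary Pareto-optimality argument above, which hinges precisely on the bijective structure that the contention-free hypothesis provides.
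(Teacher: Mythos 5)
Your proposal is correct and follows the paper's own route exactly: the paper states (just before Theorem \ref{thm:main_hardness}) that at any contention-free profile the unique Pareto-optimal deterministic assignment gives every agent her top choice, so ex-post efficiency implies contention-free efficiency, and the corollary then follows immediately from Theorem \ref{thm:main_hardness}. You simply spell out the short Pareto-dominance argument that the paper leaves informal, which is fine.
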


When restricted to three agents and objects, we can further strengthen Lemma \ref{theorem:impossibility_=3} to show that strategy-proofness and envy-freeness act as a barrier against all pure deterministic allocations. As discussed earlier, neutrality is a notion of symmetry that requires that a mechanism is invariant to any renaming of objects.  While Lemma \ref{theorem:impossibility_=3} shows that deterministic assignments are impossible in contention-free profiles, the following theorem demonstrates that strategy-proofness and envy-freeness along with neutrality block the pure deterministic allocation of any object to any agent at any profile. We include the proof in Appendix \ref{sec:strong_hardness}.

\begin{theorem} 
\label{thm:strong_hardness}
 Let $|N|=|O|=3$. If $\varphi$ is neutral, strategy-proof and envy-free, then no agent is allocated any object fully at any preference profile under $\varphi$. That is, $\forall \bsucc \in \mathcal{R}^3$, for any agent $i \in N$ and object $a \in O$, $P_{i,a} < 1$, where $P = \varphi(\bsucc)$. 
\end{theorem}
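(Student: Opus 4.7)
I would argue by contradiction. Suppose $\varphi$ is neutral, strategy-proof, and envy-free, but at some profile $\bsucc^{*}$ one has $P^{*}_{i,a} = 1$, where $P^{*} = \varphi(\bsucc^{*})$. The goal is to reduce to the configuration ruled out by Lemma \ref{theorem:impossibility_=3}.

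The first step is to use envy-freeness to rigidify both the profile and the allocation. Since $P^{*}_{i}$ is the point mass on $a$, envy-freeness forces every other agent $j$ to place all her probability strictly above $a$ in $\succ^{*}_{j}$. A short feasibility argument then forces $a$ to be $i$'s top choice in $\succ^{*}_{i}$: if $b \succ^{*}_{i} a$ for some $b$, then the constraint $P^{*}_{i} \geq_{i} P^{*}_{j}$ forces $P^{*}_{j,b} = 0$ for every $j \neq i$, which together with $P^{*}_{i,b} = 0$ contradicts feasibility of object $b$. Analogous bookkeeping forces $P^{*}$ to be fully deterministic, with each of the other two agents receiving a single specific object with probability $1$; the profile $\bsucc^{*}$ must then belong to one of a small number of structural classes, indexed by where $a$ is ranked in the remaining agents' preferences. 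In the simplest class the profile is contention-free and every agent receives her top object.

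The second step is to use neutrality to relabel objects so that $\bsucc^{*}$ and its deterministic allocation become a neighbor of profile A of Lemma \ref{theorem:impossibility_=3}, and then to use strategy-proofness---in the equivalent decomposition of Lemma \ref{lemma:sp_swap_upper_lower}---to chain adjacent-swap misreports carrying $\bsucc^{*}$ all the way to profile A. Upper- and lower-invariance preserve any entry of the allocation outside the two swapped positions, and swap-monotonicity prevents a probability already at $1$ from decreasing when the fully-received object is brought up in a preference. Together these propagate the deterministic structure of $P^{*}$ along the chain unchanged, yielding $P^{A}_{1,a} = P^{A}_{2,b} = P^{A}_{3,c} = 1$ at profile A.

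Finally, starting from this deterministic allocation at profile A, I would re-run the chain $A \to B \to C \to D \to E \to F$ of Lemma \ref{theorem:impossibility_=3} to reach the feasibility contradiction at profile F. The subtlety is that that proof additionally invokes contention-free efficiency at profile D, which is not directly supplied by my hypothesis. To bridge this, I would either use neutrality to transfer the determinism at A to an analogous determinism at D---by permuting objects to map the output at A to the required configuration at D and then running a further short SP chain on the permuted profiles---or else insert an auxiliary profile between C and E whose envy-freeness computation pins down the unknown parameter $y$ without any appeal to D at all. The main technical obstacle will be the case analysis in the second step: for each structural class of $\bsucc^{*}$ identified above (particularly the non-contention-free classes, and classes in which $a$ is third-ranked for some other agent), one must explicitly construct an adjacent-swap sequence that faithfully carries the deterministic entries to profile A, together with the bridging argument just described that closes the gap with Lemma \ref{theorem:impossibility_=3}'s assumption at profile D.
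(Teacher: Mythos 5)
Your proposal has a genuine gap at its very first step, and the later steps inherit it. Envy-freeness does \emph{not} force the assignment at $\bsucc^{*}$ to be fully deterministic. If agent $1$ receives $a$ with probability $1$, envy-freeness only forces the other two agents to put all their probability on objects they strictly prefer to $a$; it does not pin their shares of $b$ and $c$. Concretely, if agents $2$ and $3$ both report $b \succ c \succ a$, equal treatment of equals forces the split $P_{2,b}=P_{2,c}=P_{3,b}=P_{3,c}=\nicefrac{1}{2}$, and if they rank $b$ and $c$ oppositely, any split in which each gets at least $\nicefrac{1}{2}$ of her preferred object is envy-free. These non-deterministic, non-contention-free configurations (the others sharing a top object) are exactly where the theorem's difficulty lies, and they are not covered by your enumeration of ``deterministic structural classes.'' The paper handles precisely this case with a fresh eight-profile chain (its Case 1a) in which neutrality is used in an essential way --- e.g.\ relabeling $a$ and $c$ to equate allocations at two otherwise unrelated profiles --- rather than by any reduction to Lemma \ref{theorem:impossibility_=3}.

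The second step is also unjustified: strategy-proofness (swap monotonicity, upper/lower invariance) constrains only the \emph{deviating} agent's own allocation. When you chain adjacent-swap misreports by agents other than $i$, nothing in those axioms forces $P_{i,a}=1$ to persist, so ``propagating the deterministic structure of $P^{*}$ unchanged to profile A'' does not follow; pinning the intermediate allocations requires exactly the profile-by-profile interplay of envy-freeness, invariance, and neutrality that the paper carries out (with free parameters $x,y,z$ resolved only at the end). Finally, the bridge you propose for the missing contention-free-efficiency hypothesis at profile D is left as two vague alternatives; the paper does not need such a bridge in the hard case because it argues directly, and it invokes the Theorem \ref{thm:main_hardness} machinery only in the easy subcase where the other two agents have distinct top objects, where envy-freeness given $P_{i,a}=1$ immediately yields the top-object deterministic assignment. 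As written, your argument would need to be substantially reworked: drop the determinism claim, treat the shared-top-object profiles head on, and replace the propagation step with explicit allocation computations.
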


Theorem \ref{thm:main_hardness} demonstrates that in our search for envy-free and strategy-proof mechanisms we must necessarily deviate from (variants of) efficient mechanisms. For example, considering that the RSD mechanism is strategy-proof and weak envy-free, one might try to obtain a strategy-proof and envy-free mechanism by starting from RSD assignments and then successively attempting to remove envy from each assignment. For instance, a similar strategy has been exploited by \citet{harless2019efficient} to show that a variant of the RSD mechanism where randomizing only over the three adjacent agent positions (instead of all orderings) helps recover ordinal efficiency without sacrificing strategy-proofness. But such a strategy is unlikely to yield envy-free and strategy-proof mechanisms considering all natural variants of known efficient mechanisms also maintain contention-free efficiency.
\section{Pairwise Exchange Mechanisms}
\label{sec:separable}

While both strategy-proofness and envy-freeness are highly desirable properties for any random assignment mechanism, the only strategy-proof and envy-free mechanism known prior to this work is the \emph{Equal Division (ED)} mechanism that simply allocates all objects equally to all agents and totally disregards the agents' preferences. In some sense, the equal division mechanism can be considered as the ``most inefficient'' strategy-proof mechanism.
In view of the strong impossibility result presented in Theorem~\ref{thm:main_hardness}, it is natural to wonder whether there exist strategy-proof and envy-free mechanisms that are more efficient than equal division. In this section, we answer this question in the positive and design a large family of mechanisms that are strategy-proof and envy-free and dominate the equal division mechanism. 

\subsection{Transfer Function Representation}
\label{subsec:random_mechanisms_representation}

A natural approach to find more efficient strategy-proof and envy-free mechanisms is to start from the equal division assignment and then allow agents to exchange their allocations depending on their preferences.
In fact, we first argue that, in its most general sense, this view of designing mechanisms by starting from equal allocations and then transferring probabilistic shares of objects between pairs of agents is without loss of generality. In the following proposition, we show that any random assignment mechanism can be represented by a set of \emph{transfer functions} between pairs of agents.

\begin{proposition}\label{prop:matrix_representation}
For every random assignment mechanism $\varphi$, there exist functions $h^{\bsucc}: N \times N \times O \rightarrow [-\frac{1}{n},\frac{1}{n}]$ such that
\begin{align*}
    P^{\bsucc}_{i,a} = \frac{1}{n} + \sum_{j \in N \setminus \{i\}} h^{\bsucc}(i, j, a), \quad \forall \bsucc \in \calR^n, \forall i \in N, \forall a \in O
\end{align*}
where $P^{\bsucc} = \varphi(\bsucc)$.
\end{proposition}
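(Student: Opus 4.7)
The plan is to give an explicit construction of the transfer functions directly from the entries of $P^{\bsucc}$. Since the proposition only asks for existence, we may define the $h^{\bsucc}$ profile-by-profile. For each fixed profile $\bsucc$, the matrix $P^{\bsucc}=\varphi(\bsucc)$ is doubly stochastic, so each row sums to $1$ and each column sums to $1$. Writing $\Delta_{i,a} := P^{\bsucc}_{i,a} - \tfrac{1}{n}$, the column-stochasticity gives $\sum_{j\in N} \Delta_{j,a}=0$, i.e. $\sum_{j\neq i} P^{\bsucc}_{j,a} = 1 - P^{\bsucc}_{i,a}$. This identity is the key algebraic relation I will exploit.

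Motivated by the interpretation of $h^{\bsucc}(i,j,a)$ as a net transfer of object $a$ from $j$ to $i$, I would define
\begin{equation*}
h^{\bsucc}(i,j,a) \;:=\; \frac{P^{\bsucc}_{i,a} - P^{\bsucc}_{j,a}}{n}, \qquad i\neq j.
\end{equation*}
Two checks then complete the proof. First, the range: since $P^{\bsucc}_{i,a},P^{\bsucc}_{j,a}\in[0,1]$, the numerator lies in $[-1,1]$, so $h^{\bsucc}(i,j,a)\in[-\tfrac{1}{n},\tfrac{1}{n}]$ as required. Second, the summation identity: using column-stochasticity,
\begin{equation*}
\sum_{j\in N\setminus\{i\}} h^{\bsucc}(i,j,a) \;=\; \frac{1}{n}\Bigl((n-1)P^{\bsucc}_{i,a} - \sum_{j\neq i} P^{\bsucc}_{j,a}\Bigr) \;=\; \frac{1}{n}\bigl(n\,P^{\bsucc}_{i,a} - 1\bigr) \;=\; P^{\bsucc}_{i,a} - \frac{1}{n},
\end{equation*}
and adding $\tfrac{1}{n}$ recovers $P^{\bsucc}_{i,a}$, as claimed.

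There is really no substantive obstacle here; the content of the proposition is that a representation in terms of pairwise transfers is feasible for any mechanism, and the construction above provides one canonical witness. I would note as a remark that this particular choice is antisymmetric, $h^{\bsucc}(i,j,a)=-h^{\bsucc}(j,i,a)$, which matches the intuition of a genuine transfer from $j$ to $i$ and also explains (via antisymmetry) why the column sums of $P^{\bsucc}$ equal $1$ automatically. The choice of $h^{\bsucc}$ is of course not unique, and one could equally distribute the deviation $\Delta_{i,a}$ uniformly among the other $n-1$ agents; any such choice suffices for the proposition, and subsequent sections will restrict $h$ further to obtain meaningful mechanisms.
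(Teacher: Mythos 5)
Your construction is correct: the range check is immediate from $P^{\bsucc}_{i,a},P^{\bsucc}_{j,a}\in[0,1]$, and the summation identity follows from the fact that $P^{\bsucc}$ is doubly stochastic, so the column sum $\sum_{j\neq i}P^{\bsucc}_{j,a}=1-P^{\bsucc}_{i,a}$ does exactly the work you claim. However, your route is genuinely different from the paper's. The paper forms the deviation matrix $D=P^{\bsucc}-E$ (with $E$ the equal-division matrix), builds a bipartite directed graph whose edges record whether each agent holds an excess or a deficit of each object, interprets $|D|$ as a flow, and invokes the Flow Decomposition Theorem to write that flow as a sum of cycle flows; the transfers $h^{\bsucc}(i,j,a)$ are then read off from the cycles passing through the pair $(j,a),(a,i)$. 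What the paper's heavier machinery buys is a transfer scheme whose signs align with actual surpluses and deficits: an agent holding more than $\frac{1}{n}$ of object $a$ only receives that object, an agent holding less only sends it, and the representation connects directly to the trading-cycle intuition motivating the rest of the section. Your closed-form choice $h^{\bsucc}(i,j,a)=\frac{1}{n}\bigl(P^{\bsucc}_{i,a}-P^{\bsucc}_{j,a}\bigr)$ is far more economical, is antisymmetric by inspection, and fully suffices for the proposition as stated, at the cost of allowing ``pass-through'' transfers between two agents who both hold an excess (or both a deficit) of the same object; since the proposition only asserts existence within the stated bounds, this is a perfectly acceptable, indeed cleaner, witness.
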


We defer the proof of Proposition \ref{prop:matrix_representation} to Appendix \ref{prop:matrix_representation_proof}. Intuitively, for any profile $\bsucc \in \calR^n$, we interpret $h^{\bsucc}(i,j,a)$ to be the amount of object $a$ that is transferred from agent $j$ to agent $i$. The proposition then asserts that any random assignment mechanism can be viewed as first allocating all objects equally among the agents and then transferring these object shares from one agent to another depending on their reported preferences. 
While in general, these transfers may depend on the identity of the pair of agents as well as the preferences of all agents, it is natural to consider the case where the amount of fractional shares transferred depend solely on the preferences of the pair of agents involved in the exchange.
This motivates us to restrict our attention to mechanisms that admit such a class of transfer functions.

\begin{definition}[Pairwise Exchange Mechanism]
A mechanism $\varphi$ is said to be a \emph{\pairwise} mechanism if there exists a function $f : \cR \times \cR \times O \rightarrow [-\frac{1}{n},\frac{1}{n}]$ such that
\begin{align*}
    P^{\bsucc}_{ia} = \dfrac{1}{n} + \sum_{j \in N \setminus \{i\}} f(\succ_i, \succ_j, a), \quad \forall \bsucc = (\succ_k)_{k \in N} \in \calR^n \text{ and } \forall i,j \in N, a \in O
\end{align*}
where $P^{\bsucc} = \varphi(\bsucc)$. 
We denote the corresponding mechanism by $\varphi^f$. We also drop the superscript when $f$ is clear from context.
\end{definition}

The key restriction we enforce is that $f$ is only a function on the preferences of the two agents involved in the transfer and in particular is independent of the agent identities and the preferences of the other agents. To ensure feasibility of the random assignment, the transfer function $f$ must satisfy certain properties that we tabulate in the following proposition. 
These conditions capture some natural notions and restrictions that one would expect in a bilateral exchange. For instance, the no transfers property states that if a pair of agents have identical preferences over objects, then they would never engage in an exchange of fractional shares for any object. The second property, balanced transfers, ensures the total amount of probabilistic shares that an agent transfers to another agent must equal the amount that she gets back from that same agent. We often refer back to these properties to prove the primary characterization result in the paper. The proposition follows directly from the definition of feasible mechanisms and we include the proof in Appendix \ref{sec:function-properties} for completeness.

\begin{proposition}
\label{prop:necessary_properties}
Consider a \pairwise mechanism $\varphi^f$ and its associated transfer function
$f:\cR \times \cR \times O \rightarrow [-\frac{1}{n}, \frac{1}{n}]$. Then $f$ satisfies the following properties -
\begin{enumerate}
    \item \emph{No transfers:} $f(\succ,\succ,a)=0$ $\forall \succ \in \calR$ and $\forall a \in O$.
    \label{item:identical}
    \item \emph{Balanced transfers:} $\sum_{a \in O} f(\succ, \succ', a) = 0$ for any pair of preferences $\succ, \succ' \in \calR$. \label{item:zero}
    \item \emph{Anti-symmetry:} $f(\succ, \succ', a) = - f(\succ', \succ, a)$ for any pair of preferences $\succ, \succ' \in \calR$ and $\forall a \in O$. \label{item:anti-symmetry}
    \item \emph{Bounded range:} $f(\succ, \succ', a) \in [-\frac{1}{n(n-1)}, \frac{1}{n(n-1)}]$ for any pair of preferences $\succ, \succ' \in \calR$ and $\forall a \in O$. \label{item:sum}
\end{enumerate}
\end{proposition}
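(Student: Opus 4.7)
The plan is to derive all four properties by exploiting the feasibility of the random assignment matrix $P^{\bsucc}$ (doubly stochastic, entries in $[0,1]$) on two carefully chosen preference profiles. Throughout, I would fix preferences $\succ,\succ'\in\calR$ and an object $a\in O$, and write down the matrix entries of $P^{\bsucc}$ in closed form using the definition of a pairwise exchange mechanism.

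For property~\ref{item:identical}, I would take the profile $\bsucc$ in which every agent reports the same preference $\succ$. Then for each object $a$, every entry of column $a$ equals $\tfrac{1}{n}+(n-1)f(\succ,\succ,a)$. Since column sums of a doubly stochastic matrix equal $1$, we get $1+n(n-1)f(\succ,\succ,a)=1$, which forces $f(\succ,\succ,a)=0$. For the remaining three properties, I would use a single profile $\bsucc^{\star}$ in which agent $1$ reports $\succ$ and agents $2,\dots,n$ all report $\succ'$. This profile is attractive because only the values $f(\succ,\succ',\cdot)$, $f(\succ',\succ,\cdot)$, and $f(\succ',\succ',\cdot)$ appear in $P^{\bsucc^{\star}}$, and property~\ref{item:identical} already kills the last one.

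For property~\ref{item:zero}, I would apply the row-sum condition $\sum_a P^{\bsucc^{\star}}_{1,a}=1$. Since $P^{\bsucc^{\star}}_{1,a}=\tfrac{1}{n}+(n-1)f(\succ,\succ',a)$, summing over $a$ gives $1+(n-1)\sum_{a\in O}f(\succ,\succ',a)=1$, yielding $\sum_{a\in O}f(\succ,\succ',a)=0$. For property~\ref{item:anti-symmetry}, I would fix an object $a$ and apply the column-sum condition $\sum_i P^{\bsucc^{\star}}_{i,a}=1$: expanding row $1$ as above and each of the $n-1$ remaining rows as $\tfrac{1}{n}+f(\succ',\succ,a)+(n-2)f(\succ',\succ',a)$, summing, and invoking property~\ref{item:identical} to discard the $f(\succ',\succ',a)$ terms, I am left with $(n-1)\bigl[f(\succ,\succ',a)+f(\succ',\succ,a)\bigr]=0$, so $f(\succ,\succ',a)=-f(\succ',\succ,a)$.

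Finally, for property~\ref{item:sum}, non-negativity of $P^{\bsucc^{\star}}_{1,a}=\tfrac{1}{n}+(n-1)f(\succ,\succ',a)$ immediately gives $f(\succ,\succ',a)\ge -\tfrac{1}{n(n-1)}$. Applying the same inequality with the roles of $\succ$ and $\succ'$ swapped and using anti-symmetry supplies the upper bound $f(\succ,\succ',a)\le \tfrac{1}{n(n-1)}$. I do not anticipate a substantive obstacle here: the whole argument is essentially a sequence of linear-algebraic identities over the stochasticity constraints. The only mild subtlety is choosing a single profile that simultaneously isolates $f(\succ,\succ',\cdot)$ and $f(\succ',\succ,\cdot)$ while keeping all other terms pinned down by property~\ref{item:identical}; the $(1,n-1)$-split profile $\bsucc^{\star}$ accomplishes exactly this, and works uniformly for all $n\ge 3$.
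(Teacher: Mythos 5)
Your proposal is correct and follows essentially the same route as the paper's own proof: the all-identical profile for the no-transfers property, and the $(1,n-1)$-split profile with row sums, column sums, and nonnegativity for balanced transfers, anti-symmetry, and the bounded range (including deriving the upper bound via anti-symmetry). The only cosmetic difference is that the paper phrases each step as a proof by contradiction, while you argue directly.
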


As we show in the subsequent sections, \pairwise mechanisms form a rich class of mechanisms and admit many strategy-proof and envy-free mechanisms. In fact, they exhibit an equivalence between fairness and incentive-compatibility. This permits us to focus solely on satisfying either of the two axioms. The restricted form of transfer functions in these mechanisms also makes them amenable to analysis and allows us to exactly characterize the set of all strategy-proof and neutral mechanisms in this class. In addition to their appealing simplicity, these mechanisms capture a natural axiom of \emph{separability} in object allocations.
Consider the allocation received by an agent $i$ in profiles $\bsucc = (\succ_k)_{k \in N}$ and $\bsucc' = (\succ_i, \bsucc'_{-i})$. Informally, we call a mechanism $\varphi$ separable if the difference in these two allocations can be attributed to the difference arising from each agent $j \neq i$ changing her preference from $\succ_j$ to $\succ'_j$.

\begin{definition}
\label{def:separability}
A mechanism $\varphi$ is separable if for all $\bsucc = (\succ_k)_{k \in N} \in \calR^n$, every agent $i \in N$, each $\bsucc'_{-i} = (\succ'_k)_{k \in N \setminus \{i\}} \in \calR^{n-1}$ and every object $a \in O$, we have
\begin{align*}
    P^{(\succ_i, \bsucc'_{-i})}_{i,a} - P^{\bsucc}_{i,a} = \sum_{j \in N \setminus \{i\}} \Big[P^{\bsucc'_j}_{i,a} - P^{\bsucc}_{i,a} \Big]
\end{align*}
where $\bsucc'_j = (\succ'_j, \bsucc_{-j})$, and $P^{\tilde{\bsucc}}=\varphi(\tilde{\bsucc})$ for any $\tilde{\bsucc} \in \calR^n$.
\end{definition}

We now show that the set of anonymous and separable mechanisms is exactly equal to the class of \pairwise mechanisms. We remark that anonymity and separability are both necessary for this characterization since \pairwise mechanisms are anonymous by definition and separability does not imply anonymity\footnote{For example, it is easy to verify that all mechanisms for two agents and objects satisfy Definition \ref{def:separability} but all such mechanisms are not anonymous.}.

\begin{theorem}
A mechanism $\varphi$ is a pairwise exchange mechanism if and only if it is anonymous and separable.
\end{theorem}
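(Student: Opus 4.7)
The forward direction is a direct computation. If $\varphi = \varphi^f$, then $P^{\bsucc}_{i,a} = \frac{1}{n} + \sum_{j \neq i} f(\succ_i, \succ_j, a)$ depends on agent $i$ only through $\succ_i$ and the multiset $\{\succ_j : j \neq i\}$, which yields anonymity. Separability follows by substituting the formula into both sides of Definition~\ref{def:separability}: after cancellation the LHS and the RHS each collapse to $\sum_{j \neq i}[f(\succ_i, \succ'_j, a) - f(\succ_i, \succ_j, a)]$.

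For the converse, I will construct $f$ explicitly from two-preference profiles and then propagate the identity via separability. Write $\mathbf{1}_\tau$ for the constant profile in which every agent reports $\tau$; since $\mathbf{1}_\tau$ is invariant under every permutation of $N$, anonymity forces $P^{\mathbf{1}_\tau}_{i, a} = \frac{1}{n}$ for every $i$ and $a$. Define the candidate transfer function by
\[
f(\succ, \succ', a) := P^{\bsucc^{\succ, \succ'}}_{1, a} - \frac{1}{n},
\]
where $\bsucc^{\succ, \succ'}$ is the profile in which agent $1$ reports $\succ$, agent $2$ reports $\succ'$, and agents $3, \ldots, n$ all report $\succ$. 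The constant-profile observation gives $f(\succ, \succ, a) = 0$.

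To show $\varphi = \varphi^f$, fix any profile $\bsucc$, agent $i$, and object $a$, and apply separability with baseline $\mathbf{1}_{\succ_i}$ and new preferences $\bsucc'_{-i} := \bsucc_{-i}$. The LHS of separability is $P^{\bsucc}_{i,a} - \tfrac{1}{n}$, while each RHS summand takes the form $P^{\bsucc^{(j)}}_{i, a} - \tfrac{1}{n}$, where $\bsucc^{(j)}$ agrees with $\mathbf{1}_{\succ_i}$ except that agent $j$ now reports $\succ_j$. Picking any permutation $\pi$ with $\pi(i) = 1$ and $\pi(j) = 2$, one verifies $\pi(\bsucc^{\succ_i, \succ_j}) = \bsucc^{(j)}$; anonymity then yields $P^{\bsucc^{(j)}}_{i, a} = P^{\bsucc^{\succ_i, \succ_j}}_{1, a} = \tfrac{1}{n} + f(\succ_i, \succ_j, a)$. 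Summing over $j \neq i$ gives the pairwise identity $P^{\bsucc}_{i, a} = \tfrac{1}{n} + \sum_{j \neq i} f(\succ_i, \succ_j, a)$. The range $f \in [-\tfrac{1}{n}, \tfrac{1}{n}]$ follows a posteriori from feasibility of $\varphi$: evaluating the just-established formula on agent $2$ in the profile $(\succ, \succ', \succ, \ldots, \succ)$ yields $\tfrac{1}{n} + (n-1) f(\succ', \succ, a) \in [0,1]$, giving $|f| \le \tfrac{1}{n(n-1)}$.

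The main obstacle is the anonymity bookkeeping that identifies the single-agent deviation profile $\bsucc^{(j)}$ with the canonical two-preference profile $\bsucc^{\succ_i, \succ_j}$ via the right choice of permutation. Once that identification is in place, separability does the rest mechanically: it decomposes the change from the constant profile $\mathbf{1}_{\succ_i}$ to $\bsucc$ into $n-1$ one-agent-at-a-time changes, each of which anonymity equates to one canonical transfer term.
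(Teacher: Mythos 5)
Your proof is correct and takes essentially the same route as the paper's: the forward direction by direct substitution into the definition of separability, and the converse by defining $f(\succ,\succ',a)$ as the excess over $\tfrac{1}{n}$ received by one of the $(n-1)$ identical agents in the $(n-1)$-versus-one profile, then decomposing an arbitrary profile via separability from the constant profile $\mathbf{1}_{\succ_i}$, with anonymity identifying each single-deviation term with the canonical two-preference profile. One harmless quibble: your final feasibility computation by itself only yields $f(\succ',\succ,a)\in[-\tfrac{1}{n(n-1)},\tfrac{1}{n}]$, not $|f|\le\tfrac{1}{n(n-1)}$ (the tighter symmetric bound needs anti-symmetry or summing over the $n-1$ identical agents, as the paper does), but the range $[-\tfrac{1}{n},\tfrac{1}{n}]$ demanded by the definition of a pairwise exchange mechanism is already secured, so the theorem's proof stands.
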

\begin{proof}
We prove the easy direction first. Let $\varphi$ be a \pairwise mechanism associated with a transfer function $f$. Anonymity of the mechanism follows from definition since the transfer function $f$ is independent of the identity of the agents. To see separability, we observe that
\begin{align*}
     \sum_{j \in N \setminus \{i\}} \Big[P^{\bsucc'_j}_{i,a} - P^{\bsucc}_{i,a} \Big] &= \sum_{j \in N \setminus \{i\}} \Big[\frac{1}{n} + \sum_{k \in N \setminus \{i,j\}}f(\succ_i,\succ_k,a) + f(\succ_i,\succ'_j,a) - \frac{1}{n} - \sum_{k \in N \setminus \{i\}}f(\succ_i,\succ_k,a)\Big] \\
     &= \sum_{j \in N \setminus \{i\}} f(\succ_i,\succ'_j,a)  - \sum_{j \in N \setminus \{i\}} f(\succ_i,\succ_j,a) \\
     &=P^{\bsucc'_{-i}}_{i,a} - P^{\bsucc}_{i,a}
\end{align*}

To prove the other direction, 
suppose $\varphi$ is anonymous and separable. Let $P^{\bsucc}=\varphi(\bsucc)$ for any $\bsucc \in \calR^n$. For any $\succ,\succ' \in \mathcal{R}$, let $\overline{\bsucc}$ be any preference profile where $(n-1)$ agents have preference $\succ$, and the remaining agent has preference $\succ'$. Let $i$ be one of the $(n-1)$ agents and $j$ be the remaining agent. Define the function $f$ as $f(\succ,\succ',a) := P^{\overline{\bsucc}}_{i,a}-\frac{1}{n}$. Notice by anonymity, function $f$ is well-defined since $P^{\overline{\bsucc}}_{i,a}=P^{\overline{\bsucc}}_{k,a}$ for any agent $k$ other than agent $i$ among the $(n-1)$ agents. Additionally, we also have $0 \leq P^{\overline{\bsucc}}_{i,a} \leq \frac{1}{n-1} \implies -\frac{1}{n} \leq f(\succ,\succ',a) \leq \frac{1}{n}$.

Consider any profile $\bsucc=(\succ_j)_{j \in N}$, any agent $i$ and object $a$. 
Let $\bsucc^{\text{id}} = (\succ_i, \succ_i, \ldots, \succ_i)$ be the preference profile where all agents report the same preference as agent $i$ in $\bsucc$. Let $\bsucc_j = (\succ_j, \bsucc^{\text{id}}_{-j})$ be the resulting preference profile when only agent $j$ changes her report from $\succ_i$ to $\succ_j$. Due to anonymity, $P^{\bsucc^{\text{id}}}_{i,a} = \frac{1}{n}$. By separability we have,
\begin{align*}
    P^{\bsucc}_{i,a}-P^{\bsucc^{\text{id}}}_{i,a} &= \sum_{j \in N \setminus \{i\}} \Big[P^{\bsucc_j}_{i,a} - P^{\bsucc^{\text{id}}}_{i,a} \Big] = \sum_{j \in N \setminus \{i\}} \Big[P^{\bsucc_j}_{i,a} - \frac{1}{n} \Big] \\
    &= \sum_{j  \in N \setminus \{i\}}f(\succ_i,\succ_j,a)\\
    \intertext{which follows from our construction of the function $f$. Therefore we have our desired representation, }
    P^{\bsucc}_{i,a} &= \frac{1}{n} + \sum_{j  \in N \setminus \{i\}}f(\succ_i,\succ_j,a) 
\end{align*}
\end{proof}

At this stage, we would like to contrast our \pairwise mechanisms from the \emph{Top-Trading Cycles from Equal Division (TTCED)} mechanism proposed by \citet{kesten2009popular}. The TTCED mechanism first starts from allocating an equal amount of each object to every agent.  Then, any pair of agents who have received each other's top objects are allowed to exchange their shares sequentially. Once an agent can no longer trade her top object, her allocation of that object is finalized and the trading continues with the next object in the preference. While superficially similar in the sense that both mechanisms start with equal division and allow pairwise trading between agents, the TTCED and \pairwise mechanisms differ in one key aspect - a \pairwise mechanism requires all the trades to occur in parallel independently from each other, while the agent pairs trade sequentially in the TTCED mechanism. Indeed, as shown by \citet{kesten2009popular}, the TTCED mechanism is equivalent to the the PS mechanism of \citet{bogomolnaia2001new} and hence is not strategy-proof. In contrast, we demonstrate that \pairwise mechanisms admit a large set of strategy-proof and envy-free mechanisms.

\subsection{Fairness and Incentives}
\label{subsec:equiv}

In this section, we make the surprising observation that within the class of \pairwise mechanisms, incentive compatibility and fairness go hand in hand. In particular, we demonstrate the equivalence of strategy-proofness and envy-freeness within this class. To the best of our knowledge, this is the only family of random assignment mechanisms for finite markets where such an equivalence between the two normally competing notions of fairness and strategy-proofness has been established\footnote{
Such an equivalence has been observed in large economies (for instance, see \cite{jackson2007envy,che2010asymptotic,liu2016ordinal,noda2018large}).}. Recall that, in \pairwise mechanisms, an agent's total allocation for any profile $\bsucc$ depends on the sum of the transfers of each object received from every other agent. In this setting, strategy-proofness and envy-freeness both necessitate that the transfer function satisfies a natural, but informal, notion of monotonicity - that the transfer between two agents is well-aligned with their corresponding preferences. We formalize this notion through a series of lemmas and defer their proofs to Appendix \ref{app:proof_sp_equiv_ef}. 

The following lemma shows that in a strategy-proof mechanism the transfers of probabilistic shares received by an agent reporting $\succ$ from any other agent stochastically dominate (with respect to $\succ$) the transfers that she receives when reporting any other $\succ'$.
\begin{restatable}{lemma}{spmonotone}
\label{lemma:sp_monotone}
Suppose $\succ = \langle a_1,a_2,\ldots,a_n \rangle$. If $\varphi^f$ is strategy-proof, then for all $\succ', \succ'' \in \calR$ and $t \in [n]$, $\sum_{k=1}^t f(\succ,\succ'',a_k) \geq \sum_{k=1}^t f(\succ',\succ'',a_k)$.
\end{restatable}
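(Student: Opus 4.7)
The plan is to exploit strategy-proofness by constructing a convenient ``unanimous'' profile where every agent except one reports the same preference $\succ''$, so that the pairwise-exchange representation collapses into a single term $(n-1) f(\cdot,\succ'',a_k)$ for the distinguished agent. Under this reduction, first-order stochastic dominance on prefixes of $\succ$ translates directly into the desired inequality on prefix sums of $f$.

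Concretely, I would pick any agent $i \in N$ and consider the two preference profiles $\bsucc = (\succ_i = \succ,\, \bsucc_{-i})$ and $\bsucc^{*} = (\succ_i = \succ',\, \bsucc_{-i})$, where $\bsucc_{-i}$ assigns the preference $\succ''$ to every agent $j \neq i$. Using the defining equation of a pairwise exchange mechanism, the allocation to agent $i$ for object $a_k$ under $\bsucc$ is
\begin{align*}
P^{\bsucc}_{i,a_k} \;=\; \frac{1}{n} + \sum_{j \in N \setminus \{i\}} f(\succ,\succ'',a_k) \;=\; \frac{1}{n} + (n-1)\,f(\succ,\succ'',a_k),
\end{align*}
and analogously $P^{\bsucc^{*}}_{i,a_k} = \frac{1}{n} + (n-1)\,f(\succ',\succ'',a_k)$.

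Since $\varphi^f$ is strategy-proof, $P^{\bsucc}_i \geq_i P^{\bsucc^{*}}_i$ in the first-order stochastic dominance sense with respect to $\succ_i = \succ$. Because $\succ = \langle a_1,\ldots,a_n\rangle$, the upper contour set $U(\succ,a_t)$ is exactly $\{a_1,\ldots,a_t\}$, so the stochastic dominance condition at $a_t$ gives
\begin{align*}
\sum_{k=1}^{t} P^{\bsucc}_{i,a_k} \;\geq\; \sum_{k=1}^{t} P^{\bsucc^{*}}_{i,a_k}.
\end{align*}
Substituting the two expressions above, the $\tfrac{t}{n}$ terms cancel and the common factor $(n-1) > 0$ can be divided out, yielding the claimed inequality $\sum_{k=1}^{t} f(\succ,\succ'',a_k) \geq \sum_{k=1}^{t} f(\succ',\succ'',a_k)$.

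There is no real obstacle here; the only subtlety is making sure the constructed profile is legitimate and that stochastic dominance is applied at the right object $a_t$. The proof essentially amounts to observing that a single deviation in a unanimous background profile isolates the value of $f(\cdot,\succ'',\cdot)$ up to an additive constant and a positive multiplicative factor, so the strategy-proofness constraint on the allocation becomes a constraint on $f$ itself.
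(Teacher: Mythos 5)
Your proof is correct and uses essentially the same construction as the paper: a profile where one agent reports $\succ$ and all others report $\succ''$, with the deviation to $\succ'$, so that strategy-proofness on prefix sums of the allocation reduces (after cancelling $\tfrac{t}{n}$ and the factor $n-1$) to the stated inequality on $f$. The paper merely phrases the same argument as a contrapositive, which is an immaterial difference.
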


Lemma \ref{lemma:sp_swap_upper_lower} shows that strategy-proofness in general random assignment mechanisms is equivalent to the three properties of swap-monotonicity, upper invariance, and lower invariance. Similarly, the following lemma shows that strategy-proofness in \pairwise mechanisms is equivalent to analogous properties of the transfer function.

\begin{restatable}{lemma}{fupperlowerswap}
\label{lemma:f_upper_lower_swap}
A mechanism $\varphi^f$ is strategy-proof if and only if the transfer function $f$ satisfies the following two properties -
\begin{enumerate}
    \item \emph{Sender invariance}: If $\succ' = \langle a_1, a_2, \ldots, a_n\rangle$ and $\succ''$ is such that $\sigma(\succ'', k)=\sigma(\succ',k)=a_k$ $\forall k \leq t$, then $\forall \succ \in \calR$, $f(\succ,\succ',a_t)=f(\succ,\succ'',a_t)$. Similarly, if $\sigma(\succ'', k)=\sigma(\succ',k)=a_k$ $\forall k \geq t$, then $f(\succ,\succ',a_t)=f(\succ,\succ'',a_t)$.\label{item:sender-invariance}
    \item \emph{Swap-monotonicity:} For all $\succ, \succ' \in \calR$ and $\succ'' \in \Gamma(\succ')$ with $a \succ' b$ but $b \succ'' a$ for some $a,b \in O$, we have $f(\succ',\succ, a)\geq f(\succ'',\succ,a)$.
\end{enumerate}
\end{restatable}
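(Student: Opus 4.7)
The plan is to invoke Lemma \ref{lemma:sp_swap_upper_lower}, which reduces strategy-proofness of $\varphi^f$ to the conjunction of upper-invariance, lower-invariance, and swap-monotonicity of the mechanism. I will then translate each of these three mechanism-level properties pointwise onto $f$, using two tools: the linear expansion $P^{\bsucc}_{i,a} = \frac{1}{n} + \sum_{j \neq i} f(\succ_i, \succ_j, a)$ and the anti-symmetry property $f(\succ, \succ', a) = -f(\succ', \succ, a)$ from Proposition \ref{prop:necessary_properties}.

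For the forward direction, suppose $\varphi^f$ is strategy-proof. Fix an adjacent swap $\succ_i \to \succ'_i$ exchanging objects $a,b$ at positions $k,k+1$. Upper-invariance gives $\sum_{j \neq i}[f(\succ_i,\succ_j,o) - f(\succ'_i,\succ_j,o)] = 0$ for every $o$ strictly above $a$, uniformly over $\bsucc_{-i}$. The key trick is to choose all remaining agents to share a common preference $\succ$, which reduces the sum to $(n-1)$ identical terms and isolates the pointwise equality $f(\succ_i,\succ,o) = f(\succ'_i,\succ,o)$. Applying anti-symmetry recasts this as invariance in the second argument of $f$; chaining adjacent swaps confined to positions $\geq t+1$ (which suffice to connect any two preferences that agree on the top $t$ objects) yields the first half of sender invariance. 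The second half is symmetric: lower-invariance gives the analogous pointwise equality for $o$ strictly below $b$, and chaining swaps confined to positions $\leq t-1$ completes it. For swap-monotonicity of $f$, the mechanism's swap-monotonicity yields $P^{\bsucc'}_{i,b} \geq P^{\bsucc}_{i,b}$, and the same isolation trick gives $f(\succ'_i,\succ,b) \geq f(\succ_i,\succ,b)$. Combining the already established invariances with the balanced-transfers property of Proposition \ref{prop:necessary_properties} forces the difference $f(\succ_i,\succ,\cdot) - f(\succ'_i,\succ,\cdot)$ to be supported on $\{a,b\}$ and to sum to zero, which flips the inequality into $f(\succ_i,\succ,a) \geq f(\succ'_i,\succ,a)$, exactly the stated swap-monotonicity of $f$.

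For the reverse direction, suppose $f$ satisfies sender invariance and swap-monotonicity. Fix the same adjacent swap $\succ_i \to \succ'_i$. Since $\succ_i$ and $\succ'_i$ agree on the top $k-1$ elements, sender invariance combined with anti-symmetry yields $f(\succ_i,\succ_j,o) = f(\succ'_i,\succ_j,o)$ for every $\succ_j$ and every $o$ ranked at most $k-1$ in $\succ_i$; summing over $j \neq i$ delivers upper-invariance of the mechanism. Lower-invariance is symmetric, using agreement on positions $\geq k+2$ and the second half of sender invariance. Swap-monotonicity of $f$ together with anti-symmetry yields $P^{\bsucc'}_{i,b} \geq P^{\bsucc}_{i,b}$. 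Because the invariances just established force $P^{\bsucc}_i$ and $P^{\bsucc'}_i$ to coincide outside positions $k$ and $k+1$, and probabilities must sum to one across this pair, either $P^{\bsucc}_i = P^{\bsucc'}_i$ outright or the inequality at $b$ is strict, establishing the mechanism's swap-monotonicity. Applying Lemma \ref{lemma:sp_swap_upper_lower} closes the argument.

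The main technical subtlety I anticipate is the isolation step that converts profile-wide summed identities into pointwise conditions on $f$: this requires evaluating the mechanism on carefully chosen profiles of the remaining agents, which is legitimate because the mechanism is defined on every profile. A secondary but important point is that recovering the dichotomy ``either no change or strict increase'' in the mechanism's swap-monotonicity from only a weak inequality on $f$ depends on first establishing upper- and lower-invariance, which pin down the coordinates where $P^{\bsucc}_i$ and $P^{\bsucc'}_i$ can differ; the order of verification within the reverse direction therefore matters.
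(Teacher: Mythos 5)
Your proof is correct, but it takes a partly different route from the paper's. For the necessity direction, the paper does not pass through Lemma \ref{lemma:sp_swap_upper_lower} at all: it applies strategy-proofness directly (in both directions, exploiting that $\succ'$ and $\succ''$ share the same top-$t$ upper contour sets) at the same ``all other agents report $\succ$'' profiles you use, obtains equality of the cumulative allocations at levels $t-1$ and $t$, and differences these to get $f(\succ',\succ,a_t)=f(\succ'',\succ,a_t)$ in one step -- so no chaining of adjacent transpositions is needed; and it establishes swap-monotonicity of $f$ by contraposition, exhibiting a profile where the misreport strictly raises the probability of $a_t$. You instead filter everything through the swap-monotonicity/upper-invariance/lower-invariance decomposition, recover sender invariance by a generation argument over adjacent swaps confined to positions beyond (resp.\ before) $t$, and derive swap-monotonicity of $f$ directly via the support-on-$\{a,b\}$ plus balanced-transfers argument. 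Your version is longer on necessity but buys a fully explicit sufficiency direction (which the paper compresses to ``follows from Lemma \ref{lemma:sp_swap_upper_lower} and definitions''), including the ``either $P'_i=P_i$ or strict increase at $b$'' dichotomy. One small attribution slip: the inequality $P^{\bsucc'}_{i,b}\geq P^{\bsucc}_{i,b}$ does not follow from swap-monotonicity of $f$ and anti-symmetry alone (those give the inequality at $a$); you need the coincidence of the two allocations outside $\{a,b\}$ together with the row sum, which you do supply in the following sentence -- so this is a matter of ordering, not a gap.
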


Finally, the following two lemmas show that the transfer function for any envy-free mechanism must be ``selfish''. In other words, for any agent, the total amount of objects exchanged with any other agent is compatible with the agent's preferences. Lemma \ref{lemma:ef_zero_transfers} shows that if two agents have identical preferences for their top $t$ objects, then they do not exchange any of those $t$ objects. Similarly, Lemma \ref{lemma:ef_transfer_atleast_zero} shows that the total transfers received by any agent from any other agent stochastically dominates zero transfers.

\begin{restatable}{lemma}{efzerotransfers}
\label{lemma:ef_zero_transfers}
Let $\succ=\langle a_1,a_2,\ldots,a_n\rangle$ be an arbitrary preference relation, and $\succ'$ be such that $\sigma(\succ', k)=\sigma(\succ,k)=a_k$ $\forall k \leq t$. If $\varphi^f$ is envy-free, then $f(\succ,\succ',a_t) = 0$. Similarly, if $\succ'$ is such that $\sigma(\succ', k)=\sigma(\succ,k)=a_k$ $\forall k \geq t$, then $f(\succ,\succ',a_t) = 0$.
\end{restatable}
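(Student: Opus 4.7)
The plan is to reduce envy-freeness to a statement about $f(\succ,\succ',\cdot)$ by evaluating the mechanism on a preference profile where this single quantity is isolated. Specifically, let $\bsucc$ be the profile in which agent~$1$ reports $\succ$ and agents $2, \ldots, n$ all report $\succ'$. Using the pairwise exchange representation together with the \emph{no transfers} and \emph{anti-symmetry} properties from Proposition~\ref{prop:necessary_properties}, a direct computation gives, for every $a \in O$ and every $j \geq 2$,
\begin{align*}
    P_{1,a} = \frac{1}{n} + (n-1)\,f(\succ,\succ',a), \qquad P_{j,a} = \frac{1}{n} - f(\succ,\succ',a),
\end{align*}
so that $P_{1,a} - P_{j,a} = n\,f(\succ,\succ',a)$.

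For the first claim, suppose $\sigma(\succ',k) = \sigma(\succ,k) = a_k$ for every $k \leq t$. Envy-freeness between agent~$1$ and agent~$2$ in both directions then yields, via first-order stochastic dominance applied to the common length-$k$ upper contour set $\{a_1, \ldots, a_k\}$ (which is the top-$k$ set for both $\succ$ and $\succ'$ precisely because $k \leq t$),
\begin{align*}
    \sum_{s=1}^{k} f(\succ,\succ',a_s) \geq 0 \quad \text{and} \quad \sum_{s=1}^{k} f(\succ,\succ',a_s) \leq 0
\end{align*}
for every $k \leq t$. Hence each of these partial sums is zero, and taking successive differences gives $f(\succ,\succ',a_k) = 0$ for all $k \leq t$, in particular for $k = t$.

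For the second claim, when $\succ$ and $\succ'$ coincide on their bottom $n-t+1$ objects, I use the dual form of stochastic dominance: since each row of $P$ sums to $1$, $P_1 \geq_{\succ} P_2$ is equivalent to the lower contour sums of $P_1$ being at most those of $P_2$. Applying this at the common lower contour set $\{a_k, \ldots, a_n\}$ for $k \geq t$, and combining with the symmetric inequality from agent~$2$'s side, yields $\sum_{s=k}^{n} f(\succ,\succ',a_s) = 0$ for every $k \geq t$, and differencing again delivers the conclusion. The only non-trivial step is choosing the $(n-1)$-versus-$1$ profile so that the transfers between the lone agent and each of the others collapse to the single quantity $f(\succ,\succ',\cdot)$; once this is set up, no real obstacle remains.
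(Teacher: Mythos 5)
Your proof is correct and follows essentially the same route as the paper's: the same one-versus-$(n-1)$ profile, the same use of the no-transfers and anti-symmetry properties to collapse the allocations to $\tfrac{1}{n}\pm$ multiples of $f(\succ,\succ',\cdot)$, and the same telescoping of partial sums over the common upper (resp.\ lower) contour sets. The only difference is presentational—you spell out the two directions of envy-freeness and the dual lower-contour argument that the paper leaves implicit as ``same allocation'' and ``symmetric.''
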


\begin{restatable}{lemma}{eftransferatleastzero}
\label{lemma:ef_transfer_atleast_zero}
Let $\succ=\langle a_1, a_2, \ldots, a_n\rangle$ be an arbitrary preference relation. If $\varphi^f$ is envy-free, then for all $\succ' \in \calR$, for all $t \in [n]$, $\sum_{k=1}^t f(\succ,\succ',a_t) \geq 0$.
\end{restatable}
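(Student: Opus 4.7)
The plan is to exhibit a single preference profile in which the envy-freeness constraint between two specific agents isolates exactly the pairwise transfer function $f(\succ,\succ',\cdot)$ that we want to bound. Concretely, let $\succ = \langle a_1,\ldots,a_n\rangle$ and fix any $\succ' \in \calR$; I would consider the profile $\bsucc$ in which agent $1$ reports $\succ$ and every other agent reports $\succ'$. By the pairwise exchange representation together with the no-transfers and anti-symmetry properties from Proposition \ref{prop:necessary_properties}, the allocations of agents $1$ and $2$ at $\bsucc$ simplify cleanly. Agent $1$ receives
\begin{equation*}
P_{1,a} \;=\; \tfrac{1}{n} + (n-1)\,f(\succ,\succ',a),
\end{equation*}
and agent $2$ receives
\begin{equation*}
P_{2,a} \;=\; \tfrac{1}{n} + f(\succ',\succ,a) + (n-2)\,f(\succ',\succ',a) \;=\; \tfrac{1}{n} - f(\succ,\succ',a),
\end{equation*}
using $f(\succ',\succ',a)=0$ and $f(\succ',\succ,a)=-f(\succ,\succ',a)$.

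Next I would apply envy-freeness of $\varphi^f$: agent $1$ (who has preference $\succ$) does not envy agent $2$, meaning $P_1 \geq_1 P_2$. In terms of first-order stochastic dominance this is precisely $\sum_{k=1}^{t} P_{1,a_k} \geq \sum_{k=1}^{t} P_{2,a_k}$ for every $t \in [n]$. Substituting the two formulas above and cancelling the common $t/n$ terms gives
\begin{equation*}
(n-1)\sum_{k=1}^{t} f(\succ,\succ',a_k) \;\geq\; -\sum_{k=1}^{t} f(\succ,\succ',a_k),
\end{equation*}
which simplifies to $n\sum_{k=1}^{t} f(\succ,\succ',a_k) \geq 0$, and hence $\sum_{k=1}^{t} f(\succ,\succ',a_k) \geq 0$, as required.

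The only real choice in the argument is picking the right profile; the rest is direct algebra from the structural properties of $f$. The main reason this works is that having $n-1$ copies of $\succ'$ makes agent $1$'s allocation a pure multiple of $f(\succ,\succ',\cdot)$, while anti-symmetry and no-transfers collapse every other contribution in agent $2$'s allocation to the single term $-f(\succ,\succ',\cdot)$; envy-freeness between these two agents therefore yields the cumulative-transfer bound without any further manipulation. No obstacle is anticipated beyond identifying this profile, which is the analogue in the pairwise exchange setting of the well-known trick of using clones to isolate a two-agent interaction.
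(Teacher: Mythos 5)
Your proposal is correct and is essentially identical to the paper's proof: both use the profile where one agent reports $\succ$ and all $n-1$ others report $\succ'$, apply envy-freeness of the $\succ$-agent toward a $\succ'$-agent, and collapse the latter's allocation via the no-transfers and anti-symmetry properties to obtain $n\sum_{k\le t} f(\succ,\succ',a_k)\ge 0$.
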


The four lemmas above hint towards strategy-proofness and envy-freeness being dependent on slightly different notions of ``monotonicity'' in the transfer functions. Surprisingly, we can in fact show that the two concepts are much more closely tied together. The following theorem demonstrates that a \pairwise mechanism is strategy-proof if and only if it is envy-free. 

\begin{theorem}
\label{thm:sp_equi_envyfree}
In the class of pairwise exchange mechanisms, strategy-proofness is equivalent to envy-freeness.
\end{theorem}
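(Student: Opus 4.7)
My plan is to prove the equivalence in two directions, with the SP $\Rightarrow$ EF direction being straightforward and the EF $\Rightarrow$ SP direction being the main technical challenge.

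For SP $\Rightarrow$ EF, I would invoke Lemma \ref{lemma:sp_monotone} directly. Fix any profile $\bsucc$ and any pair of agents $i, j$. Using the pairwise exchange representation together with the anti-symmetry property of $f$, I can write
\begin{align*}
P^{\bsucc}_{i,o} - P^{\bsucc}_{j,o} = 2 f(\succ_i, \succ_j, o) + \sum_{k \notin \{i,j\}} \bigl[f(\succ_i, \succ_k, o) - f(\succ_j, \succ_k, o)\bigr].
\end{align*}
Summing over the top-$t$ objects of $\succ_i$, Lemma \ref{lemma:sp_monotone} applied with third argument $\succ_k$ shows that every bracketed summand is nonnegative, while the same lemma with third argument $\succ_j$ combined with the no-transfers property yields $\sum_{\ell=1}^t f(\succ_i, \succ_j, a_\ell) \geq 0$. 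Adding everything gives $P^{\bsucc}_i \geq_{\succ_i} P^{\bsucc}_j$, which is envy-freeness.

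For EF $\Rightarrow$ SP, I would use the characterization in Lemma \ref{lemma:f_upper_lower_swap} and deduce both sender invariance and swap monotonicity of $f$ from the envy-free Lemmas \ref{lemma:ef_zero_transfers} and \ref{lemma:ef_transfer_atleast_zero}. For sender invariance, suppose $\succ'$ and $\succ''$ agree on the top-$t$ positions; after reducing to the case where $\succ''$ differs from $\succ'$ by a single adjacent swap at positions below $t$, I would construct a profile in which one agent holds the receiver preference $\succ$ and the remaining agents split between $\succ'$ and $\succ''$. Applying Lemma \ref{lemma:ef_zero_transfers} (which zeroes out $f(\succ', \succ'', \cdot)$ on the commonly ordered prefix) allows me to cancel the cross-terms in the EF inequalities $P_1 \geq_{\succ'} P_2$ and $P_2 \geq_{\succ''} P_1$, so that comparing these two inequalities at the relevant top-$t$ sum forces $f(\succ, \succ', a_t) = f(\succ, \succ'', a_t)$. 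For swap monotonicity, given $\succ'' \in \Gamma(\succ')$ with the swap occurring at positions $s, s+1$, I would consider a profile containing agents with $\succ'$, $\succ''$, and the common third preference $\succ$; the EF conditions summed over the top-$(s-1)$ of $\succ'$ and $\succ''$ (where Lemma \ref{lemma:ef_zero_transfers} forces the inner transfers to vanish) together with Lemma \ref{lemma:ef_transfer_atleast_zero} produce the inequality $f(\succ', \succ, a) \geq f(\succ'', \succ, a)$.

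The main obstacle is the asymmetry in what each lemma gives us: Lemma \ref{lemma:ef_transfer_atleast_zero} provides only aggregate top-$t$ nonnegativity, and Lemma \ref{lemma:ef_zero_transfers} provides only boundary cancellations, while Lemma \ref{lemma:f_upper_lower_swap} demands pointwise equalities and inequalities on individual values of $f$. Extracting these pointwise statements requires exploiting the full strength of envy-freeness, namely that the EF inequality must simultaneously hold in every profile, for every pair of agents, and at every cumulative prefix of each agent's preference. Choosing profiles in which many agents share one of the two preferences under comparison, and combining the forward and reverse envy inequalities ($P_i \geq_{\succ_i} P_j$ and $P_j \geq_{\succ_j} P_i$), appears to be the key step that converts the aggregate EF conditions into the pointwise properties required by Lemma \ref{lemma:f_upper_lower_swap}.
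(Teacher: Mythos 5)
Your SP~$\Rightarrow$~EF direction is correct and uses the same ingredients as the paper (Lemma \ref{lemma:sp_monotone} together with the no-transfers and anti-symmetry properties); the paper packages it as a contradiction obtained by letting the envious agent misreport the envied agent's preference, while you do a direct decomposition, but the substance is the same. Likewise, the sender-invariance half of your EF~$\Rightarrow$~SP direction is essentially the paper's argument (a profile mixing $\succ'$, $\succ''$ and $\succ$, with Lemma \ref{lemma:ef_zero_transfers} killing the cross terms). The genuine gap is in your swap-monotonicity step.

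Write $\succ'=\langle a_1,\ldots,a_n\rangle$, let $\succ''$ swap $a_s,a_{s+1}$, fix $\succ$, and set $\Delta:=f(\succ',\succ,a_s)-f(\succ'',\succ,a_s)$ (you must show $\Delta\ge 0$) and $c:=f(\succ',\succ'',a_s)$. In any profile whose agents report only $\succ,\succ',\succ''$ (say $\alpha$ report $\succ$, $\beta$ report $\succ'$, $\gamma$ report $\succ''$), the envy comparison between a $\succ'$-agent and a $\succ''$-agent at the top-$s$ prefix reduces, after sender invariance and Lemma \ref{lemma:ef_zero_transfers} cancel the first $s-1$ terms, to $\alpha\Delta+(\beta+\gamma)c\ge 0$; the reverse comparison at the $\succ''$-agent's top-$s$ prefix gives the \emph{same} inequality by balanced transfers, the comparisons at the top-$(s-1)$ prefix that you invoke are vacuous after these cancellations, and Lemma \ref{lemma:ef_transfer_atleast_zero} only adds cumulative nonnegativity statements such as $c\ge 0$. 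Since $c$ is an unknown and in general strictly positive quantity (in the mechanisms of Theorem \ref{thm:main-charac} it equals $v[s]-v[s+1]$), no choice of multiplicities lets you deduce $\Delta\ge 0$ from these inequalities: $\Delta<0$ is consistent with everything you have extracted from such profiles. This is precisely why the paper's proof leaves the triple $\{\succ,\succ',\succ''\}$: it constructs auxiliary preferences $\tilde{\succ}$ and $\hat{\succ}$ that rank the swapped-down object $a_{s+1}$ last and agree with $\succ$ (respectively with $\succ'$) on an initial segment, transports the putative violation through them using sender invariance and balancedness, and only then does Lemma \ref{lemma:ef_zero_transfers} force the analogue of $c$ to vanish so that Lemma \ref{lemma:ef_transfer_atleast_zero} yields the contradiction. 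Your plan needs some such auxiliary-preference construction (or another way to eliminate the cross-transfer $f(\succ',\succ'',a_s)$); without it the EF~$\Rightarrow$~SP direction is not established.
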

\begin{proof}
We first show that if a pairwise exchange mechanism $\varphi^f$ is strategy-proof, then it is envy-free. 
Suppose, for contradiction, that $\varphi^f$ is strategy-proof but not envy-free. Therefore, there exists some $\bsucc \in \calR$, $i \neq i' \in N$, and some $t \in [n]$ such that the total allocation that agent $i$ gets for her top $t$ objects is strictly smaller than what agent $i'$ receives. Without loss of generality, let $\succ_i = \langle a_1,a_2,\ldots,a_n \rangle$. Since agent $i$ envies $i'$ we have,
\begin{align}
    \sum_{k=1}^tP^{\bsucc}_{i',a_k} &> \sum_{k=1}^tP^{\bsucc}_{i,a_k}
    \intertext{However, since $\varphi^f$ is strategy-proof, at preference profile $\bsucc$, agent $i$ does not gain if she chooses to deviate and report her preference as $\succ'_i=\succ_{i'}$. Let $\bsucc'=(\succ'_i=\succ_{i'},\bsucc_{-i})$ be the resulting preference profile, when agent $i$ mis-reports her preference. Then we have $\sum_{k=1}^tP^{\bsucc}_{i,a_k} \geq \sum_{k=1}^tP^{\bsucc'}_{i,a_k}$. Substituting back into the above inequality, we have}
    \sum_{k=1}^tP^{\bsucc}_{i',a_k} &> \sum_{k=1}^tP^{\bsucc'}_{i,a_k}
    \intertext{Substituting the allocations by their transfer function representation, we get}
    \sum_{k=1}^t f(\succ_{i'},\succ_{i},a_k) + \sum_{j \in N \setminus \{i,i'\}} \sum_{k=1}^t f(\succ_{i'},\succ_j,a_k)
    &> 
    \sum_{k=1}^t f(\succ'_i,\succ_{i'},a_k) + \sum_{j \in N \setminus \{i,i'\}} \sum_{k=1}^t f(\succ'_i,\succ_j,a_k)\\
    &= \sum_{k=1}^t f(\succ_{i'},\succ_{i'},a_k) + \sum_{j \in N \setminus \{i,i'\}} \sum_{k=1}^t f(\succ_{i'},\succ_j,a_k)
    \intertext{By the ``No transfers'' property in Proposition \ref{prop:necessary_properties}, we have that $f(\succ_{i'},\succ_{i'},a_k) = 0, \forall k$, and hence we obtain}
    \sum_{k=1}^t f(\succ_{i'},\succ_{i},a_k) > 0 &= \sum_{k=1}^t f(\succ_{i}, \succ_{i}, a_k)
\end{align}
But this contradicts Lemma \ref{lemma:sp_monotone} when we set $\succ = \succ_i, \succ'=\succ_{i'},  \text{ and } \succ'' = \succ_i$, and hence if $\varphi^f$ is strategy-proof then it must also be envy-free.

We now prove that envy-freeness implies sender invariance and swap monotonicity, and consequently implies strategy-proofness. We prove by contradiction. Suppose that $\varphi^f$ is envy-free, but the transfer function $f$ is not sender invariant. Then, there exists $\succ',\succ'' \in \calR$ such that either $\sigma(\succ', k)=\sigma(\succ'',k)$, $\forall k \leq t$ or $\sigma(\succ', k)=\sigma(\succ'',k)$, $\forall k \geq t$, and a preference $\succ \in \calR$ such that $f(\succ,\succ',\sigma(\succ', t)) \neq f(\succ,\succ'',\sigma(\succ'', t))$. Suppose $\succ'$ is such that $\sigma(\succ', k)=\sigma(\succ'',k)$, $\forall k \leq t$ (the argument for the other case is symmetric). Let $\succ'=\langle a_1, a_2,\ldots, a_n \rangle$. Let $t$ be the smallest such index at which $f(\succ,\succ',a_t) \neq f(\succ,\succ'',a_t)$. Therefore, we have for all $k<t, f(\succ,\succ',a_k) = f(\succ,\succ'',a_k)$. Now, consider a preference profile $\bsucc$, where $\succ_i=\succ'$ for some agent $i$, $\succ_{i'}=\succ''$ for another agent $i' \in N \setminus \{i\}$, and $\succ_j=\succ$ for all other agents $j \in N \setminus \{i,i'\}$. The total allocation that agent $i$ gets for her top $t$ objects is given by,
\begin{align*}
     \sum_{k=1}^t P^{\bsucc}_{i,a_k} &= \frac{t}{n} + \sum_{k=1}^t f(\succ',\succ'',a_k) + (n-2) \sum_{k=1}^t f(\succ',\succ,a_k) \\
     &= \frac{t}{n} + (n-2) \sum_{k=1}^t f(\succ',\succ,a_k) \\
     &\neq \frac{t}{n} + (n-2) \sum_{k=1}^t f(\succ'',\succ,a_k) \\
     &= \frac{t}{n} + \sum_{k=1}^t f(\succ'',\succ',a_k) + (n-2) \sum_{k=1}^t f(\succ'',\succ,a_k) \\
     &= \sum_{k=1}^t P^{\bsucc}_{i',a_k}
\end{align*}
The second and fourth equality above follow from Lemma \ref{lemma:ef_zero_transfers}. Therefore, we find that one of agent $i$ or $i'$ would envy the other, a contradiction.

Next, suppose that $\varphi^f$ is envy-free, but the transfer function $f$ is not swap-monotonic. Then, there exists $\succ' \in \calR$, which we without loss assume to be $\succ'=\langle a_1, a_2, \ldots, a_n \rangle$, $\succ'' \in \Gamma(\succ')$ with $a_t \succ' a_{t+1}$ but $a_{t+1} \succ'' a_t$ for some $t \in [n]$, and $\succ \in \calR$ such that $f(\succ',\succ,a_t)<f(\succ'',\succ,a_t)$. By sender invariance, we have $\sum_{k=1}^{t-1}f(\succ',\succ,a_k)=\sum_{k=1}^{t-1}f(\succ'',\succ,a_k)$ and $\sum_{k=t+2}^{n}f(\succ',\succ,a_k)=\sum_{k=t+2}^{n}f(\succ'',\succ,a_k)$. Since the transfers are balanced (Property (\ref{item:zero}) of Proposition \ref{prop:necessary_properties}), it must be that $f(\succ',\succ,a_{t+1})>f(\succ'',\succ,a_{t+1})$.

Without loss, let $rank(\succ, a_t) < rank(\succ,a_{t+1})$, i.e. $a_t \succ a_{t+1}$ (for the other case, we can just interchange the roles of $\succ'$ and $\succ''$). We can now construct a preference $\tilde{\succ} \in \calR$ such that $rank(\tilde{\succ},a_{t+1})=n$ and for all $s \in [rank(\succ,a_t)], \sigma(\tilde{\succ},s)=\sigma(\succ,s)$. Using sender invariance we have, $f(\succ',\tilde{\succ},a_t)=f(\succ',\succ,a_t) < f(\succ'',\succ,a_t)=f(\succ'',\tilde{\succ},a_t)$. The same property also implies that $\sum_{k=1}^{t-1}f(\succ',\tilde{\succ},a_k)=\sum_{k=1}^{t-1}f(\succ'',\tilde{\succ},a_k)$ and $\sum_{k=t+2}^{n}f(\succ',\tilde{\succ},a_k)=\sum_{k=t+2}^{n}f(\succ'',\tilde{\succ},a_k)$, which in conjunction with the balancedness of $f$ (Property (\ref{item:zero}) of Proposition \ref{prop:necessary_properties}) results in $f(\succ',\tilde{\succ},a_{t+1})>f(\succ'',\tilde{\succ},a_{t+1})$.

Finally, let us construct a preference $\hat{\succ} \in \calR$ such that $rank(\hat{\succ},a_{t+1})=n$ and for all $s \in [t]$, $\sigma(\hat{\succ},s)=\sigma(\succ',s)=a_s$. For such a preference, sender invariance implies that $f(\succ',\hat{\succ},a_{t+1})=f(\succ',\tilde{\succ},a_{t+1})>f(\succ'',\tilde{\succ},a_{t+1})=f(\succ'',\hat{\succ},a_{t+1})$. We again have that $\sum_{k=1}^{t-1}f(\succ',\hat{\succ},a_k)=\sum_{k=1}^{t-1}f(\succ'',\hat{\succ},a_k)$ and $\sum_{k=t+2}^{n}f(\succ',\hat{\succ},a_k)=\sum_{k=t+2}^{n}f(\succ'',\hat{\succ},a_k)$. Therefore, $f(\succ',\hat{\succ},a_{t})<f(\succ'',\hat{\succ},a_{t})$. But, $f(\succ',\hat{\succ},a_{t})=0$ from Lemma \ref{lemma:ef_zero_transfers}, which implies that $f(\succ'',\hat{\succ},a_{t})>0$. Consequently, $f(\hat{\succ},\succ'',a_{t})<0$. But Lemma \ref{lemma:ef_zero_transfers} implies that $f(\hat{\succ},\succ'',a_k)=0$ for all $k \leq (t-1)$. Therefore, we get $\sum_{k=1}^t f(\hat{\succ},\succ'',a_k)<0$, which contradicts Lemma $\ref{lemma:ef_transfer_atleast_zero}$. 
\end{proof}

\subsection{Characterizing Strategy-proof and Neutral Pairwise Exchange Mechanisms}
\label{subsection:sp_neutral_characterize}

We now seek to characterize the set of strategy-proof and neutral \pairwise mechanisms. The following simple lemma shows that the transfer functions for a neutral and strategy-proof \pairwise mechanisms satisfy two additional properties, namely neutrality and receiver invariance, that will be useful in arriving at characterization for this class of mechanisms. The proof follows directly from the definitions of the mechanism and the corresponding properties. Intuitively, if either of these two properties is not satisfied by the transfer function, then we can construct corresponding preference profiles that demonstrate that the mechanism cannot be strategy-proof. We include the proof in Appendix \ref{app:lemma-necessary}.

\begin{lemma}
\label{lem:function-properties-necessary}
If the mechanism $\varphi^f$ is strategy-proof and neutral, then the transfer function $f$ satisfies the properties - 
\begin{enumerate}
    \item \emph{Neutrality:} The transfer function $f : \calR \times \calR \times O \rightarrow [0,1]$ is neutral, i.e. \\
    $f(\succ, \succ', a_i) = f(\pi(\succ), \pi(\succ'), \pi(a_i)) \text{ for any permutation $\pi : O \rightarrow O$}$.
    \label{item:neutral}
    \item \emph{Receiver invariance:} If $\succ = \langle a_1, a_2, \ldots, a_n\rangle$ and $\succ'$ and $\succ''$ are such that $rank(\succ', a_k)=rank(\succ'',a_k)$ $\forall k\leq t$, then $f(\succ,\succ',a_t)=f(\succ,\succ'',a_t)$. Similarly, if $rank(\succ', a_k)=rank(\succ'',a_k)$ $\forall k \geq t$, then $f(\succ,\succ',a_t)=f(\succ,\succ'',a_t)$.\label{item:receiver-invariance}
\end{enumerate}
\end{lemma}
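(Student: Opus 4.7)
The plan is to establish the two properties in sequence: Property 1 follows from a direct use of the neutrality of the mechanism $\varphi$ on a carefully chosen profile, and Property 2 follows from combining Property 1 (neutrality of $f$) with the sender invariance granted by strategy-proofness via Lemma \ref{lemma:f_upper_lower_swap}, together with the anti-symmetry of $f$ from Proposition \ref{prop:necessary_properties}.

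For Property 1, I would fix $\succ, \succ' \in \calR$, $a \in O$, and a permutation $\pi$, and consider a preference profile $\bsucc$ in which agent $i$ reports $\succ$, agent $j$ reports $\succ'$, and every other agent reports $\succ$. The ``no transfers'' clause of Proposition \ref{prop:necessary_properties} eliminates all the $f(\succ,\succ,\cdot)$ terms, so the pairwise-exchange representation collapses to $P^{\bsucc}_{i,a} = \frac{1}{n} + f(\succ, \succ', a)$. Performing the same computation on $\pi(\bsucc)$ at the coordinate $\pi(a)$ gives $P^{\pi(\bsucc)}_{i, \pi(a)} = \frac{1}{n} + f(\pi(\succ), \pi(\succ'), \pi(a))$, and neutrality of $\varphi$ forces these two allocations to coincide, yielding $f(\succ, \succ', a) = f(\pi(\succ), \pi(\succ'), \pi(a))$.

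For Property 2, assume $\succ = \langle a_1, \ldots, a_n\rangle$ and $rank(\succ', a_k) = rank(\succ'', a_k)$ for all $k \leq t$. The key construction is a permutation $\pi : O \to O$ that fixes $a_1, \ldots, a_t$ and permutes $\{a_{t+1}, \ldots, a_n\}$ so that $\pi(\succ') = \succ''$; this is well-defined because the hypothesis guarantees that the set of ``leftover'' positions not occupied by $a_1, \ldots, a_t$ is identical in $\succ'$ and $\succ''$, and both are filled by some permutation of $\{a_{t+1}, \ldots, a_n\}$. Applying Property 1 with this $\pi$ and noting $\pi(a_t) = a_t$ gives $f(\succ, \succ', a_t) = f(\pi(\succ), \succ'', a_t)$. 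It then remains to show $f(\pi(\succ), \succ'', a_t) = f(\succ, \succ'', a_t)$. Since $\pi$ fixes $a_1, \ldots, a_t$, the preferences $\succ$ and $\pi(\succ)$ share the same top-$t$ prefix in the same order, so applying anti-symmetry and then sender invariance (with receiver $\succ''$ and senders $\succ$ and $\pi(\succ)$) gives the chain $f(\pi(\succ), \succ'', a_t) = -f(\succ'', \pi(\succ), a_t) = -f(\succ'', \succ, a_t) = f(\succ, \succ'', a_t)$. The symmetric statement for $k \geq t$ is obtained by choosing $\pi$ that instead fixes $a_t, \ldots, a_n$ and invoking the ``bottom-$t$'' clause of sender invariance.

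I expect the only point requiring care to be verifying that the permutation $\pi$ in the receiver-invariance argument is well-defined, but this reduces to the elementary observation that the set of positions occupied by $\{a_1, \ldots, a_t\}$, and hence its complement, is identical in $\succ'$ and $\succ''$ by hypothesis; once $\pi$ is in hand, the rest of the argument is an essentially mechanical bookkeeping with neutrality, anti-symmetry, and sender invariance.
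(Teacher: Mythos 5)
Your proposal is correct and follows essentially the same route as the paper: the crux in both is the permutation $\pi$ that fixes the common prefix $a_1,\ldots,a_t$ and maps one sender preference to the other, combined with neutrality of $f$ (which, as you note, is pinned down by evaluating the mechanism at a profile with $n-1$ identical reports). The only difference is cosmetic: you discharge the final step by invoking the already-proven sender invariance of Lemma \ref{lemma:f_upper_lower_swap} together with anti-symmetry (which is legitimate and non-circular), whereas the paper re-runs a direct strategy-proofness contradiction at a constructed profile; both arguments are sound.
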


Lemmas \ref{lemma:f_upper_lower_swap} and \ref{lem:function-properties-necessary} show that in our search for strategy-proof and neutral mechanisms, it is enough to restrict ourselves to transfer functions that satisfy neutrality, sender and receiver invariance. We now show these properties, in fact, allow us to prove even stronger restrictions on the transfer function.

\begin{theorem}
\label{thm:main_rank}
If a transfer function $f : \calR \times \calR \times O \rightarrow [-\frac{1}{n},\frac{1}{n}]$ satisfies the properties of neutrality, sender invariance, and receiver invariance, as well as the following:
\[
  \sum_{a \in O} f(\succ, \succ', a) = c \text{  for any pair of preferences $\succ, \succ' \in \calR$, where $c \in \mathbb{R}$ is some constant.}
\]
then for any object $a \in O$, and preferences $\succ, \succ', \succ'' \in \cR$, we have
\[
rank(\succ',a) = rank(\succ'',a) \implies f(\succ, \succ', a) = f(\succ, \succ'', a)
\]
\end{theorem}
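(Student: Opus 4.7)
The plan is to reduce the claim to a single ``essential swap'' in the sender and handle that case via a six-preference diagram. First, applying both versions of sender invariance at $t = rank_{\succ'}(a)$ shows that any adjacent transposition strictly above $a$ or strictly below $a$ in $\succ'$ preserves $f(\succ, \succ', a)$; hence $f(\succ, \succ', a)$ depends only on $\succ$, $a$, and the above-set $B_{\succ'} := \{o : o \succ' a\}$. For the theorem it therefore suffices to take $\succ'$ and $\succ''$ with the same rank $k$ of $a$ and above-sets differing by a single element swap, $B_{\succ''} = (B_{\succ'}\setminus\{b\})\cup\{c\}$; using the sender-invariance freedom, I assume $\succ'$ has $(b,a,c)$ at positions $(k-1,k,k+1)$ and $\succ''$ has $(c,a,b)$ there, with the remaining positions identical.

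Write $\succ = \langle a_1,\ldots,a_n\rangle$ with $a = a_r$, $b = a_p$, $c = a_q$. When $p,q$ are both $<r$ or both $>r$, the transposition $\pi = (b\;c)$ permutes only within $\{a_1,\ldots,a_{r-1}\}$ or only within $\{a_{r+1},\ldots,a_n\}$. Neutrality gives $f(\succ,\succ',a)=f(\pi(\succ),\succ'',a)$, and a second application of neutrality combined with receiver invariance at $t=r$ (since $\pi^{-1}$ fixes $\{a_r,\ldots,a_n\}$ in one subcase and $\{a_1,\ldots,a_r\}$ in the other, so the relevant sender ranks are unchanged under $\succ^* \mapsto \pi^{-1}(\succ^*)$) yields $f(\pi(\succ),\succ^*,a)=f(\succ,\succ^*,a)$ for any $\succ^*$. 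Taking $\succ^* = \succ''$ completes this case.

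The remaining case $p<r<q$ is the heart of the argument. I introduce the six preferences $\{P_\sigma : \sigma \in S_3\}$ obtained by permuting $\{a,b,c\}$ over positions $k-1,k,k+1$ while leaving everything else fixed. Sender invariance reduces each of $f(\succ,P_\sigma,a), f(\succ,P_\sigma,b), f(\succ,P_\sigma,c)$ to one of a handful of common values indexed by the subset of the remaining two objects ranked above the object of interest: for $o=a$ these are $A_0, A_b, A_c, A_{bc}$, and analogously $B_\bullet$ and $C_\bullet$ for $o=b,c$; the target is $A_b = A_c$. Receiver invariance at $t=p$ for the transfers of $b$ yields $B_a = B_c$, since in the two preferences where $b$ sits at position $k$ the ranks of $a_1,\ldots,a_p$ in the sender coincide; symmetrically receiver invariance at $t=q$ for the transfers of $c$ gives $C_a = C_b$. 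Writing the balanced-transfers equation for each of the six $P_\sigma$'s, subtracting the pairs $P_{abc}-P_{acb}$ and $P_{bca}-P_{cba}$, and substituting $B_a=B_c$ and $C_a=C_b$, produces the chain identity $B_{ac}-B_0 = C_{ab}-C_0$. Finally, subtracting balanced transfers for $P_{bac}$ and $P_{cab}$ gives $A_b-A_c = (B_{ac}-B_0)-(C_{ab}-C_0)=0$, as required.

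The main obstacle is exactly this case $p<r<q$: balanced transfers between $\succ'$ and $\succ''$ alone yield only the single relation $\Delta_a+\Delta_b+\Delta_c=0$, which does not pin down the target difference $A_b-A_c$, and neutrality combined with sender invariance supplies no further scalar constraints here. The six-preference diagram is essential because it lets receiver invariance constrain the auxiliary transfers of $b$ and $c$ across several related senders, supplying the extra identities $B_a=B_c$ and $C_a=C_b$ that close the argument.
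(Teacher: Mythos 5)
Your proof is correct, and it takes a genuinely different route from the paper's. The paper proves Theorem \ref{thm:main_rank} by induction on $n$: an explicit six-preference computation for $n=3$ (Appendix \ref{app:theorem-base-case}), and an induction step that partitions senders by the ranks of the receiver's best and worst objects (the sets $\calR_{p,q}$), invokes the hypothesis within a block by constructing an auxiliary $(n-1)$-object transfer function (Claims \ref{claim:samepartition_rank_a1} and \ref{claim:samepartition_rank_an}), and patches across blocks with receiver invariance (Claim \ref{claim:differentpartition_rank}). You avoid induction entirely: sender invariance collapses $f(\succ,\cdot,a)$ to a function of the set of objects the sender ranks above $a$, single-element swaps of that set connect any two senders with $rank(\succ',a)=rank(\succ'',a)$, and the only nontrivial swap configuration (the two swapped objects straddling $a$ in the receiver's order) is settled by your local six-preference argument at positions $k-1,k,k+1$ --- which is, in effect, the paper's $n=3$ base case performed in situ for general $n$; I checked that receiver invariance at $t=p$ and $t=q$ does give $B_a=B_c$ and $C_a=C_b$, and that the three subtractions of the constant-sum identities close the computation exactly as you claim. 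What your route buys: no reduced market and no verification that the induced $(n-1)$-object function inherits the axioms, and a sharper view of the hypotheses --- your Case A already follows from receiver invariance alone (the detour through $\pi(\succ)$ via neutrality is valid but unnecessary), so neutrality plays almost no role in your argument, whereas the paper uses it in the induction step (and needs it anyway later for Proposition \ref{prop:rankfunction}). The paper's version is more modular in that the combinatorial core is isolated once in the base case. Two small points you should make explicit: the mirror configuration $q<r<p$ is covered by exchanging the roles of $(b,\succ')$ and $(c,\succ'')$, and the edge cases $k\in\{1,n\}$ are vacuous since the above-sets then coincide.
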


\noindent
We prove this claim by induction on $n$. 
For the base case, we consider $n=3$ and include the proof in Appendix \ref{app:theorem-base-case}. Surprisingly, proving the base case also requires a subtle use of the receiver and sender invariance properties, as well as $\sum_{a\in O} f(\succ, \succ', a) = c\ \forall \succ, \succ' \in \calR$.

For the induction step, suppose the claim is true for $n-1 \geq 3$. We need to show that the claim continues to hold for $n$. Let $O=\{a_1,a_2,\dots a_n\}$. Without loss of generality, let $\succ = \langle a_1, a_2, \dots a_n \rangle$. For any $p,q \in [n]$, we define $\calR_{p,q}=\{\succ' \in \calR \mid rank(\succ',a_1)=p \text{ and } rank(\succ',a_n)=q\}$ to be the set of all preferences where the ranks of object $a_1$ and $a_n$ are $p$ and $q$ respectively. Let $R_{p,\ast}=\bigcup_{q \in [n]}R_{p,q}$ and $R_{\ast,q}=\bigcup_{p \in [n]}R_{p,q}$.

\begin{claim}
\label{claim:samepartition_rank_a1}
For all $1 \leq p \leq n$, $\forall \succ', \succ'' \in \calR_{p,\ast}$, and $\forall a \in O,$
\[rank(\succ', a) = rank(\succ'', a) \implies f(\succ, \succ', a) = f(\succ, \succ'', a)\]
\end{claim}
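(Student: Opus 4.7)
The plan is to reduce the claim to the analogous statement for $n-1$ objects by ``factoring out'' $a_1$, whose rank in every sender in $\calR_{p,\ast}$ is fixed at $p$. Let $O' := O \setminus \{a_1\}$, let $\calR'$ denote the set of preferences over $O'$, and for $\tau \in \calR'$ and $q \in [n]$ let $\iota_q(\tau) \in \calR$ denote the extension obtained by inserting $a_1$ at position $q$. Note that $\succ = \iota_1(\sigma)$ for $\sigma := \langle a_2, \ldots, a_n\rangle$, and every $\succ^* \in \calR_{p,\ast}$ equals $\iota_p(\succ^*|_{O'})$.

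Define an auxiliary transfer function $g : \calR' \times \calR' \times O' \to \bbR$ by $g(\tau, \tau', a) := f(\iota_1(\tau), \iota_p(\tau'), a)$. I would then verify that $g$ inherits the four hypotheses of the theorem on the universe $O'$ of size $n-1$. Neutrality transfers from $f$ by extending any permutation of $O'$ to a permutation of $O$ that fixes $a_1$. Sender invariance and receiver invariance of $g$ each reduce to the corresponding property of $f$: if two preferences in $\calR'$ agree on a prefix of length $t$, or assign the same rank to the first $t$ objects of the receiver $\tau$, then their $\iota_p$-insertions agree on the analogous positions in $\calR$ (with $a_1$ contributing the common rank $p$), via a short case split on whether $t < p$ or $t \geq p$. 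For the constant-sum condition, one writes $\sum_{a \in O'} g(\tau, \tau', a) = c - f(\iota_1(\tau), \iota_p(\tau'), a_1)$; the subtracted term depends only on $p$, because receiver invariance of $f$ (applied with $t=1$) eliminates the dependence on $\tau'$ once the sender-rank of $a_1$ is fixed at $p$, and then neutrality of $f$ under permutations of $O'$ fixing $a_1$ eliminates the dependence on $\tau$.

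With all four hypotheses verified for $g$, the inductive hypothesis of the theorem at $n-1$ yields $g(\tau, \tau', a) = g(\tau, \tau'', a)$ whenever $rank(\tau', a) = rank(\tau'', a)$ in $\calR'$. Setting $\tau = \sigma$, $\tau' = \succ'|_{O'}$, and $\tau'' = \succ''|_{O'}$, and noting that $rank(\succ', a) = rank(\succ'', a)$ descends to $rank(\tau', a) = rank(\tau'', a)$ (since both senders place $a_1$ at the same position $p$, so the relative order of $a$ and $a_1$ coincides in both), we conclude that $f(\succ, \succ', a) = f(\succ, \succ'', a)$ for every $a \in O'$. The remaining case $a = a_1$ follows directly from receiver invariance of $f$ (Lemma \ref{lem:function-properties-necessary}) applied with $t = 1$, since $rank(\succ', a_1) = rank(\succ'', a_1) = p$.

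The main subtlety is the verification of the sum condition for $g$: one must show that $f(\iota_1(\tau), \iota_p(\tau'), a_1)$ is simultaneously independent of both $\tau$ and $\tau'$, which requires chaining receiver invariance (to kill the $\tau'$ dependence) with neutrality restricted to permutations fixing $a_1$ (to kill the $\tau$ dependence). The invariance checks for $g$ are otherwise routine but demand careful bookkeeping of how ranks shift under the insertion $\iota_p$, with the split between $t < p$ and $t \geq p$ accounting for whether the target object sits above or below the inserted copy of $a_1$.
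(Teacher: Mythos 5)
Your proposal is correct and follows essentially the same route as the paper: the paper's proof also defines the auxiliary function on $O \setminus \{a_1\}$ by prepending $a_1$ to the receiver and inserting $a_1$ at position $p$ in the sender, uses receiver invariance together with neutrality to show the $a_1$-transfer is constant (so the reduced sums are again constant), verifies the invariance properties for the reduced function, and invokes the induction hypothesis. Your explicit handling of the $a = a_1$ case and the rank-descent observation match the paper's argument as well.
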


\begin{proof}
By the definition of $\calR_{p,\ast}$, we have $rank(\succ', a_1) = rank(\succ'', a_1) = p$ for any two preferences $\succ', \succ'' \in \calR_{p,\ast}$. Thus, by receiver invariance, we must have $f(\succ, \succ', a_1) = f(\succ, \succ'', a_1)$.

Let $O^{[n-1]} = \{a_2, \ldots, a_{n}\}$ denote the set of $n-1$ objects apart from $a_1$, and let $\calR^{[n-1]}$ denote the set of all strict preferences over $O^{[n-1]}$. 
We can now define an auxiliary function $f': \calR^{[n-1]} \times \calR^{[n-1]} \times O^{[n-1]} \rightarrow [-\frac{1}{n-1},\frac{1}{n-1}]$ as follows
\[f'(\succ, \succ', a) = f(\tilde{\succ}, \tilde{\succ}', a)\ \forall \succ, \succ' \in \calR^{[n-1]}, \forall a \in O^{[n-1]}\]
where $\tilde{\succ}$ is obtained by appending $a_1$ to the beginning of preference relation $\succ$, and $\tilde{\succ}'$ is obtained by inserting $a_1$ at the $p^{\text{th}}$ position in $\succ'$. Now, by receiver invariance and neutrality, $f(\tilde{\succ}, \tilde{\succ}', a_1)$ is equal for any such preferences, and hence we maintain $\sum_{a \in O^{[n-1]}} f'(\succ, \succ', a) = c - f(\tilde{\succ}, \tilde{\succ}', a_1) = c'$ for all $\succ, \succ' \in \cR^{[n-1]}$. 
Further if $\succ = \langle a_2, \ldots, a_n \rangle$, and $\succ'$ and $\succ''$ satisfy $rank(\succ',a_k) = rank(\succ'',a_k)$ for all $ 2 \leq k \leq t$, then we have $f'(\succ, \succ', a_t) = f(\tilde{\succ}, \tilde{\succ}', a_t) = f(\tilde{\succ}, \tilde{\succ}'', a_t) = f'(\succ, \succ'', a_t)$ where the second equality follows from the receiver invariance of function $f$. Thus, the constructed function $f'$ also satisfies receiver invariance.
We can similarly verify that the function $f'$ also satisfies neutrality and sender invariance.
Hence we can use the induction hypothesis  to complete the proof of the claim. 
\end{proof}

We can use a symmetric argument on object $a_n$ to prove the next claim.

\begin{claim}
\label{claim:samepartition_rank_an}
For all $1 \leq q \leq n$, $\forall \succ', \succ'' \in \calR_{\ast,q}$, and $\forall a \in O,$
\[rank(\succ', a) = rank(\succ'', a) \implies f(\succ, \succ', a) = f(\succ, \succ'', a)\]
\end{claim}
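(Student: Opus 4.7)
The plan is to mirror the proof of Claim \ref{claim:samepartition_rank_a1} by pivoting on the bottom object $a_n$ of $\succ$ rather than the top object $a_1$, exploiting the ``tail'' half of receiver invariance (the condition on ranks $\geq t$) in place of the ``head'' half. Concretely, fix $q$ and consider two preferences $\succ', \succ'' \in \calR_{\ast,q}$. By construction $rank(\succ', a_n) = rank(\succ'', a_n) = q$, so the second form of receiver invariance in Lemma \ref{lem:function-properties-necessary} immediately gives $f(\succ, \succ', a_n) = f(\succ, \succ'', a_n)$. It remains to equate $f(\succ, \succ', a)$ and $f(\succ, \succ'', a)$ for any object $a \in O \setminus \{a_n\}$ on which the two preferences agree in rank.

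To do this, I would define an auxiliary function $f' : \calR^{[n-1]} \times \calR^{[n-1]} \times O^{[n-1]} \to [-\tfrac{1}{n-1}, \tfrac{1}{n-1}]$ on the set $O^{[n-1]} = \{a_1, \dots, a_{n-1}\}$ by setting
\[
 f'(\succ, \succ', a) \;=\; f(\tilde{\succ}, \tilde{\succ}', a),
\]
where $\tilde{\succ}$ is obtained by appending $a_n$ at the bottom of $\succ$, and $\tilde{\succ}'$ is obtained by inserting $a_n$ at position $q$ in $\succ'$. Since $f(\tilde{\succ}, \tilde{\succ}', a_n)$ is the same constant across all such extensions (by receiver invariance together with neutrality of $f$), the total $\sum_{a \in O^{[n-1]}} f'(\succ, \succ', a) = c - f(\tilde{\succ}, \tilde{\succ}', a_n)$ is constant in the pair $(\succ, \succ')$, so the constant-sum hypothesis of Theorem \ref{thm:main_rank} transfers to $f'$. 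Neutrality of $f'$ follows directly from that of $f$ restricted to the $n-1$ objects of $O^{[n-1]}$. Sender invariance and receiver invariance of $f'$ both follow from the corresponding properties of $f$: any rank-preserving agreement between $\succ'$ and $\succ''$ on $O^{[n-1]}$ lifts, via the insertion of $a_n$ at the fixed position $q$, to a rank-preserving agreement of $\tilde{\succ}'$ and $\tilde{\succ}''$ on all of $O$, on which $f$'s invariance applies.

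With these four properties verified, the induction hypothesis applied to $f'$ on $n-1$ objects yields $f'(\succ, \succ', a) = f'(\succ, \succ'', a)$ whenever $rank(\succ', a) = rank(\succ'', a)$ for $a \in O^{[n-1]}$, which translates back to $f(\succ, \succ', a) = f(\succ, \succ'', a)$ as desired. I do not anticipate a serious obstacle here; the only subtle step is checking that inserting $a_n$ at rank $q$ in both $\succ'$ and $\succ''$ is compatible with the rank conditions in sender/receiver invariance, but this is straightforward because objects above rank $q$ shift up uniformly in both preferences, leaving relative ranks intact. The core conceptual work was already done in Claim \ref{claim:samepartition_rank_a1}, and the present claim is its mirror image under the symmetric half of receiver invariance.
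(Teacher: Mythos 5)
Your proposal is correct and is essentially the paper's own argument: the paper proves this claim simply by invoking ``a symmetric argument on object $a_n$'' to Claim \ref{claim:samepartition_rank_a1}, and your write-up is exactly that mirror construction (appending $a_n$ at the bottom of $\succ$, inserting it at rank $q$ in the sender preferences, verifying neutrality, sender/receiver invariance and the constant-sum condition for the auxiliary $f'$, then applying the induction hypothesis). No gaps.
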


In order to complete the proof of the theorem, we now need to argue that the claim is true for any $\succ' \in \calR_{p,q}$ and $\succ'' \in \calR_{s,t}$ where $p\neq s$ and $q \neq t$.

\begin{claim}
\label{claim:differentpartition_rank}
For all $\succ' \in \calR_{p,q}, \succ'' \in \calR_{s,t}$ such that $p \neq s$ and $q \neq t$, and $\forall a \in O,$
\[rank(\succ', a) = rank(\succ'', a) \implies f(\succ, \succ', a) = f(\succ, \succ'', a)\]
\end{claim}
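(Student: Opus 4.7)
The plan is to bridge $\succ'$ and $\succ''$ through two intermediate preferences so that each step can be resolved by either Claim~\ref{claim:samepartition_rank_a1} or Claim~\ref{claim:samepartition_rank_an}. First I would observe that both $a \neq a_1$ and $a \neq a_n$. If $a = a_1$, then $rank(\succ', a) = p$ and $rank(\succ'', a) = s$, so the hypothesis $rank(\succ', a) = rank(\succ'', a)$ would force $p = s$, contradicting $p \neq s$; the symmetric argument using $q \neq t$ excludes $a = a_n$.

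Write $r := rank(\succ', a) = rank(\succ'', a)$. Since $a$ is distinct from $a_1$ in each of $\succ'$ and $\succ''$, the three positions $p, s, r$ are pairwise distinct. The base case $n=3$ of Theorem~\ref{thm:main_rank} has already been established, so we are in the induction step with $n \geq 4$, and the set $[n] \setminus \{p, s, r\}$ is nonempty. Pick any $v$ in it, and construct two bridge preferences: $\succ^{(1)}$, which places $a_1$ at rank $p$, $a_n$ at rank $v$, $a$ at rank $r$, and the remaining $n-3$ objects at the remaining positions in any order; and $\succ^{(2)}$, which places $a_1$ at rank $s$, $a_n$ at rank $v$, $a$ at rank $r$, in the same manner. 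Both preferences are well-defined because $\{p, v, r\}$ and $\{s, v, r\}$ are triples of pairwise distinct positions in $[n]$.

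The result then follows by chaining three applications of the previously established claims. Claim~\ref{claim:samepartition_rank_a1} applied to $\succ'$ and $\succ^{(1)}$, which both lie in $\calR_{p,\ast}$ and assign $a$ the rank $r$, gives $f(\succ, \succ', a) = f(\succ, \succ^{(1)}, a)$. Claim~\ref{claim:samepartition_rank_an} applied to $\succ^{(1)}$ and $\succ^{(2)}$, both in $\calR_{\ast,v}$ with $a$ at rank $r$, yields $f(\succ, \succ^{(1)}, a) = f(\succ, \succ^{(2)}, a)$. Finally, Claim~\ref{claim:samepartition_rank_a1} applied to $\succ^{(2)}$ and $\succ''$, both in $\calR_{s,\ast}$ and mapping $a$ to rank $r$, gives $f(\succ, \succ^{(2)}, a) = f(\succ, \succ'', a)$. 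Composing the three equalities establishes the claim.

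The step that requires the most attention is the choice of $v$: we need a common position for $a_n$ in the two bridges that avoids $p$, $s$, and $r$ simultaneously, which is precisely why $n \geq 4$ is used. A pleasant consequence of routing through a common fresh position for $a_n$ is that no case analysis based on whether $p = t$ or $q = s$ is needed — the triple-bridge construction uniformly handles every configuration left open by the previous two claims.
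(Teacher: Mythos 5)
Your proof is correct. The preliminary observations are right: the hypothesis $rank(\succ',a)=rank(\succ'',a)$ together with $p \neq s$ and $q \neq t$ forces $a \notin \{a_1, a_n\}$, the positions $p, s, r$ are pairwise distinct because $a_1$ and $a$ occupy different ranks within each of $\succ'$ and $\succ''$, and the induction step indeed has $n \geq 4$, so a common position $v \in [n] \setminus \{p,s,r\}$ for $a_n$ exists and both bridge preferences $\succ^{(1)}, \succ^{(2)}$ are well defined. Each of the three links (Claim~\ref{claim:samepartition_rank_a1} within $\calR_{p,\ast}$, Claim~\ref{claim:samepartition_rank_an} within $\calR_{\ast,v}$, Claim~\ref{claim:samepartition_rank_a1} within $\calR_{s,\ast}$) is a legitimate application since the relevant pair always agrees on the rank of $a$. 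Your route differs from the paper's: the paper splits into the cases $a \neq a_{n-1}$ and $a = a_{n-1}$, uses Claim~\ref{claim:samepartition_rank_an} (respectively Claim~\ref{claim:samepartition_rank_a1}) to replace $\succ'$ and $\succ''$ by preferences $\tilde{\succ}', \tilde{\succ}''$ agreeing on all but two ranks, and then closes the gap with one more direct appeal to receiver invariance; your triple-bridge chain instead routes $a_n$ through a common fresh position and never invokes receiver invariance beyond its use inside the two earlier claims, which eliminates the case analysis and makes the argument uniform. The paper's version uses only two intermediate preferences and one fewer application of the claims, but your variant is arguably cleaner and equally rigorous.
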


\begin{proof}
Fix $p \neq s$ and $q \neq t$, $\succ' \in \calR_{p,q}, \succ'' \in \calR_{s,t}$ such that $rank(\succ',a)=rank(\succ'',a)$ for some $a \in O$. Note that $a \neq a_n$ and $a \neq a_1$ by definition of the sets $\calR_{p,q}$ and $\calR_{s,t}$. We consider two cases on the object $a$. 

\emph{Case 1: $a \neq a_{n-1}$}

Let $\tilde{\succ}' \in \cR_{\ast,q}$ and $\tilde{\succ}'' \in \cR_{\ast,t}$ be arbitrary preferences such that 
\begin{enumerate}
    \item $rank(\tilde{\succ}',a) = rank(\succ',a)=rank(\tilde{\succ}'',a) = rank(\succ'',a)$
    \item $rank(\tilde{\succ}', a_n) = q$ and $rank(\tilde{\succ}'', a_n) = t$
    \item $rank(\tilde{\succ}',a_i) = rank(\tilde{\succ}'',a_i)$ $\forall i \notin \{n-1,n\}$ 
\end{enumerate}
Note that since $a \neq a_{n-1}$, we are guaranteed to be able to find such preferences, by setting $rank(\tilde{\succ}', a_{n-1}) = rank(\succ'', a_{n}) = t$ and $rank(\tilde{\succ}'', a_{n-1}) = rank(\succ', a_{n}) = q$.
Now, by Claim \ref{claim:samepartition_rank_an}, we must have $f(\succ,\succ',a)=f(\succ,\tilde{\succ}',a)$ and $f(\succ,\succ'',a)=f(\succ,\tilde{\succ}'',a)$. But since $f$ satisfies the receiver invariance property,  we must have $f(\succ,\tilde{\succ}',a)=f(\succ,\tilde{\succ}'',a) \implies f(\succ,\succ',a) = f(\succ,\succ'',a)$ as desired.
\\

\emph{Case 2: $a = a_{n-1}$}

Just as in the case above, our goal is to construct preferences $\tilde{\succ}'$ and $\tilde{\succ}''$ such that $f(\succ, \succ', a) = f(\succ, \tilde{\succ}',a)$ and $f(\succ, \succ'',a) = f(\succ, \tilde{\succ}'',a)$, and then argue that these two transfers must be equal by properties of $\tilde{\succ}'$ and $\tilde{\succ}''$.

Let $\tilde{\succ}' \in \cR_{p,\ast}$ and $\tilde{\succ}'' \in \cR_{s,\ast}$ be arbitrary preferences such that 
\begin{enumerate}
    \item $rank(\succ',a)=rank(\tilde{\succ}',a) = rank(\succ'',a)=rank(\tilde{\succ}'',a)$
    \item $rank(\tilde{\succ}', a_1) = p$ and $rank(\tilde{\succ}'', a_1) = s$
    \item $rank(\tilde{\succ}',a_i) = rank(\tilde{\succ}'',a_i)$ $\forall i \notin \{1,2\}$ 
\end{enumerate}
Again, we are guaranteed to find such preferences. Since $n \geq 4$, we have $a_{n-1} \neq a_2$ and thus we can set $rank(\tilde{\succ}', a_{2}) = rank(\succ'', a_{1}) = s$ and $rank(\tilde{\succ}'', a_{2}) = rank(\succ', a_{1}) = p$.
Now, by Claim \ref{claim:samepartition_rank_a1}, $f(\succ,\succ',a)=f(\succ,\tilde{\succ}',a)$ and $f(\succ,\succ'',a)=f(\succ,\tilde{\succ}'',a)$. Since $f$ satisfies the receiver invariance property,  we must have $f(\succ,\tilde{\succ}',a)=f(\succ,\tilde{\succ}'',a) \implies f(\succ,\succ',a) = f(\succ,\succ'',a)$ as desired. This concludes the proof of the claim.
\end{proof}

Theorem \ref{thm:main_rank} follows as a direct consequence of Claims \ref{claim:samepartition_rank_a1}, \ref{claim:samepartition_rank_an} and \ref{claim:differentpartition_rank}.
We now show that Theorem \ref{thm:main_rank} implies that for any \pairwise mechanism that is neutral and strategy-proof, the transfer function $f$ admits a simple linear representation. The following simple proposition shows that Theorem \ref{thm:main_rank} already implies that the transfer function $f$ admits a simpler form where the transfer of any object $a$ between two agents $i$ and $j$ only depends on the ranks of the object in their corresponding preference orderings. We include the proof in Appendix \ref{app:proposition-proof}.

\begin{proposition}
\label{prop:rankfunction}
For any function $f : \calR \times \calR \times O \rightarrow [-\frac{1}{n},\frac{1}{n}]$, there exists a function $g:[n] \times [n] \rightarrow  [-\frac{1}{n},\frac{1}{n}]$ such that for all $\succ, \succ' \in \calR$ and $\forall a \in O$, we have $f(\succ, \succ', a) = g(rank(\succ, a), rank(\succ', a))$ if and only if $f$ is neutral and for any object $a \in O$, and $\succ, \succ', \succ'' \in \calR$,
\[rank(\succ',a) = rank(\succ'',a) \implies
f(\succ, \succ', a) = f(\succ, \succ'', a) \]
\end{proposition}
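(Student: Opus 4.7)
} The plan is to prove both directions of the biconditional, with the backward direction (building $g$ from $f$) requiring the real work.

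The forward direction is routine: assuming $f(\succ,\succ',a) = g(rank(\succ,a),rank(\succ',a))$, neutrality follows because ranks are preserved under any renaming of objects ($rank(\pi(\succ),\pi(a)) = rank(\succ,a)$), and the implication is immediate since equal ranks on the right of $g$ yield equal outputs.

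For the backward direction, I would define $g(p,q)$ as $f(\succ,\succ',a)$ for \emph{any} triple $(\succ,\succ',a)$ with $rank(\succ,a)=p$ and $rank(\succ',a)=q$, and then verify well-definedness. The key intermediate step is to first upgrade the stated hypothesis (which only fixes invariance in the second argument of $f$) to the analogous statement about the first argument: if $rank(\succ_1,a) = rank(\succ_2,a) = p$, then $f(\succ_1,\succ',a) = f(\succ_2,\succ',a)$ for any $\succ' \in \calR$. To prove this, I would construct the permutation $\pi$ on $O$ that fixes $a$ and sends the objects of $\succ_1$ (in order) to those of $\succ_2$; such $\pi$ exists precisely because $a$ sits at the same rank in both. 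By neutrality, $f(\succ_1,\succ',a) = f(\pi(\succ_1),\pi(\succ'),\pi(a)) = f(\succ_2,\pi(\succ'),a)$. Since $\pi$ fixes $a$, we have $rank(\pi(\succ'),a)=rank(\succ',a)$, so the stated (second-argument) implication finishes the step.

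With this in hand, well-definedness of $g$ is a three-step chain: given two witnesses $(\succ_1,\succ_1',a_1)$ and $(\succ_2,\succ_2',a_2)$ for the same $(p,q)$, first use neutrality with a permutation sending $a_1 \mapsto a_2$ to move everything to the object $a_2$; then apply the first-argument invariance just established to replace the transported first preference by $\succ_2$; finally apply the original second-argument implication to replace the transported second preference by $\succ_2'$. All three matches on ranks are automatic from how $\pi$ interacts with $rank$. Once $g$ is well-defined, the identity $f(\succ,\succ',a) = g(rank(\succ,a),rank(\succ',a))$ holds by construction.

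The main obstacle is conceptual rather than computational: the hypothesis gives rank invariance only in the second coordinate of $f$, so the nontrivial work is recognizing that neutrality, combined with the ability to build a rank-preserving permutation that fixes the target object $a$, bootstraps this into invariance in the first coordinate as well. Once that bridge is built, the rest of the argument is just a careful change of witnesses.
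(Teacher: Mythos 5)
Your proposal is correct and takes essentially the same route as the paper's proof: the forward direction is the same routine check, and your key step — using neutrality with a rank-preserving permutation to upgrade the hypothesis to invariance in the first argument, then changing witnesses — is exactly the paper's argument, which instead fixes a reference preference $\succ^*$, defines $g(r,s)=f(\succ^*,\hat{\succ},a_r)$, and transports an arbitrary triple $(\succ,\succ',a)$ to it via the permutation $\pi$ with $\pi(\succ)=\succ^*$. The difference is purely organizational (explicit first-argument invariance lemma versus a canonical representative), so there is no gap.
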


Proposition \ref{prop:rankfunction} shows that the transfer function $f$ for any neutral and strategy-proof \pairwise mechanism admits a representation in terms of a function $g:[n] \times [n] \rightarrow [-\frac{1}{n},\frac{1}{n}]$. The following theorem shows that any such function $g$ when combined with the balanced transfers property and the no transfers property must in fact be a linear function. We note that the balanced transfers property implies that for any permutation $\pi$, we have $\sum_{i} g(i, \pi(i)) = 0$, while the no transfers property implies that we have $g(i,i) = 0,\ \forall i \in [n]$.

\begin{theorem}
\label{thm:g-to-linear}
Any function $g:[n] \times [n] \rightarrow \bbR$  satisfies $g(i,i)=0,\ \forall i \in [n]$ and $\sum_{i = 1}^n g(i, \pi(i)) = 0$ for all permutations $\pi:[n] \rightarrow [n]$ if and only if there exists a vector $v \in \bbR^n$ with $v[n]=0$ such that $g(i, j) = v[i] - v[j]$, where $v[k]$ is the $k^{\text{th}}$  coordinate of $v$.
\end{theorem}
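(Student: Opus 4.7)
\textbf{Plan for the proof of Theorem \ref{thm:g-to-linear}.}
The $(\Leftarrow)$ direction is immediate: if $g(i,j) = v[i] - v[j]$, then $g(i,i)=0$ for every $i$, and for any permutation $\pi$ we have $\sum_i g(i,\pi(i)) = \sum_i v[i] - \sum_i v[\pi(i)] = 0$, since $\pi$ is a bijection on $[n]$.

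For the $(\Rightarrow)$ direction, my plan is to simply read off the potential from $g$ itself by setting $v[i] := g(i, n)$, so that $v[n] = g(n, n) = 0$ by the diagonal hypothesis. I then need to verify $g(i,j) = v[i] - v[j]$ on all pairs $(i,j)$, and I would do this by feeding well-chosen permutations into the hypothesis $\sum_i g(i,\pi(i)) = 0$ and exploiting the fact that the diagonal contributions vanish.

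The first auxiliary fact I would extract is antisymmetry. Applying the hypothesis to the transposition $\tau_{i,j}$ that swaps $i$ and $j$ and fixes everything else yields
\begin{equation*}
g(i,j) + g(j,i) + \sum_{k \notin \{i,j\}} g(k,k) = 0,
\end{equation*}
so $g(j,i) = -g(i,j)$ for all $i,j$. The second fact is a three-term relation through the distinguished index $n$: for any two distinct indices $i,j$, both different from $n$, I would apply the hypothesis to the $3$-cycle $\pi = (i\; j\; n)$, namely $\pi(i)=j$, $\pi(j)=n$, $\pi(n)=i$, fixing all other indices. Since $g$ vanishes on the diagonal, this collapses to $g(i,j) + g(j,n) + g(n,i) = 0$, and combining with antisymmetry gives exactly
\begin{equation*}
g(i,j) = g(i,n) - g(j,n) = v[i] - v[j].
\end{equation*}
The boundary cases are then routine: $g(i,i) = 0 = v[i] - v[i]$ by hypothesis; $g(i,n) = v[i] - v[n]$ by the definition of $v$; and $g(n,j) = -g(j,n) = -v[j] = v[n] - v[j]$ by antisymmetry.

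I do not expect any serious obstacle here; the only mild subtlety is identifying the correct permutation (a $3$-cycle through the chosen basepoint $n$) that localizes the sum-over-permutations condition to a pointwise linear identity. This argument requires $n \geq 3$ so that a nontrivial $3$-cycle involving $n$ exists, which is the paper's standing assumption.
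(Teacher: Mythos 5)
Your proposal is correct and takes essentially the same route as the paper: both define the potential by $v[i] := g(i,n)$ and localize the permutation-sum condition through the basepoint $n$ using the vanishing diagonal. The only cosmetic difference is that the paper compares two permutations differing by a swap (with $\pi(m)=k$, $\pi(k)=n$) to get $g(m,k)=g(m,n)-g(k,n)$, whereas you evaluate a transposition and a $3$-cycle directly, which amounts to the same computation.
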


\begin{proof}
Let us prove the easy direction first. Suppose $\exists \ v \in \bbR^n$ s.t. $g(i,j)=v[i]-v[j]$. First, for any $i \in [n]$, $g(i,i)=v[i]-v[i] = 0$. Second, for any permutation $\pi$,
\begin{align*}
    \sum_{i = 1}^n g(i, \pi(i)) &= \sum_{i = 1}^n \Big(v[i]-v[\pi(i)]\Big) 
    = \sum_{i = 1}^n v[i] - \sum_{i = 1}^n v[\pi(i)] 
    = 0
\end{align*}
since $\pi$ is a permutation and hence a bijection on $[n]$.

For the other direction of the claim, we define the vector $v \in \bbR^n$ as, 
\begin{align*}
    v[i] = g(i,n) \ \forall i \in [n]
\end{align*}

Let $\pi$ be any permutation and let $\pi'$ be such that it is identical to $\pi$ except for a swap on one pair of indices. That is, $\pi, \pi'$ are such that $\pi(i)=\pi'(i)$ for all $i \in [n] \setminus \{m,k\}$, $\pi(m)=\pi'(k)$ and $\pi(k)=\pi'(m)$ for some $m, k \in [n]$. Since the function $g$ satisfies $\sum_{i = 1}^n g(i, \pi(i)) = 0$ for all permutations $\pi$, we have: 
\begin{align*}
    \sum_{i = 1}^n g(i, \pi(i)) &= \sum_{i = 1}^n g(i, \pi'(i)) \\
    \text{i.e.} \sum_{i \in [n]\setminus\{m,k\}} g(i, \pi(i)) + g(m, \pi(m)) + g(k, \pi(k)) &= \sum_{i \in [n]\setminus\{m,k\}} g(i, \pi'(i)) + g(m, \pi'(m)) + g(k, \pi'(k)) \\
    \text{i.e. } g(m, \pi(m)) + g(k, \pi(k)) &= g(m, \pi(k)) + g(k, \pi(m)) 
\end{align*}

Thus for any $m,k \in [n]$, 
if we consider a particular permutation $\pi$ such that $\pi(m)=k$ and $\pi(k)=n$, the above equality reduces to 
\begin{align}
g(m, k) + g(k, n) &= g(m, n) + g(k,k)
\intertext{But since $g(k,k) = 0$, this yields}
 g(m, k) &= g(m, n) - g(k, n) = v[m] - v[k] \label{eq:linearswap}
\end{align}
as desired.
\end{proof}

Finally, combining Theorem \ref{thm:sp_equi_envyfree}, Lemma \ref{lem:function-properties-necessary}, Theorem \ref{thm:main_rank}, Proposition \ref{prop:rankfunction}, and Theorem \ref{thm:g-to-linear}, we obtain our principle characterization result.

\begin{theorem}
\label{thm:main-charac}
A \pairwise mechanism $\varphi^f$ is strategy-proof, envy-free, and neutral, if and only if there exists a vector $v \in [0,\frac{1}{n(n-1)}]^n$ with $v[k]>=v[k+1]$ $\forall k \in [n-1], v[n] = 0$, such that $\forall \succ, \succ' \in \calR$, $\forall a \in O$, $f(\succ,\succ',a) = v[rank(\succ,a)] - v[rank(\succ',a)]$, where $v[k]$ is the $k^{\text{th}}$  coordinate of $v$.
\end{theorem}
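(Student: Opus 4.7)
The plan is to chain together the ingredients already assembled in the paper and then add two short arguments — one for each direction — to finish. For the forward direction, suppose $\varphi^f$ is strategy-proof, envy-free, and neutral. Theorem \ref{thm:sp_equi_envyfree} lets me conflate SP and EF, so I can freely use either. By Lemma \ref{lemma:f_upper_lower_swap}, $f$ is sender-invariant and swap-monotone; by Lemma \ref{lem:function-properties-necessary}, $f$ is neutral and receiver-invariant; and by Proposition \ref{prop:necessary_properties} it has balanced transfers (so $\sum_a f(\succ,\succ',a)=0$, the ``constant $c$'' condition of Theorem \ref{thm:main_rank} with $c=0$). Hence Theorem \ref{thm:main_rank} applies and gives $rank(\succ',a)=rank(\succ'',a)\Rightarrow f(\succ,\succ',a)=f(\succ,\succ'',a)$. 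Combined with neutrality, Proposition \ref{prop:rankfunction} produces $g:[n]\times[n]\to\bbR$ with $f(\succ,\succ',a)=g(rank(\succ,a),rank(\succ',a))$. The no-transfers property forces $g(i,i)=0$, and balanced transfers translated through a rank bijection gives $\sum_i g(i,\pi(i))=0$ for every permutation $\pi$. Theorem \ref{thm:g-to-linear} then yields a vector $v\in\bbR^n$ with $v[n]=0$ and $g(i,j)=v[i]-v[j]$, i.e.\ $f(\succ,\succ',a)=v[rank(\succ,a)]-v[rank(\succ',a)]$.

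It remains, still in the forward direction, to pin down the range of $v$. For monotonicity, I apply swap-monotonicity (Lemma \ref{lemma:f_upper_lower_swap}): for $\succ'=\langle a_1,\dots,a_n\rangle$ and $\succ''\in\Gamma(\succ')$ swapping $a_k$ and $a_{k+1}$, pick any $\succ$ with $rank(\succ,a_k)=n$; then $f(\succ',\succ,a_k)\ge f(\succ'',\succ,a_k)$ reads $v[k]-v[n]\ge v[k+1]-v[n]$, so $v[k]\ge v[k+1]$. For the bounds, Proposition \ref{prop:necessary_properties}(4) gives $|f|\le \tfrac{1}{n(n-1)}$; taking $(rank(\succ,a),rank(\succ',a))=(1,n)$ gives $v[1]=v[1]-v[n]\le \tfrac{1}{n(n-1)}$, while monotonicity and $v[n]=0$ yield $v[k]\ge 0$. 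Thus $v\in[0,\tfrac{1}{n(n-1)}]^n$ with the required monotonicity and $v[n]=0$.

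For the converse, define $f(\succ,\succ',a):=v[rank(\succ,a)]-v[rank(\succ',a)]$ from any such $v$. Neutrality of $f$ is immediate because $f$ depends only on ranks. Feasibility must be checked: row sums equal $1$ because $\sum_{a\in O} v[rank(\succ,a)]=\sum_{k=1}^n v[k]$ regardless of the preference, so $\sum_{a} f(\succ_i,\succ_j,a)=0$; column sums equal $1$ by anti-symmetry of $f$; non-negativity of $P^{\bsucc}_{i,a}=\tfrac{1}{n}+\sum_{j\neq i} f(\succ_i,\succ_j,a)$ follows from $v\in[0,\tfrac{1}{n(n-1)}]^n$, since the sum is at least $\tfrac{1}{n}-(n-1)\cdot\tfrac{1}{n(n-1)}=0$. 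For envy-freeness, fix $i,j$ and set $\succ_i=\langle a_1,\dots,a_n\rangle$. A direct computation (cancelling the ``$\tfrac{1}{n}$'' terms and collapsing the shared summands $\sum_{m\neq i,j} v[rank(\succ_m,a_k)]$) yields
\begin{align*}
\sum_{k=1}^{t} P^{\bsucc}_{i,a_k}-\sum_{k=1}^{t} P^{\bsucc}_{j,a_k}
= n\sum_{k=1}^{t}\bigl(v[k]-v[rank(\succ_j,a_k)]\bigr).
\end{align*}
Because $v$ is non-increasing, $\sum_{k=1}^{t} v[k]$ is the sum of the $t$ largest entries of $v$, so it dominates $\sum_{k=1}^{t} v[rank(\succ_j,a_k)]$ (a sum of $t$ distinct entries of $v$). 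Hence the difference is $\ge 0$, giving envy-freeness, and Theorem \ref{thm:sp_equi_envyfree} upgrades this to strategy-proofness.

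The bulk of the proof is a bookkeeping synthesis of earlier results, so the only genuine work is (i) extracting the monotonicity $v[k]\ge v[k+1]$ from swap-monotonicity via the right test preference, and (ii) the one-line stochastic-dominance calculation in the converse that shows envy-freeness cleanly reduces to the monotonicity of $v$. I expect (ii) to be the step that requires the most care, because the cancellations among the cross-terms must be arranged so that only the top-$t$ partial sums of $v$ survive.
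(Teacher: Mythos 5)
Your proposal is correct and takes essentially the same route as the paper's proof: the forward direction chains Lemma \ref{lem:function-properties-necessary}, Theorem \ref{thm:main_rank}, Proposition \ref{prop:rankfunction} and Theorem \ref{thm:g-to-linear} exactly as in Appendix \ref{app:thm-pairwise-charac}, and the converse verifies feasibility and envy-freeness via the same top-$t$ dominance argument on the sorted vector $v$. The only cosmetic differences are that you spell out the sortedness of $v$ via swap-monotonicity (where the paper merely asserts that an unsorted $v$ violates strategy-proofness) and that you obtain strategy-proofness in the converse from Theorem \ref{thm:sp_equi_envyfree} instead of a second direct computation.
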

\begin{proof}[Proof Sketch]
Lemma \ref{lem:function-properties-necessary} to Theorem \ref{thm:g-to-linear} together imply that the transfer function for any neutral, strategy-proof, and envy-free \pairwise exchange mechanism must admit a simple linear representation. The additional restrictions on the vector follow from conditions necessary for the mechanism to be feasible. We include the formal proof of the theorem in Appendix \ref{app:thm-pairwise-charac}.
\end{proof}

We denote a strategy-proof, envy-free, and neutral \pairwise mechanism by $\varphi^v$, where $v$ is its associated vector in its representation in Theorem \ref{thm:main-charac}.

\subsection{Efficiency in Pairwise Exchange Mechanisms}
\label{sec:pareto-dominance}

Efficiency is an important consideration in designing house allocation mechanisms. Lemma \ref{lemma:ef_transfer_atleast_zero} shows that any envy-free pairwise exchange mechanism is at least as efficient as the equal division mechanism and thus our characterization demonstrates a large class of strategy-proof and envy-free mechanisms that all (weakly) dominate equal division.

Formally, a mechanism $\varphi$ weakly dominates another mechanism $\varphi'$, if for every profile $\bsucc \in \calR^n$ and any agent $i \in N$, $P_i \geq_i P'_i$, where $P=\varphi(\bsucc)$ and $P'=\varphi'(\bsucc)$. If in addition there exists a strict domination for atleast one preference profile, then we say $\varphi$ (strictly) dominates $\varphi'$. Let $\mathcal{F}$ denote a family of mechanisms. We say that a mechanism $\varphi \in \mathcal{F}$ is Pareto-efficient in $\mathcal{F}$ if no other mechanism $\varphi' \in \mathcal{F}$ dominates $\varphi$. 

It is easy to see from Theorem \ref{thm:main-charac} that if $v[1] > 0$ in the associated vector $v$, then the mechanism $\varphi^v$ (strictly) dominates the equal division mechanism. A question that we consider next is to find the set of mechanisms that are most efficient within the family of linear mechanisms. The following theorem gives a characterization of the set of Pareto-efficient linear mechanisms. We defer the proof of the following theorem to Appendix \ref{app:linear-pareto-efficient}.

\begin{theorem}
\label{thm:linear-pareto-efficient}
Let $\cF$ be the class of strategy-proof, envy-free, and neutral \pairwise mechanisms. A mechanism $\varphi^v \in \cF$ is Pareto-efficient in $\cF$ if and only the vector $v$ satisfies
$v[1] = \frac{1}{n(n-1)}$.
\end{theorem}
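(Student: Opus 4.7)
The plan is to work with the difference vector $w := v' - v$ for two candidate mechanisms $\varphi^v, \varphi^{v'} \in \cF$ and track the top-$t$ stochastic dominance gap
\begin{align*}
\Delta^t_i(\bsucc) \ :=\ \sum_{k=1}^t \Bigl[\bigl(\varphi^{v'}(\bsucc)\bigr)_{i,a_k} - \bigl(\varphi^{v}(\bsucc)\bigr)_{i,a_k}\Bigr] \ =\ (n-1)\,W(t) - \sum_{j \neq i} \sum_{k=1}^t w\bigl[rank(\succ_j, a_k)\bigr]
\end{align*}
for agent $i$ with preference $\succ_i = \langle a_1, \ldots, a_n\rangle$, where $W(t) := \sum_{k=1}^t w[k]$. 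This identity is immediate from the linear representation in Theorem \ref{thm:main-charac}. Weak dominance of $\varphi^{v'}$ over $\varphi^v$ is precisely $\Delta^t_i(\bsucc) \geq 0$ for all $(i, \bsucc, t)$.

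For the \emph{only if} direction, when $v[1] < \tfrac{1}{n(n-1)}$, I would explicitly construct a strict dominator by setting $v'[1] = v[1] + \epsilon$ and $v'[k] = v[k]$ for $k \geq 2$, with $\epsilon > 0$ small enough that $v'[1] \leq \tfrac{1}{n(n-1)}$; feasibility in $\cF$ is automatic since $v[1] \geq v[2]$ preserves monotonicity. With $w = (\epsilon, 0, \ldots, 0)$, the formula collapses to
\begin{align*}
\Delta^t_i(\bsucc) \ =\ \epsilon\bigl[(n-1) - \bigl|\{j \neq i : \sigma(\succ_j, 1) \in \{a_1, \ldots, a_t\}\}\bigr|\bigr] \ \geq\ 0,
\end{align*}
which is strict at any profile where some other agent's top choice lies outside $\{a_1, \ldots, a_t\}$ (e.g., $t=1$ at any contention-free profile).

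For the \emph{if} direction, I assume $v[1] = \tfrac{1}{n(n-1)}$ and that $\varphi^{v'}$ weakly dominates $\varphi^v$, aiming to prove $w \equiv 0$. The central step is adversarial profile selection: for fixed $(i, t)$ and agent $i$'s reference preference, each other agent $j$ can \emph{independently} choose $\succ_j$ so that $\{rank(\succ_j, a_k) : k \leq t\}$ equals any designated size-$t$ subset $R \subseteq [n]$. Taking $R$ to index the $t$ largest entries of $w$ uniformly for every $j \neq i$ reduces the dominance inequality to $W(t) \geq \sum_{k=1}^t w_{(k)}$, where $w_{(1)} \geq \cdots \geq w_{(n)}$ is the decreasing rearrangement of $w$. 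The reverse inequality is trivial, so equality holds for all $t$, forcing $w[k] = w_{(k)}$ and hence $w$ itself to be non-increasing. Since $w[n] = v'[n] - v[n] = 0$, we get $w \geq 0$ coordinatewise; the feasibility cap $v'[1] \leq \tfrac{1}{n(n-1)} = v[1]$ gives $w[1] \leq 0$, hence $w[1] = 0$; monotonicity then drives $w \equiv 0$, so $v' = v$. The main obstacle is this adversarial step: the crucial insight is that each $\succ_j$ is chosen \emph{independently}, so the worst-case subtracted term splits cleanly across $j$ and equals $(n-1)\sum_{k=1}^t w_{(k)}$, yielding monotonicity of $w$ in one blow rather than a tangle of pointwise constraints that would not directly pin down the structure of $w$.
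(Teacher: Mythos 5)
Your proposal is correct, and its ``only if'' half is essentially the paper's: both raise $v[1]$ (the paper jumps directly to $\frac{1}{n(n-1)}$, you add a small $\epsilon$) and verify weak dominance via the top-$t$ sums, with strictness at a contention-free profile. The ``if'' half, however, takes a genuinely different route. The paper squeezes the dominating vector $u$ pointwise: one family of profiles (all agents $j\neq i$ place agent $i$'s top object $a_1$ at rank $k$) combined with the cap $u[1]\leq v[1]=\frac{1}{n(n-1)}$ gives $u[k]\leq v[k]$, and a second family (all $j\neq i$ place $a_n$ at rank $k$) combined with $u[n]=v[n]=0$ gives $u[k]\geq v[k]$. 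You instead extract a structural fact about the difference $w=v'-v$: by having every $j\neq i$ place the top-$t$ objects of $\succ_i$ exactly at the ranks of the $t$ largest entries of $w$ (a different profile for each $t$, which is fine since weak dominance is imposed profile-by-profile), the dominance inequality becomes $W(t)\geq\sum_{k\leq t}w_{(k)}$, whose reverse is automatic, so $w$ equals its decreasing rearrangement; the endpoint constraints $w[n]=0$ and $w[1]\leq 0$ then force $w\equiv 0$. Both arguments lean on the same two boundary facts ($v[1]$ at the cap, $v[n]=0$), but the paper's pointwise squeeze is more elementary, while your rearrangement step packages all profile constructions into one and makes transparent that weak dominance alone (no strictness needed) already pins down the vector, which is a slightly cleaner way to see Pareto-efficiency within $\cF$.
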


\section{Conclusion}
\label{sec:conclusions}
In this work, we focused on strategy-proof and envy-free random assignment mechanisms for the house allocation problem. We first defined a very weak notion of efficiency, called contention-free efficiency, and showed that within the class of strategy-proof mechanisms, envy-freeness is incompatible even with this weak form of efficiency. Next, we considered a broad class of mechanisms called Pairwise Exchange mechanisms, provided an axiomatic characterization for this class and showed that strategy-proofness is equivalent to envy-freeness within this class. In a \pairwise mechanism, agents are initially allocated all objects equally and then every pair of agents exchange their corresponding shares, where the amount of any object exchanged is dictated by a transfer function that depends only on the preferences of agents involved in the exchange. We characterized the set of all strategy-proof, envy-free, and neutral mechanisms within this class and showed that such mechanisms admit a simple linear representation.

Given the trade-offs between incentives, fairness and efficiency, characterizing the Pareto efficient frontier within the class of strategy-proof and envy-free mechanisms is an important question. Indeed, by restricting transfer functions to depend only on the preferences of the agents involved in the exchange, the mechanisms we characterize do not allow us to reach this frontier. However, we consider our work as a first step towards answering that question. In future work, we plan to study more general transfer functions and identify the structure of strategy-proof and envy-free mechanisms admitting such functions.

\bibliography{bibliography}

@article{chambers2004consistency,
  title={Consistency in the probabilistic assignment model},
  author={Chambers, Christopher P},
  journal={Journal of Mathematical Economics},
  volume={40},
  number={8},
  pages={953--962},
  year={2004},
  publisher={Elsevier}
}

@article{muller1977equivalence,
  title={The equivalence of strong positive association and strategy-proofness},
  author={Muller, Eitan and Satterthwaite, Mark A},
  journal={Journal of Economic Theory},
  volume={14},
  number={2},
  pages={412--418},
  year={1977},
  publisher={Academic Press}
}

@article{liu2016ordinal,
  title={Ordinal efficiency, fairness, and incentives in large markets},
  author={Liu, Qingmin and Pycia, Marek},
  journal={Fairness, and Incentives in Large Markets (August 1, 2016)},
  year={2016}
}

@article{noda2018large,
  title={Large matchings in large markets with flexible supply},
  author={Noda, Shunya},
  journal={Available at SSRN 3215670},
  year={2018}
}

@article{bogomolnaia2012probabilistic,
  title={Probabilistic assignment of objects: Characterizing the serial rule},
  author={Bogomolnaia, Anna and Heo, Eun Jeong},
  journal={Journal of Economic Theory},
  volume={147},
  number={5},
  pages={2072--2082},
  year={2012},
  publisher={Elsevier}
}

@article{heo2014extended,
  title={The extended serial correspondence on a rich preference domain},
  author={Heo, Eun Jeong},
  journal={International Journal of Game Theory},
  volume={43},
  number={2},
  pages={439--454},
  year={2014},
  publisher={Springer}
}

@article{heo2014probabilistic,
  title={Probabilistic assignment problem with multi-unit demands: A generalization of the serial rule and its characterization},
  author={Heo, Eun Jeong},
  journal={Journal of Mathematical Economics},
  volume={54},
  pages={40--47},
  year={2014},
  publisher={Elsevier}
}

@article{hashimoto2014two,
  title={Two axiomatic approaches to the probabilistic serial mechanism},
  author={Hashimoto, Tadashi and Hirata, Daisuke and Kesten, Onur and Kurino, Morimitsu and {\"U}nver, M Utku},
  journal={Theoretical Economics},
  volume={9},
  number={1},
  pages={253--277},
  year={2014},
  publisher={Wiley Online Library}
}

@article{heo2015characterization,
  title={A characterization of the extended serial correspondence},
  author={Heo, Eun Jeong and Y{\i}lmaz, {\"O}zg{\"u}r},
  journal={Journal of Mathematical Economics},
  volume={59},
  pages={102--110},
  year={2015},
  publisher={Elsevier}
}

@article{papai2000strategyproof,
  title={Strategyproof assignment by hierarchical exchange},
  author={P{\'a}pai, Szilvia},
  journal={Econometrica},
  volume={68},
  number={6},
  pages={1403--1433},
  year={2000},
  publisher={Wiley Online Library}
}

@article{pycia2017incentive,
  title={Incentive compatible allocation and exchange of discrete resources},
  author={Pycia, Marek and {\"U}nver, M Utku},
  journal={Theoretical Economics},
  volume={12},
  number={1},
  pages={287--329},
  year={2017},
  publisher={Wiley Online Library}
}

@article{satterthwaite1981strategy,
  title={Strategy-proof allocation mechanisms at differentiable points},
  author={Satterthwaite, Mark A and Sonnenschein, Hugo},
  journal={The Review of Economic Studies},
  volume={48},
  number={4},
  pages={587--597},
  year={1981},
  publisher={JSTOR}
}

@article{svensson1994queue,
  title={Queue allocation of indivisible goods},
  author={Svensson, Lars-Gunnar},
  journal={Social Choice and Welfare},
  volume={11},
  number={4},
  pages={323--330},
  year={1994},
  publisher={Springer}
}

@article{svensson1999strategy,
  title={Strategy-proof allocation of indivisible goods},
  author={Svensson, Lars-Gunnar},
  journal={Social Choice and Welfare},
  volume={16},
  number={4},
  pages={557--567},
  year={1999},
  publisher={Springer}
}

@article{shapley1974cores,
  title={On cores and indivisibility},
  author={Shapley, Lloyd and Scarf, Herbert},
  journal={Journal of mathematical economics},
  volume={1},
  number={1},
  pages={23--37},
  year={1974},
  publisher={Elsevier}
}

@article{gale1987,
  title={College course assignments and optimal lotteries},
  author={Gale, David},
  journal={mimeo},
  year={1987},
  publisher={University of California, Berkeley}
}

@article{zhou1990conjecture,
  title={On a conjecture by Gale about one-sided matching problems},
  author={Zhou, Lin},
  journal={Journal of Economic Theory},
  volume={52},
  number={1},
  pages={123--135},
  year={1990},
  publisher={Elsevier}
}

@article{hylland1979efficient,
  title={The efficient allocation of individuals to positions},
  author={Hylland, Aanund and Zeckhauser, Richard},
  journal={Journal of Political economy},
  volume={87},
  number={2},
  pages={293--314},
  year={1979},
  publisher={The University of Chicago Press}
}

@article{jackson2007envy,
  title={Envy-freeness and implementation in large economies},
  author={Jackson, Matthew O and Kremer, Ilan},
  journal={Review of Economic Design},
  volume={11},
  number={3},
  pages={185--198},
  year={2007},
  publisher={Springer}
}

@book{ahuja1988network,
  title={Network flows},
  author={Ahuja, Ravindra K and Magnanti, Thomas L and Orlin, James B},
  year={1988},
  publisher={Cambridge, Mass.: Alfred P. Sloan School of Management, Massachusetts}
}

@article{harless2019efficient,
  title={Efficient Mixtures of Priority Rules for Assigning Objects},
  author={Harless, Patrick and Phan, William},
  journal={Manuscript},
  year={2019}
}

@article{kesten2009popular,
  title={Why do popular mechanisms lack efficiency in random environments?},
  author={Kesten, Onur},
  journal={Journal of Economic Theory},
  volume={144},
  number={5},
  pages={2209--2226},
  year={2009},
  publisher={Elsevier}
}

@article{foley1967resource,
  title={Resource allocation and the public sector.},
  journal={Yale Economics Essays},
  author={Foley, Duncan K},
  year={1967}
}

@book{robertson1998cake,
  title={Cake-cutting algorithms: Be fair if you can},
  author={Robertson, Jack and Webb, William},
  year={1998},
  publisher={AK Peters/CRC Press}
}

@article{edward1999rental,
  title={Rental harmony: Sperner's lemma in fair division},
  author={Edward Su, Francis},
  journal={The American mathematical monthly},
  volume={106},
  number={10},
  pages={930--942},
  year={1999},
  publisher={Taylor \& Francis}
}

@article{abdulkadirouglu1998random,
  title={Random serial dictatorship and the core from random endowments in house allocation problems},
  author={Abdulkadiro{\u{g}}lu, Atila and S{\"o}nmez, Tayfun},
  journal={Econometrica},
  volume={66},
  number={3},
  pages={689--701},
  year={1998},
  publisher={JSTOR}
}

@article{mennle2014partial,
  title={Partial strategyproofness: Relaxing strategyproofness for the random assignment problem},
  author={Mennle, Timo and Seuken, Sven},
  journal={arXiv preprint arXiv:1401.3675},
  year={2014}
}

@article{bogomolnaia2001new,
  title={A new solution to the random assignment problem},
  author={Bogomolnaia, Anna and Moulin, Herv{\'e}},
  journal={Journal of Economic theory},
  volume={100},
  number={2},
  pages={295--328},
  year={2001},
  publisher={Elsevier}
}

@article{nesterov2017fairness,
  title={Fairness and efficiency in strategy-proof object allocation mechanisms},
  author={Nesterov, Alexander S},
  journal={Journal of Economic Theory},
  volume={170},
  pages={145--168},
  year={2017},
  publisher={Elsevier}
}

@article{abdulkadirouglu2003school,
  title={School choice: A mechanism design approach},
  author={Abdulkadiro{\u{g}}lu, Atila and S{\"o}nmez, Tayfun},
  journal={American economic review},
  volume={93},
  number={3},
  pages={729--747},
  year={2003}
}

@article{chen2002improving,
  title={Improving efficiency of on-campus housing: An experimental study},
  author={Chen, Yan and S{\"o}nmez, Tayfun},
  journal={American Economic Review},
  volume={92},
  number={5},
  pages={1669--1686},
  year={2002}
}

@article{roth2005pairwise,
  title={Pairwise kidney exchange},
  author={Roth, Alvin E and S{\"o}nmez, Tayfun and {\"U}nver, M Utku},
  journal={Journal of Economic theory},
  volume={125},
  number={2},
  pages={151--188},
  year={2005},
  publisher={Elsevier}
}

@article{budish2011combinatorial,
  title={The combinatorial assignment problem: Approximate competitive equilibrium from equal incomes},
  author={Budish, Eric},
  journal={Journal of Political Economy},
  volume={119},
  number={6},
  pages={1061--1103},
  year={2011},
  publisher={University of Chicago Press Chicago, IL}
}

@inproceedings{mennle2014axiomatic,
  title={An axiomatic approach to characterizing and relaxing strategyproofness of one-sided matching mechanisms},
  author={Mennle, Timo and Seuken, Sven},
  booktitle={Proceedings of the fifteenth ACM conference on Economics and computation},
  pages={37--38},
  year={2014}
}

@article{martini2016strategy,
  title={Strategy-proof and fair assignment is wasteful},
  author={Martini, Giorgio},
  journal={Games and Economic Behavior},
  volume={98},
  pages={172--179},
  year={2016},
  publisher={Elsevier}
}

@article{moulin2019fair,
  title={Fair division in the internet age},
  author={Moulin, Herv{\'e}},
  journal={Annual Review of Economics},
  volume={11},
  pages={407--441},
  year={2019},
  publisher={Annual Reviews}
}

@article{che2010asymptotic,
  title={Asymptotic equivalence of probabilistic serial and random priority mechanisms},
  author={Che, Yeon-Koo and Kojima, Fuhito},
  journal={Econometrica},
  volume={78},
  number={5},
  pages={1625--1672},
  year={2010},
  publisher={Wiley Online Library}
}
\begin{appendix}
\section{Proof of Theorem \ref{thm:main_hardness} for $n>3$}
\label{app:hardness}

Let $N=\{1,2,3, \dots n\}$ and $O=\{a_1, a_2, a_3, \dots a_n\}$ denote the set of agents and objects respectively. Let us again suppose for contradiction that there exists a mechanism $\varphi$ that is strategy-proof, envy-free and contention-free efficient.  We consider a family of preference profiles $\mathcal{F}^n$, where agents $1$, $2$ and $3$ prefer objects $a_1$, $a_2$ and $a_3$, in any order, to $\langle a_4, a_5, \dots a_n\rangle$, while every agent $i>3$ has the preference relation  $\succ_i= \langle a_i,\ldots a_n, a_1, a_2, \ldots, a_{i-1} \rangle$. Table \ref{tab:familyprofiles} illustrates this family of profiles. The $\star$ in the table refers to any ordering of objects $a_1, a_2,$ and $a_3$. 

\begin{table}[h]
        \centering
        \begin{tabular}{llllllllll}
        \\
        \toprule
1 &  &  &  & $a_4$ & $a_5$ & \ldots & \ldots & \ldots & $a_n$ \\
2 &  & $\star$ &  & $a_4$ & $a_5$ & \ldots & \ldots & \ldots & $a_n$ \\
3 &  &  &  & $a_4$ & $a_5$ & \ldots & \ldots & \ldots & $a_n$ \\
4 & $a_4$ & $a_5$ & $a_6$ & \ldots  & \ldots & \ldots & $a_1$ &$a_2$ &$a_3$ \\
5 & $a_5$ & $a_6$ & $a_7$ & \ldots & \ldots & \ldots & $a_2$ & $a_3$ & $a_4$ \\
\vdots & \vdots & \vdots & \vdots & \vdots  & \vdots &\vdots & \vdots & \vdots & \vdots \\
n & $a_n$ & $a_1$ & $a_2$ & \ldots & \ldots & \ldots & $a_{n-3}$ & $a_{n-2}$ & $a_{n-1}$ \\
        \bottomrule
       \end{tabular}
       \caption{Family of Preference Profiles $\mathcal{F}^n$}
       \label{tab:familyprofiles}
\end{table}

We first prove the following lemma to show that if the mechanism $\varphi$ satisfies all the three premises of the theorem, then in every preference profile in $\mathcal{F}^n$ agents $1, 2,$ and $3$ are assigned a zero probability of receiving any object from $\{a_4, a_5, \ldots, a_n\}$.

\begin{lemma}\label{reduction_lemma}
If mechanism $\varphi$ is strategy-proof, envy-free and contention-free efficient, then for every $\boldsymbol{\succ} \in \mathcal{F}^n$, $P_{i,a_j}=0$ for $i \in \{1,2,3\}$ and $j \in \{4,\dots n\}$ where $P = \varphi(\bsucc)$.
\end{lemma}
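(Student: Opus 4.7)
The plan is to combine the three hypotheses as follows: contention-free efficiency pins down a deterministic outcome at contention-free profiles, strategy-proofness (via the lower-invariance consequence of Lemma \ref{lemma:sp_swap_upper_lower}) transports that conclusion along a chain of single-agent top-$3$ swaps, and envy-freeness then propagates the conclusion from one agent in $\{1,2,3\}$ to all three.

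Fix any $\bsucc \in \mathcal{F}^n$ and pick an arbitrary distinguished agent $i \in \{1,2,3\}$. The other two agents of $\{1,2,3\}\setminus\{i\}$ together use at most two of the three elements of $\{a_1,a_2,a_3\}$ as their top choices, so I can choose an alternative preference $\succ^{\ast}_i$ whose top is an unused element of $\{a_1,a_2,a_3\}$, with the remaining positions being some permutation of the other two elements of $\{a_1,a_2,a_3\}$ followed by $\langle a_4, a_5, \ldots, a_n\rangle$. The auxiliary profile $\bsucc^{\ast} := (\succ^{\ast}_i, \bsucc_{-i})$ then lies in $\mathcal{F}^n$ and is contention-free: agents $4, \ldots, n$ already have the pairwise distinct tops $a_4, \ldots, a_n$ (disjoint from $\{a_1,a_2,a_3\}$), and by construction the three agents of $\{1,2,3\}$ now have three distinct tops inside $\{a_1,a_2,a_3\}$. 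Contention-free efficiency therefore gives $P^{\bsucc^{\ast}}_{i,a_j} = 0$ for every $j \geq 4$.

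To transport this equality from $\bsucc^{\ast}$ to $\bsucc$, I connect $\succ^{\ast}_i$ and $\succ_i$ via a sequence of adjacent transpositions restricted to positions $1$--$2$ or $2$--$3$. Such a sequence exists because the symmetric group on three letters is generated by adjacent transpositions, and because both $\succ^{\ast}_i$ and $\succ_i$ share the tail $\langle a_4, \ldots, a_n\rangle$, every intermediate preference retains the $\mathcal{F}^n$-form, so every intermediate profile remains in $\mathcal{F}^n$. At each swap the two exchanged objects $a, b$ both lie in $\{a_1,a_2,a_3\}$, hence $U(\succ, b) \subseteq \{a_1,a_2,a_3\}$ and lower-invariance (Lemma \ref{lemma:sp_swap_upper_lower}) leaves $P_{i,a_j}$ unchanged for every $j \geq 4$. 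Composing these invariances yields $P^{\bsucc}_{i,a_j} = P^{\bsucc^{\ast}}_{i,a_j} = 0$ for all $j \geq 4$.

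Finally, envy-freeness extends the conclusion from agent $i$ to all of $\{1,2,3\}$: since every agent in $\{1,2,3\}$ shares the top-$3$ set $\{a_1,a_2,a_3\}$ at $\bsucc$, first-order stochastic dominance applied at the third-ranked object of each such agent in both directions forces $\sum_{k=1}^3 P^{\bsucc}_{1,a_k} = \sum_{k=1}^3 P^{\bsucc}_{2,a_k} = \sum_{k=1}^3 P^{\bsucc}_{3,a_k}$, and by the previous step this common value equals $1$. Hence every agent in $\{1,2,3\}$ places all her mass on $\{a_1,a_2,a_3\}$, i.e.\ $P^{\bsucc}_{i',a_j} = 0$ for all $i' \in \{1,2,3\}$ and $j \geq 4$, as desired. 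The main conceptual obstacle is that strategy-proofness only preserves the swapping agent's own allocation along the chain and offers no control over the other two agents' allocations; a simultaneous transport for all three agents via a single chain is therefore not available, and envy-freeness is indispensable for the lateral propagation from the distinguished agent to the other two.
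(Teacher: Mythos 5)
Your construction of the auxiliary profile breaks down in a case the lemma must cover. You change only the distinguished agent $i$'s preference and assert that in $\bsucc^{\ast}=(\succ^{\ast}_i,\bsucc_{-i})$ the three agents of $\{1,2,3\}$ have distinct top choices. But the two agents of $\{1,2,3\}\setminus\{i\}$ keep their preferences from $\bsucc$, and nothing in the definition of $\mathcal{F}^n$ prevents them from sharing the same top object; for instance, all three of agents $1,2,3$ may rank $a_1$ first. In that case $\bsucc^{\ast}$ is not contention-free (the two untouched agents still collide), contention-free efficiency tells you nothing at $\bsucc^{\ast}$, and your transport argument has no starting point. Worse, when all three agents share a top, no choice of distinguished agent and no single-agent modification yields a contention-free profile, so the gap cannot be repaired by picking $i$ more carefully: at least two of the three preferences must be changed, and then you need to walk more than one agent back to $\bsucc$, which your one-agent chain cannot do.

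This is precisely where the paper's proof proceeds differently. It anchors the argument at a reference profile $\bsucc^{(0)} \in \mathcal{F}^n$ in which agents $1,2,3$ have the distinct tops $a_1,a_2,a_3$ (hence genuinely contention-free), and then changes agents $1$, $2$, $3$ one at a time to their preferences in $\bsucc$. At each of the three steps, strategy-proofness applied at the upper contour set $\{a_1,a_2,a_3\}$ (the top-three set of both the old and the new preference of the mover) keeps the mover's total mass on $\{a_1,a_2,a_3\}$ equal to $1$, and envy-freeness is invoked immediately at that same step to push the other two agents' mass on $\{a_1,a_2,a_3\}$ back up to $1$, so the zero tail is maintained for all three agents along the whole chain. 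Your ingredients are individually sound — the lower-invariance bookkeeping along adjacent swaps inside the top three positions is correct, and so is the final envy-freeness propagation at each agent's third-ranked object — but the argument needs a multi-agent chain with envy-freeness re-applied after every move, not a single-agent chain anchored at a profile that may fail to be contention-free.
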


\begin{proof}
Consider any $\boldsymbol{\succ}=(\succ_1,\succ_2,\succ_3,\boldsymbol{\succ}_{N\setminus \{1,2,3\}}) \in \mathcal{F}^N$. Consider also a preference profile $\boldsymbol{\succ}^{(0)}= (\succ_1^{(0)},\succ_2^{(0)},\succ_3^{(0)},\boldsymbol{\succ}_{N\setminus \{1,2,3\}})\in \mathcal{F}^N$ where agents $1, 2, $ and $3$ prefer $a_1, a_2, $ and $a_3$ as their top choice respectively. Notice that $\boldsymbol{\succ}^{(0)}$ is a contention-free preference profile. Since $\varphi$ is contention-free efficient, $P_{1,a_1}^{(0)}=P_{2,a_2}^{(0)}=P_{3,a_3}^{(0)}=1$, which implies that $P_{i,a_j}^{(0)}=0$ for $i \in \{1,2,3\}$ and $j \in \{4,5,\dots n\}$.

We now consider a sequence of preference profiles \{$\boldsymbol{\succ}^{(1)}$,$\boldsymbol{\succ}^{(2)}$,$\boldsymbol{\succ}^{(3)}=\boldsymbol{\succ}$\} where agents $1, 2, $ and 3 successively report their preference in $\boldsymbol{\succ}$. That is, $\boldsymbol{\succ}^{(1)}= (\succ_1,\succ_2^{(0)},\succ_3^{(0)},\boldsymbol{\succ}_{N\setminus \{1,2,3\}})$,  $\boldsymbol{\succ}^{(2)}= (\succ_1,\succ_2,\succ_3^{(0)},\boldsymbol{\succ}_{N\setminus \{1,2,3\}})$, $\boldsymbol{\succ}^{(3)}= \bsucc = (\succ_1,\succ_2,\succ_3,\boldsymbol{\succ}_{N\setminus \{1,2,3\}})$. Observe that between $\boldsymbol{\succ}^{(0)}$ and $\boldsymbol{\succ}^{(1)}$, only agent $1$'s preferences differ. But in both cases, agent $1$ prefers objects $\{a_1, a_2, a_3\}$ over all the other objects. Since $\varphi$ is strategy-proof, it must be that 
\begin{align*}
    P_{1,a_1}^{(1)}+P_{1,a_2}^{(1)}+P_{1,a_3}^{(1)}&=P_{1,a_1}^{(0)}+P_{1,a_2}^{(0)}+P_{1,a_3}^{(0)}=1 \\
    \implies P_{1,a_j}^{(1)}&=0 \text{ for } j \in \{4,5,\dots n\}
\end{align*}
Now, since $\varphi$ is also envy-free, for agents $1, 2, $ and $3$ to not envy each other, we must have,
\begin{align*}
    P_{2,a_1}^{(1)}+P_{2,a_2}^{(1)}+P_{2,a_3}^{(1)}&=P_{3,a_1}^{(1)}+P_{3,a_2}^{(1)}+P_{3,a_3}^{(1)} 
    =P_{1,a_1}^{(1)}+P_{1,a_2}^{(1)}+P_{1,a_3}^{(1)}
    =1 \\
    \implies P_{2,a_j}^{(1)}=P_{3,a_j}^{(1)} &= 0 \text{ for } j \in \{4,5,\dots n\}
\end{align*}
We can apply the same arguments as we move from $\boldsymbol{\succ}^{(1)}$ to $\boldsymbol{\succ}^{(2)}$ where only agent $2$ changes her report and from  $\boldsymbol{\succ}^{(2)}$ to $\boldsymbol{\succ}^{(3)}$ where only agent $3$ reports different preferences. In each instance, it will always be that $P_{i,a_j}=0$ for $i \in \{1,2,3\}$ and $j \in \{4,5,\dots n\}$.
\end{proof}

Lemma \ref{reduction_lemma} implies that the assignment problem for $n>3$ can be reduced to a problem on the first three agents. Since $\varphi$ satisfies the premises of the theorem, we can define a new mechanism $\varphi'$ for $n=3$ by restricting to the domain of $\mathcal{F}^n$. Formally, let $\boldsymbol{\succ'}=(\succ'_1,\succ'_2,\succ'_3) \in \mathcal{R}^3$ be a preference profile for the first three agents in the reduced problem and let $\boldsymbol{\succ}=(\succ_1,\succ_2,\succ_3, \boldsymbol{\succ}_{N\setminus\{1,2,3\}}) \in \mathcal{F}^n$ be a preference profile for $n$ agents such that each agent $i \in \{1,2,3\}$ has the preference relation with objects $a_1, a_2,$ and $a_3$ preferred according to $\succ'_i$ followed by $\langle a_4, a_5,\dots a_n\rangle$. In the random assignment $P=\varphi(\boldsymbol{\succ})$, Lemma \ref{reduction_lemma} implies that $P_{i,a_j}=0$ for $i \in \{1,2,3\}$ and $j \in \{4,5,\dots n\}$. Therefore, we can define $P'=\varphi(\boldsymbol{\succ'})$ as $P'=[P_i]_{i \in \{1,2,3\}}$, which is a valid random assignment. However, since $\varphi$ is contention-free, envy-free, and strategy-proof, $\varphi'$ also satisfies all the three properties and thus contradicts Lemma \ref{theorem:impossibility_=3}. This completes the proof of Theorem~\ref{thm:main_hardness}.

\section{Proof of Theorem \ref{thm:strong_hardness}}
\label{sec:strong_hardness}

Let $N=\{1,2,3\}$ and $O=\{a,b,c\}$. Suppose for contradiction that the mechanism fully allocates an object to an agent at some profile. Let us suppose, without loss, that this is agent $1$ and her preference at this profile is $a \succ_1 b \succ_1 c$. It must be that the object fully allocated to agent $1$ is her top choice, i.e., object $a$, otherwise she will envy either of the other two agents who will receive a non-zero probability of being assigned object $a$. Also, in this profile agents $2$ and $3$ cannot have object $a$ as their most preferred, as any resulting assignment would violate envy-freeness. Therefore, we need to consider two cases based on the objects most preferred by agents $2$ and $3$.

\emph{Case 1: Agents $2$ and $3$ prefer the same object, $b$ or $c$, as their top choice.}

Since both agents prefer the same object, it must be that both have object $a$ as their least preferred as otherwise any resulting assignment would not envy-free for either agent $2$ or $3$. We, first, consider the sub-case where both agents have preference $b \succ c \succ a$. We call this preference profile, Profile $A$. 

\emph{Case 1a: Agents $2$ and $3$  prefer $b \succ c \succ a$.}

Beginning with Profile $A$, we consider a series of eight profiles that will lead to us a contradiction. The profiles and their corresponding assignments are shown in Table $\ref{tab:apptable_profiles}$ and $\ref{tab:apptable_assignment}$ respectively.

\begin{table}[h]
    \begin{subtable}[h]{0.23\textwidth}
        \centering
        \begin{tabular}{llll}
        \\
        \toprule
1 & a & b & c \\
2 & b & c & a \\
3 & b & c & a \\
        \bottomrule
       \end{tabular}
       \caption{Profile A}
       \label{tab:appprofile_a}
    \end{subtable}
    \begin{subtable}[h]{0.23\textwidth}
        \centering
        \begin{tabular}{llll}
        \\
        \toprule
1 & a & c & b \\
2 & b & c & a \\
3 & b & c & a \\
        \bottomrule
       \end{tabular}
       \caption{Profile B}
       \label{tab:appprofile_b}
    \end{subtable}
    \begin{subtable}[h]{0.23\textwidth}
        \centering
        \begin{tabular}{llll}
        \\
        \toprule
1 & c & a & b \\
2 & b & c & a \\
3 & b & c & a \\
        \bottomrule
       \end{tabular}
       \caption{Profile C}
       \label{tab:appprofile_c}
    \end{subtable}
    \begin{subtable}[h]{0.23\textwidth}
        \centering
        \begin{tabular}{llll}
        \\
        \toprule
1 & c & b & a \\
2 & b & c & a \\
3 & b & c & a \\
        \bottomrule
       \end{tabular}
       \caption{Profile D}
       \label{tab:appprofile_d}
    \end{subtable}\\
    \begin{subtable}[h]{0.23\textwidth}
        \centering
        \begin{tabular}{llll}
        \\
        \toprule
1 & a & b & c \\
2 & b & a & c \\
3 & b & c & a \\
        \bottomrule
       \end{tabular}
       \caption{Profile E}
       \label{tab:appprofile_e}
    \end{subtable}
    \begin{subtable}[h]{0.23\textwidth}
        \centering
        \begin{tabular}{llll}
        \\
        \toprule
1 & c & b & a \\
2 & b & c & a \\
3 & b & a & c \\
        \bottomrule
       \end{tabular}
       \caption{Profile F}
       \label{tab:appprofile_f}
    \end{subtable}
    \begin{subtable}[h]{0.23\textwidth}
        \centering
        \begin{tabular}{llll}
        \\
        \toprule
1 & c & a & b \\
2 & c & b & a \\
3 & b & c & a \\
        \bottomrule
       \end{tabular}
       \caption{Profile G}
       \label{tab:appprofile_g}
    \end{subtable}    \begin{subtable}[h]{0.23\textwidth}
        \centering
        \begin{tabular}{llll}
        \\
        \toprule
1 & a & c & b \\
2 & c & b & a \\
3 & b & c & a \\
        \bottomrule
       \end{tabular}
       \caption{Profile H}
       \label{tab:appprofile_h}
    \end{subtable}
    \caption{Eight Preference Profiles In Case (1a)}
     \label{tab:apptable_profiles}
\end{table}

\begin{table}[h]
    \begin{subtable}[h]{0.3\textwidth}
        \centering
        \begin{tabular}{llll}
 & a & b & c \\
\bottomrule
1 & 1 & 0 & 0 \\
2 & 0 & \nicefrac{1}{2} & \nicefrac{1}{2} \\
3 & 0 & \nicefrac{1}{2} & \nicefrac{1}{2} \\
        \bottomrule
       \end{tabular}
       \caption{Profile A}
       \label{tab:appalloc_a}
    \end{subtable}
    \begin{subtable}[h]{0.3\textwidth}
        \centering
        \begin{tabular}{llll}
 & a & b & c \\
\bottomrule
1 & 1 & 0 & 0 \\
2 & 0 & \nicefrac{1}{2} & \nicefrac{1}{2} \\
3 & 0 & \nicefrac{1}{2} & \nicefrac{1}{2} \\
        \bottomrule
       \end{tabular}
       \caption{Profile B}
       \label{tab:appalloc_b}
    \end{subtable}
    \begin{subtable}[h]{0.3\textwidth}
        \centering
        \begin{tabular}{llll}
 & a & b & c \\
\bottomrule
1 & $1-x$ & 0 & $x$ \\
2 & $\nicefrac{x}{2}$ & $\nicefrac{1}{2}$ & $\nicefrac{1}{2}-\nicefrac{x}{2}$ \\
3 & $\nicefrac{x}{2}$ & $\nicefrac{1}{2}$ & $\nicefrac{1}{2}-\nicefrac{x}{2}$ \\
        \bottomrule
       \end{tabular}
       \caption{Profile C}
       \label{tab:appalloc_c}
    \end{subtable}\\
    \begin{subtable}[h]{0.3\textwidth}
        \centering
        \begin{tabular}{llll}
 & a & b & c \\
\bottomrule
1 & $\nicefrac{1}{3}$ & $\nicefrac{2}{3}-x$ & $x$ \\
2 & $\nicefrac{1}{3}$ & $\nicefrac{1}{6}+\nicefrac{x}{2}$ & \\
3 & $\nicefrac{1}{3}$ & $\nicefrac{1}{6}+\nicefrac{x}{2}$ & \\
        \bottomrule
       \end{tabular}
       \caption{Profile D}
       \label{tab:appalloc_d}
    \end{subtable}
    \begin{subtable}[h]{0.3\textwidth}
        \centering
        \begin{tabular}{llll}
 & a & b & c \\
\bottomrule
1 & $1-y$ & $0$ & $y$ \\
2 & & $\nicefrac{1}{2}$ & $y$ \\
3 & & $\nicefrac{1}{2}$ & \\
        \bottomrule
       \end{tabular}
       \caption{Profile E}
       \label{tab:appalloc_e}
    \end{subtable}
    \begin{subtable}[h]{0.3\textwidth}
        \centering
        \begin{tabular}{llll}
 & a & b & c \\
\bottomrule
1 & $y$ & $0$ & $1-y$ \\
2 & $y$ & $\nicefrac{1}{2}$ & \\
3 & & $\nicefrac{1}{2}$ & \\
        \bottomrule
       \end{tabular}
       \caption{Profile F}
       \label{tab:appalloc_f}
    \end{subtable}
    \begin{subtable}[h]{0.3\textwidth}
        \centering
        \begin{tabular}{llll}
 & a & b & c \\
\bottomrule
1 & $\nicefrac{1}{3}$ & $\nicefrac{2}{3}-z$ & $z$ \\
2 & $\nicefrac{1}{3}$ & $\nicefrac{2}{3}-z$ & $z$ \\
3 & $\nicefrac{1}{3}$ &  & \\
        \bottomrule
       \end{tabular}
       \caption{Profile G}
       \label{tab:appalloc_g}
    \end{subtable}
    \begin{subtable}[h]{0.3\textwidth}
        \centering
        \begin{tabular}{llll}
 & a & b & c \\
\bottomrule
1 & $1$ & $\nicefrac{2}{3}-z$ & \\
2 & $0$ & &  \\
3 & $0$ & &  \\
        \bottomrule
       \end{tabular}
       \caption{Profile H}
       \label{tab:appalloc_h}
    \end{subtable}
     \caption{Assignment for the Eight Preference Profiles in Case (1b)}
     \label{tab:apptable_assignment}
\end{table}

\begin{description}
\item[Profile A ($\succ^{(A)}$):]
In this profile, we start off with $P^{(A)}_{1,a}=1$. Since envy-freeness implies equal treatment of equals, the only possible allocation for agents $2$ and $3$ is $P^{(A)}_{2,b}=P^{(A)}_{2,c}=P^{(A)}_{3,b}=P^{(A)}_{3,c}=\frac{1}{2}$.

\item[Profile B ($\succ^{(B)}$):]
Next, suppose agent $1$ reports $a \succ^{(B)}_1 c \succ^{(B)}_1 b$ whereas agents $2$ and $3$ maintain same preferences as in Profile A. For agent $1$ to not gain by mis-reporting her preferences between Profile B and A, we must have $P^{(B)}_{1,a}=1$. Again, by equal treatment of equals, $P^{(A)}_{2,b}=P^{(A)}_{2,c}=P^{(A)}_{3,b}=P^{(A)}_{3,c}=\frac{1}{2}$.

\item[Profile C ($\succ^{(C)}$):]
If agent $1$ chooses to instead report $c \succ^{(C)}_1 a \succ^{(C)}_1 b$, then strategy-proofness between Profiles B and C implies that $P^{(C)}_{1,b}=P^{(B)}_{1,b}=0$. Let $P^{(C)}_{1,c}=x$, where $x \in [0,1]$.

\item[Profile D ($\succ^{(D)}$):]
 Agent $1$ now reports $c \succ^{(D)}_1 b \succ^{(D)}_1 a$ while agents $2$ and $3$ report same preferences as before. For all three agents to not envy one another, they must receive the same probability for being assigned object $a$. Therefore, we must have $P^{(D)}_{1,a}=P^{(D)}_{2,a}=P^{(D)}_{3,a}=\frac{1}{3}$. Next, strategy-proofness between Profiles C and D implies that $P^{(D)}_{1,c}=P^{(C)}_{1,c}=x$. So we have $P^{(D)}_{1,b}=\frac{2}{3}-x$. Since agents $2$ and $3$ have identical preferences, they must receive identical allocations. Therefore, $P^{(D)}_{2b}=P^{(D)}_{3b}=\frac{1}{6}+\frac{x}{2}$.
 
\item[Profile E ($\succ^{(E)}$):]
Suppose now, agent $2$ changes her report to $b \succ^{(E)}_2 a \succ^{(E)}_2 c$ in Profile A. For agent $2$ to report truthfully, $P^{(E)}_{2,b}=P^{(A)}_{2,b}=\frac{1}{2}$. Envy-freeness implies $P^{(E)}_{3,b}=P^{(E)}_{2,b}=\frac{1}{2}$. Consequently, $P^{(E)}_{1,b}=0$. Let $P^{(E)}_{1,c}=y$, where $y \in [0,1]$. For agents $1$ and $2$ to not envy each other, they must get the same probability of receiving their least preferred object $c$. Therefore, $P^{(E)}_{2,c}=y$.

\item[Profile F ($\succ^{(F)}$):]
Now let's consider a profile where agents $1$, $2$, and $3$ prefer $c \succ^{(F)}_1 b \succ^{(F)}_1 a$, $b \succ^{(F)}_2 c \succ^{(F)}_2 a$, and $b \succ^{(F)}_3 a \succ^{(F)}_3 c$ respectively. Observe that Profile F can be obtained from Profile E by re-labeling objects $a$ and $c$. Since the mechanism satisfies neutrality, we must have $P^{(F)}_{i,b}=P^{(E)}_{i,b}$ $\forall i \in \{1,2,3\}$. But if we compare Profiles D and F, only agent $3$ has changed her report. By upper-invariance, $P^{(D)}_{3,b}=P^{(F)}_{3,b}=\frac{1}{2} \implies \frac{1}{6} + \frac{x}{2}= \frac{1}{2} \implies x =\frac{2}{3}$.

Thus, the resulting assignment for Profile C is presented in Table \ref{tab:apptable_profile_c}.

\begin{table}[h]
    \begin{subtable}[h]{0.45\textwidth}
        \centering
        \begin{tabular}{llll}
\\
\toprule
1 & c & a & b \\
2 & b & c & a \\
3 & b & c & a \\
        \bottomrule
       \end{tabular}
    \end{subtable}
    \begin{subtable}[h]{0.45\textwidth}
        \centering
        \begin{tabular}{llll}
 & a & b & c \\
\bottomrule
1 & $\nicefrac{1}{3}$ & 0 & $\nicefrac{2}{3}$ \\
2 & $\nicefrac{1}{3}$ & $\nicefrac{1}{2}$ & $\nicefrac{1}{6}$ \\
3 & $\nicefrac{1}{3}$ & $\nicefrac{1}{2}$ & $\nicefrac{1}{6}$ \\
        \bottomrule
       \end{tabular}
    \end{subtable}
     \caption{Profile C}
     \label{tab:apptable_profile_c}
\end{table}

\item[Profile G ($\succ^{(G)}$):]
Suppose agent $2$ mis-reports her preferences in Profile C with $c \succ^{(G)}_2 b \succ^{(G)}_2 a$. Lower-invariance implies that $P^{(G)}_{2,a}=P^{(C)}_{2,a}=\nicefrac{1}{3}$. For agent $3$ to not envy agent $2$, $P^{(G)}_{3,a}=\nicefrac{1}{3}$. So $P^{(G)}_{1,a}=\nicefrac{1}{3}$. Let $P^{(G)}_{1,c}=z$. Note that $z \in [\frac{1}{3},\frac{1}{2}]$.

\item[Profile H ($\succ^{(H)}$):]
Finally, suppose agent $1$ changes her report to $a \succ^{(H)}_1 c \succ^{(H)}_1 b$ in Profile G. By lower-invariance, $P^{(H)}_{1,b}=P^{(G)}_{1,b}=\frac{2}{3}-z$. But if we compare this Profile with Profile B, only agent $2$'s preferences change. Strategy-proofness implies that $P^{(H)}_{2,a}=P^{(B)}_{2,a}=0$. Further, envy-freeness between agents $2$ and $3$ implies that $P^{(H)}_{3,a}=P^{(H)}_{2,a}=0 \implies P^{(H)}_{1,a}=1$. So $P^{(H)}_{2,b}=0 \implies \frac{2}{3}-z=0 \implies z=\frac{2}{3}$. This is a contradiction to the condition necessary to maintain feasibility in Profile G.
\end{description}

\emph{Case 1b: Agents $2$ and $3$  prefer $c \succ b \succ a$.}

In this case, the unique envy-free assignment, with agent $1$ receiving object $a$ with probability one, is shown in Table \ref{tab:apptable_case1b}. Since the mechanism is neutral, re-labeling objects $b$ and $c$ leads us back to Profile B and its corresponding assignment. Strategy-proofness and envy-freeness between Profiles A and B in Case 1a, in turn, imply that the mechanism must allocate the same assignment to Profile A. We can now use Case 1a to arrive at a contradiction. 

\begin{table}[h]
    \begin{subtable}[h]{0.45\textwidth}
        \centering
        \begin{tabular}{llll}
\\
\toprule
1 & a & b & c \\
2 & c & b & a \\
3 & c & b & a \\
        \bottomrule
       \end{tabular}
    \end{subtable}
    \begin{subtable}[h]{0.45\textwidth}
        \centering
        \begin{tabular}{llll}
 & a & b & c \\
\bottomrule
1 & $1$ & $0$ & $0$ \\
2 & $0$ & $\nicefrac{1}{2}$ & $\nicefrac{1}{2}$ \\
3 & $0$ & $\nicefrac{1}{2}$ & $\nicefrac{1}{2}$ \\
        \bottomrule
       \end{tabular}
    \end{subtable}
     \caption{Profile for Case 1b}
     \label{tab:apptable_case1b}
\end{table}

\emph{Case 2: Agents $2$ and $3$  prefer distinct objects as their top choices.}

There are eight profiles that satisfy this condition. We show four of these in Table \ref{tab:apptable_profiles_case2} where agent $2$ has object $b$ as her most preferred while agent $3$ prefers object $c$. The remaining four are symmetrical to these with agent $2$ and agent $3$'s preferences swapped. The argument for these profiles is analogous to that for the four cases we consider below.

\begin{table}[h]
    \begin{subtable}[h]{0.22\textwidth}
        \centering
        \begin{tabular}{llll}
        \\
        \toprule
1 & a & b & c \\
2 & b & a & c \\
3 & c & a & b \\
        \bottomrule
       \end{tabular}
       \caption{Profile I}
       \label{tab:appprofile_2i}
    \end{subtable}
    \begin{subtable}[h]{0.22\textwidth}
        \centering
        \begin{tabular}{llll}
        \\
        \toprule
1 & a & b & c \\
2 & b & a & c \\
3 & c & b & a \\
        \bottomrule
       \end{tabular}
       \caption{Profile J}
       \label{tab:appprofile_2j}
    \end{subtable}
    \begin{subtable}[h]{0.22\textwidth}
        \centering
        \begin{tabular}{llll}
        \\
        \toprule
1 & a & b & c \\
2 & b & c & a \\
3 & c & a & b \\
        \bottomrule
       \end{tabular}
       \caption{Profile K}
       \label{tab:appprofile_2k}
    \end{subtable}
    \begin{subtable}[h]{0.22\textwidth}
        \centering
        \begin{tabular}{llll}
        \\
        \toprule
1 & a & b & c \\
2 & b & c & a \\
3 & c & b & a \\
        \bottomrule
       \end{tabular}
       \caption{Profile L}
       \label{tab:appprofile_2l}
    \end{subtable}\\
    \caption{Four Preference Profiles In Case (2)}
     \label{tab:apptable_profiles_case2}
\end{table}

\begin{description}
\item[Profile I ($\succ^{(I)}$):]
Notice that Profile I is a contention-free preference profile. Since $P^{(I)}_{1,a}=1$, for agents $2$ and $3$ to not envy each other, $P^{(B)}_{2,b}=P^{(C)}_{3,c}=1$. Thus we get a contention-free efficient assignment, which contradicts with the Theorem \ref{thm:main_hardness}.

The arguments for Profiles J and K are similar to Profile I.

\item[Profile L ($\succ^{(L)}$):]
The assignment for Profile L is shown in Table \ref{tab:apptable_profile_i}. If agent $2$ chooses to instead report $c \succ_2 b \succ_2 a$, strategy-proofness and envy-freeness imply that we arrive at the profile and its corresponding assignment in Case 1b for which we have already established a contradiction.

\begin{table}[h]
    \begin{subtable}[h]{0.45\textwidth}
        \centering
        \begin{tabular}{llll}
\\
\toprule
1 & a & b & c \\
2 & b & c & a \\
3 & c & b & a \\
        \bottomrule
       \end{tabular}
    \end{subtable}
    \begin{subtable}[h]{0.45\textwidth}
        \centering
        \begin{tabular}{llll}
 & a & b & c \\
\bottomrule
1 & $1$ & $0$ & $0$ \\
2 & $0$ & $x$ & $1-x$ \\
3 & $0$ & $1-x$ & $x$ \\
        \bottomrule
       \end{tabular}
    \end{subtable}
     \caption{Profile I}
     \label{tab:apptable_profile_i}
\end{table}

\end{description}

\section{Proof of Proposition \ref{prop:matrix_representation}}
\label{prop:matrix_representation_proof}

In order to prove this proposition, we need to first define some graph preliminaries. Let $G=(N,O,E)$ be a bipartite directed graph. The set of agents, $N$, and the set of objects, $O$, are nodes in $G$. For every agent-object pair, we add the restriction that there can exist at most one directed edge between the agent and the object. That is, if $(u,v) \in E$, then $(v,u) \notin E$. A \emph{cycle} in G is a sequence of directed edges, $(u_1,v_1), (u_2,v_2),\ldots (u_n,v_n)$ such that $v_i=u_{i+1}$, $\forall 1 \leq i \leq (n-1)$ and $v_n=u_1$. Let $\mathbb{C}$ be the set of all cycles. A \emph{flow} on the graph $G$ is a function $x:E \rightarrow \bbR_+$ such that the total flow entering a node must equal the flow exiting that node. Formally, $\forall u \in \{N \cup O\}, \sum_{v| (u,v) \in E} x(u,v) = \sum_{v| (v,u) \in E} x(v,u)$. We are now ready to prove the proposition.

Fix a preference profile $\bsucc$ and let $P^{\bsucc}=\varphi(\bsucc)$. Let $E$ be the equal division random assignment matrix. We have $E_{i,a}=\frac{1}{n}$, $\forall i\in N \text{ and } a \in O$. Let $D = P - E$. Notice that $D$ is a zero-sum matrix, i.e., a matrix in which every row and column sums to zero. Also, since $0 \leq P_{i,a} \leq 1$, $\frac{-1}{n} \leq D_{i,a} \leq \frac{n-1}{n}$. We now construct a graph $G$ in the following way: For every $i \in N$ and $a \in O$, if $D_{i,a} > 0$, then we add a directed edge from object node $a$ to agent node $i$. If instead $D_{i,a} < 0$, then we add a directed edge from node $i$ to node $a$. Formally, if $D_{i,a}>0$, then $(a,i) \in E$. If $D_{i,a}<0$, then $(i,a) \in E$. 

Let us define a function $x:E \rightarrow \bbR_+$ as follows: For all $i \in N$ and $a \in O$, if $(a,i) \in E$, then let $x(a,i)=D_{i,a}$. If $(i,a) \in E$, then let $x(i,a)=-D_{i,a}$. The function $x$ clearly represents a valid flow on the graph $G$. By the Flow Decomposition Theorem~\citep{ahuja1988network}, any such flow can be decomposed into a sum of flows around directed cycles. That is for all $(u,v) \in E$,
\begin{align*}
    x(u,v)&=\sum_{\{C \in \mathbb{C} | (u,v) \in C\}}g(C)
\end{align*}
where $g:\mathbb{C} \rightarrow \bbR_+$. 

We can now define the function $h^{\bsucc}$. For all $i,j \in N$ and $a \in O$,
\begin{align*}
    h^{\bsucc}(i,j,a)&=\begin{cases}
    \sum_{\{C \in \mathbb{C} | (j,a) \in C \text{ and } (a,i) \in C\}} g(C) \text{  } \quad (a,i) \in E \\
    -\sum_{\{C \in \mathbb{C} | (i,a) \in C \text{ and } (a,j) \in C\}} g(C) \quad (i,a) \in E \\
    0 \quad \quad \quad \quad \quad \quad \quad \quad \quad \quad \quad \quad \quad \text{ otherwise}
    \end{cases}
\end{align*}
First notice that, whenever $(i,a) \in E$ for some $i \in N$ and $a \in O$, we have $0 \leq x(i,a) \leq \frac{1}{n}$. Since $g(C) \geq 0$, $\forall C \in \mathbb{C}$, $h^{\bsucc}(i,j,a) \geq 0$ whenever $(a,i) \in E$. Additionally, \begin{align*}
    h^{\bsucc}(i,j,a) = \sum_{\{C \in \mathbb{C} | (j,a) \in C \text{ and } (a,i) \in C\}} g(C) \leq \sum_{\{C \in \mathbb{C} | (j,a) \in C\}} g(C) = x(j,a) \leq \frac{1}{n}
\end{align*}
On the other hand, when $(i,a) \in E$, 
\begin{align*}
    h^{\bsucc}(i,j,a) = -\Big[\sum_{\{C \in \mathbb{C} | (i,a) \in C \text{ and } (a,j) \in C\}} g(C)\Big] \geq -\Big[\sum_{\{C \in \mathbb{C} | (i,a) \in C\}} g(C) \Big] = -x(i,a) \geq \frac{-1}{n}
\end{align*}
Combining the inequalities for the two cases above, we have that $h^{\bsucc}(i,j,a) \in [\frac{-1}{n},\frac{1}{n}]$. We also notice that for all $(a,i) \in E$ such that $i \in N$ and $a \in O$, 
\begin{align*}
    x(a,i)=\sum_{\{C \in \mathbb{C} | (a,i) \in C\}}g(C) = \sum_{j \in N \setminus \{i\}} \sum_{\{C \in \mathbb{C} | (j,a) \in C \text{ and } (a,i) \in C\}}g(C) = \sum_{j \in N \setminus \{i\}}h^{\bsucc}(i,j,a)
\end{align*}
Similarly, for all $(i,a) \in E$ such that $i \in N$ and $a \in O$, 
\begin{align*}
    x(i,a)=\sum_{\{C \in \mathbb{C} | (i,a) \in C\}}g(C) = \sum_{j \in N \setminus \{i\}} \sum_{\{C \in \mathbb{C} | (i,a) \in C \text{ and } (a,j) \in C\}}g(C) = \sum_{j \in N \setminus \{i\}}-h^{\bsucc}(i,j,a)
\end{align*}
Using this definition, we can represent the random assignment matrix $P^{\bsucc}$ as,
\begin{align*}
    P^{\bsucc}_{i,a} &= E_{i,a} + D_{i,a}, \quad \quad  \quad \forall  i \in N \text{ and } \forall a \in O \\
    &= \frac{1}{n} + \sum_{j \in N \setminus \{i\}} h^{\bsucc}(i,j,a) \\
\end{align*}
which is our desired representation.

\section{Proof for Proposition \ref{prop:necessary_properties}}
\label{sec:function-properties}

In this section, we prove that the if the mechanism $\varphi^f$ induced by the transfer function is feasible, then $f$ must satisfy the following properties. For convenience, let $P^{\bsucc}=\varphi^f(\bsucc)$ be the random assignment associated with a profile $\bsucc$. 

\begin{description}
\item[\ref{item:identical}] \emph{No transfer:} $f(\succ,\succ,a)=0$, $\forall \succ \in \calR$ and $\forall a \in O$.
    
Suppose for contradiction that there exists $\succ \in \calR$ and $a \in O$ such that $f(\succ,\succ,a) \neq 0$. Consider a preference profile $\bsucc = (\succ_i)_{i \in N}$ such that $\succ_i=\succ$ for $i \in N$. However, by definition

\begin{align*}
    \sum_{i \in N}P^{\bsucc}_{i,a} &= \sum_{i \in N}\frac{1}{n} + \sum_{i \in N}\sum_{j \neq i}f(\succ_i,\succ_j,a) \\
    &= 1 + n \cdot (n-1)\cdot f(\succ,\succ,a) \\
    &\neq 1
\end{align*}
This contradicts with the feasibility of the random assignment $P^{\bsucc}$.

\item[\ref{item:zero}] \emph{Balanced transfers:} $\sum_{a \in O} f(\succ, \succ', a) = 0$ for any pair of preferences $\succ, \succ' \in \calR$.

We prove this by contradiction. Suppose that there exist a pair of preferences $\succ, \succ' \in \calR$ such that $\sum_{a \in O} f(\succ,\succ',a) \neq 0$. Consider a preference profile $\bsucc=(\succ_i,\bsucc_{-i})$ such that $\succ_i=\succ$ and $\succ_j = \succ'$ for $j \neq i$. The total allocation for agent $i$ is given by

\begin{align*}
    \sum_{a \in O}P^{\bsucc}_{i,a} &= \sum_{a \in O}\frac{1}{n} + \sum_{a \in O}\sum_{j \neq i}f(\succ_i,\succ_j,a) \\
    &= 1 + \sum_{a \in O}\sum_{j \neq i}f(\succ,\succ',a)\\
    &= 1+ (n-1) \cdot \sum_{a \in O} f(\succ,\succ',a) \\
    &\neq 1
\end{align*}
which contradicts feasibility of $P^{\bsucc}$.

\item[\ref{item:anti-symmetry}] \emph{Anti-symmetry:} $f(\succ, \succ', a) = - f(\succ', \succ, a)$ for any pair of preferences $\succ, \succ' \in \calR$ and $\forall a \in O$. 
We again proceed by contradiction. Suppose $\succ,\succ' \in \calR$ and $a \in O$ are such that $f(\succ,\succ',a)\neq -f(\succ',\succ,a)$. Let us again consider the preference profile $\bsucc=(\succ_k,\bsucc_{-k})$ such that $\succ_k=\succ$ and $\succ_j = \succ'$ for $j \neq k$. If we sum the probability of being assigned object $a$ over all agents
\begin{align*}
    \sum_{i \in N}P^{\bsucc}_{i,a} &= \sum_{i \in N}\frac{1}{n} + \sum_{i \in N}\sum_{j \neq i}f(\succ_i,\succ_j,a) \\
    &= 1 + \sum_{j \neq k}f(\succ_k,\succ_j,a) + \sum_{i \neq k}\sum_{j \neq i}f(\succ_i,\succ_j,a) \\
    &= 1 + \sum_{j \neq k}f(\succ,\succ',a) + \sum_{i \neq k}\sum_{j \in N\setminus\{i,k\}}f(\succ_i,\succ_j,a) + \sum_{i \neq k}f(\succ_i,\succ_k,a)  \\
    &= 1 + \sum_{j \neq k}f(\succ,\succ',a) + \sum_{i \neq k}\sum_{j \in N\setminus\{i,k\}}f(\succ',\succ',a)+\sum_{i \neq k}f(\succ',\succ,a)
    \intertext{Since $f$ satisfies the property that there are no transfers between identical preferences (Property (\ref{item:identical})) }
    \sum_{i \in N}P^{\bsucc}_{i,a} &= 1 + \sum_{j \neq k}f(\succ,\succ',a) + \sum_{i \neq k}f(\succ',\succ,a) \\
    &\neq 1
\end{align*}
which leads to a contradiction

\item[\ref{item:sum}] \emph{Bounded range:} $f(\succ, \succ', a) \in [-\frac{1}{n(n-1)}, \frac{1}{n(n-1)}]$ for any $\succ, \succ' \in \calR$ and $\forall a \in O$.

Consider a preference profile $\bsucc = (\succ_i, \bsucc_{-i})$ such that $\succ_i = \succ$ and $\succ_j = \succ'$ for all $j \neq i$. Since $P^{\bsucc}_{i,a} \geq 0$, we have
\begin{align*}
    \frac{1}{n} + \sum_{j \neq i} f(\succ_i, \succ_j, a) &\geq 0\\
    \frac{1}{n} + (n-1) f(\succ, \succ', a) &\geq 0 \implies f(\succ, \succ', a) \geq \frac{-1}{n(n-1)}
\end{align*}
Exchanging the roles of $\succ$ and $\succ'$, we similarly obtain $f(\succ', \succ, a) \geq \frac{-1}{n(n-1)}$.
Thus, by  the \emph{anti-symmetry} property defined above, we have $f(\succ, \succ', a) = -f(\succ', \succ, a) \leq \frac{1}{n(n-1)}$.
\end{description}

\section{Proofs from Section \ref{subsec:equiv}}
\label{app:proof_sp_equiv_ef}

\spmonotone*
\begin{proof}
We prove the contrapositive. Suppose there exists $\succ',\succ'' \in \calR$ and some $t \in [n]$, such that $\sum_{k=1}^t f(\succ,\succ'',a_k) < \sum_{k=1}^t f(\succ',\succ'',a_k)$. Consider $\bsucc$ such that $\succ_i = \succ$ for some $i \in N$ and $\succ_j=\succ''$ for all $j \neq i$. Let $\bsucc'$ be a preference profile where agent $i$ lies and reports $\succ'_i=\succ'$ while all other agents continue to report $\succ''$ as in $\bsucc$. Let $P^{\bsucc}=\varphi^f(\bsucc)$ and $P^{\bsucc'}=\varphi^f(\bsucc')$. We have that,
\begin{align*}
\sum_{k=1}^tP^{\bsucc}_{i,a_k} &= \frac{t}{n} + \sum_{j \neq i} \sum_{k=1}^t f(\succ_i,\succ_j,a_k) =\frac{t}{n} + \sum_{j \neq i} \sum_{k=1}^t f(\succ,\succ'',a_k) \\
&=\frac{t}{n} + (n-1) \cdot \sum_{k=1}^t f(\succ,\succ'',a_k) \\
&< \frac{t}{n} + (n-1) \cdot \sum_{k=1}^t f(\succ',\succ'',a_k) = \frac{t}{n} + \sum_{j \neq i} \sum_{k=1}^t f(\succ'_i,\succ_j,a_k) \\
&= \sum_{k=1}^tP^{\bsucc'}_{i,a_k}
\end{align*}
Thus $\varphi^f$ is not strategy-proof.
\end{proof}

\fupperlowerswap*
\begin{proof}
That swap monotonicity and sender invariance of the transfer function $f$ implies strategy-proofness of $\varphi^f$ follows from Lemma \ref{lemma:sp_swap_upper_lower} and definitions.

For the necessity part of the lemma, we first show that strategy-proofness implies sender invariance. Let $\succ' = \langle a_1, a_2, \ldots, a_n\rangle$, $\succ''$ be such that $\sigma(\succ', k)=\sigma(\succ'',k)$, $\forall k \leq t$ and consider any $\succ \in \calR$. At a preference profile $\bsucc$, where $\succ_i=\succ'$ for some agent $i \in N$ and $\succ_j=\succ$ for all other agents $j \in N \setminus \{i\}$, a mis-report $\succ'_i=\succ''$ for agent $i$, strategy-proofness implies that agent $i$ must receive the same allocation for her top $(t-1)$ objects as well her top $t$ objects under $\succ_i$ and $\succ'_i$ since $\sigma(\succ', k)=\sigma(\succ'',k)$, $\forall k \leq t$.  Let $P^{\bsucc} = \varphi^f(\bsucc)$ and $P^{(\succ'_i,\bsucc_{-i})}=\varphi^f(\succ'_i,\bsucc_{-i})$. We have,
\begin{align}
    \sum_{k=1}^{t-1} P^{\bsucc}_{i,a_k} &= \sum_{k=1}^{t-1} P^{(\succ'_i,\bsucc_{-i})}_{i,a_k} \\
    \implies \frac{t-1}{n} + (n-1)\sum_{k=1}^{t-1} f(\succ',\succ,a_k) &= \frac{t-1}{n} + (n-1)\sum_{k=1}^{t-1} f(\succ'',\succ,a_k) \label{eqn:sender_sp_t_1}
    \intertext{and}
    \sum_{k=1}^{t} P^{\bsucc}_{i,a_k} &= \sum_{k=1}^{t} P^{(\succ'_i,\bsucc_{-i})}_{i,a_k} \\
    \implies \frac{t}{n} + (n-1)\sum_{k=1}^{t} f(\succ',\succ,a_k) &= \frac{t}{n} + (n-1)\sum_{k=1}^{t} f(\succ'',\succ,a_k) \label{eqn:sender_sp_t}
\end{align}
Equations \eqref{eqn:sender_sp_t_1} and \eqref{eqn:sender_sp_t} together imply that $f(\succ',\succ,a_t) = f(\succ'',\succ,a_t)$.
Since $f$ satisfies anti-symmetry, we have our desired result that $f(\succ,\succ',a_t) = f(\succ,\succ'',a_t)$.

Next, if the transfer function $f$ is not swap-monotonic, then there exists $\succ' \in \calR$, which we can without loss of generality assume to be $\succ'=\langle a_1, a_2, \ldots, a_n \rangle$, $\succ'' \in \Gamma(\succ')$ such that $a_t \succ' a_{t+1}$ but $a_{t+1} \succ'' a_t$ for some $t \in [n]$ and $\succ \in \calR$ such that $f(\succ',\succ,a_{t}) < f(\succ'',\succ,a_{t})$. At the preference profile $\bsucc$, where $\succ_i=\succ'$ for some agent $i$ and $\succ_j=\succ$ for all other agents $j \in N \setminus \{i\}$, if agent $i$ chooses to deviate and report $\succ'_i=\succ''$, we have
\begin{align*}
    P^{\bsucc}_{i,a_t} = \frac{1}{n} + (n-1)f(\succ',\succ,a_t) < \frac{1}{n} + (n-1)f(\succ'',\succ,a_t) = P^{(\succ'_i,\bsucc_{-i})}_{i,a_t}
\end{align*}
which implies that the mechanism $\varphi^f$ is not swap-monotonic and hence is not strategy-proof.
\end{proof}

\efzerotransfers*
\begin{proof}
Let $\succ=\langle a_1,a_2,\ldots,a_n\rangle$ and $\succ'$ be such that $\sigma(\succ, k)=\sigma(\succ',k)$ $\forall k \leq t$. Consider a preference profile $\bsucc$ where $\succ_i=\succ$ for some agent $i$, $\succ_{j}=\succ'$ for all other agents $j \in N \setminus \{i\}$. For agent $i$ to not envy any other agent $j \in N \setminus \{i\}$, they must get the same allocation for agent $i$'s top $t-1$ and top $t$ objects, since $\sigma(\succ_i, k)=\sigma(\succ_j,k)$ $\forall k \leq t$, . That is,
\begin{align*}
    \sum_{k=1}^{t-1} P^{\bsucc}_{i,a_k} &= \sum_{k=1}^{t-1} P^{\bsucc}_{j,a_k} \text{ where $j \in N \setminus \{i\}$} \\
    \implies \frac{t-1}{n} + (n-1)\sum_{k=1}^{t-1} f(\succ,\succ',a_k) &= \frac{t-1}{n} + \sum_{k=1}^{t-1} f(\succ',\succ,a_k) +  (n-2)\sum_{k=1}^{t-1} f(\succ',\succ',a_k)
    \intertext{By the no transfer property from Proposition \ref{prop:necessary_properties}, $f(\succ',\succ',o)=0$ for all $o \in O$, which in conjunction with the anti-symmetry property of $f$ implies that}
    \sum_{k=1}^{t-1} f(\succ,\succ',a_k) &= 0
    \intertext{Similarly we get, }
    \sum_{k=1}^{t} f(\succ,\succ',a_k) &= 0   \\
    \implies f(\succ,\succ',a_t)&=0
\end{align*}
The argument for the second part of the lemma is symmetric to the one made above.
\end{proof}

\eftransferatleastzero*
\begin{proof}
    Consider a preference profile $\bsucc$ such that $\succ_i=\succ$ for some agent $i$ and $\succ_j=\succ'$ for all agents $j \in N \setminus \{i\}$. By envy-freeness, the allocation that agent $i$ receives for her top $t$ objects is at least as much as the allocation any other agent $j$ receives. That is,
    \begin{align*}
        \sum_{k=1}^t P^{\bsucc}_{i,a_k} &\geq \sum_{k=1}^t P^{\bsucc}_{j,a_k} \\
        \implies \frac{t}{n} + (n-1) \cdot \sum_{k=1}^t f(\succ,\succ',a_k) &\geq \frac{t}{n} + \sum_{k=1}^t f(\succ',\succ,a_k) + (n-2) \cdot \sum_{k=1}^t f(\succ',\succ',a_k) \\
        \implies \sum_{k=1}^t f(\succ,\succ',a_k) &\geq 0
    \end{align*}
where the last inequality follows from the no transfer and anti-symmetry properties.
\end{proof}

\section{Proof of Lemma \ref{lem:function-properties-necessary}}
\label{app:lemma-necessary}
From the definition of pairwise exchange mechanisms, it follows that mechanism $\varphi_f$ is neutral if and only if the function $f$ is neutral, i.e., it satisfies property \eqref{item:neutral}.

We now proceed to show that the receiver invariance property \eqref{item:receiver-invariance} is necessary. Suppose for contradiction that we have $\succ = \langle a_1, a_2, \ldots, a_n\rangle$ and $\succ'$ and $\succ''$ such that $rank(\succ', a_k)=rank(\succ'',a_k)$ $\forall k\leq t$, but $f(\succ,\succ',a_t)\neq f(\succ,\succ'',a_t)$. Without loss of generality, let $f(\succ,\succ',a_t) < f(\succ,\succ'',a_t)$. Fix an agent $i \in N$. Now consider a preference profile $\bsucc=(\succ_i,\bsucc_{-i})$ where $\succ_i = \succ$ and $\succ_j = \succ', \forall j \neq i$. 
        Let $\pi: O\rightarrow O$ be a permutation such that $\pi(\succ'')=\succ'$. Observe that since $rank(\succ', a_k)=rank(\succ'',a_k)$ $\forall k\leq t$, we have $\pi(a_k)=a_k$ $\forall k\leq t$. Let $\succ'_i=\pi(\succ_i)$ be the preference that results from re-labeling objects in $\succ_i$ using $\pi$. Note that in both $\succ_i$ and $\succ'_i$, we will have $\sigma(\succ_i,k)=\sigma(\succ'_i,k)$ $\forall k \leq t$. Since $f$ satisfies neutrality, $f(\succ_i,\succ'',a_t)=f(\pi(\succ_i),\pi(\succ''),\pi(a_t))=f(\succ'_i,\succ',a_t)$, which in conjunction with our assumption $f(\succ,\succ'',a_t)>f(\succ,\succ',a_t)=f(\succ_i,\succ',a_t)$ implies that $f(\succ'_i,\succ',a_t) > f(\succ_i,\succ',a_t)$.  
    Now, starting from $\bsucc$, suppose agent $i$ chooses to instead report her preference as $\succ'_i$. Let $\bsucc'=(\succ'_i,\bsucc_{-i})$ be the resulting profile. Since the mechanism is strategy-proof, $\sum_{k=1}^{t-1}P^{\bsucc}_{i,a_k}=\sum_{k=1}^{t-1}P^{\bsucc'}_{i,a_k}$. The probability that agent $i$ is assigned the object $a_t$ when she reports her true preferences is
    \begin{align*}
        P^{\bsucc}_{i,a_t} &= \frac{1}{n} + \sum_{j \neq i} f(\succ_i,\succ_j,a_t) =  \frac{1}{n} + (n-1) \cdot f(\succ_i,\succ',a_t) \\
        \intertext{The second equality above comes from the construction of our profile, where $\succ_j=\succ'$ $\forall j \neq i$. If agent $i$ reports $\succ'_i$ we have}
        P^{\bsucc'}_{i,a_t} &= \frac{1}{n} + \sum_{j \neq i} f(\succ'_i,\succ_j,a_t) \\
        &= \frac{1}{n} + (n-1) \cdot f(\succ'_i,\succ',a_t) \\
        &> \frac{1}{n} + (n-1) \cdot f(\succ_i,\succ',a_t) \\
        &= P^{\bsucc}_{i,a_t}
    \end{align*}
    which leads to a contradiction. The proof for the second part of the receiver invariance property follows symmetrically.

\section{Proof of Base Case of Theorem \ref{thm:main_rank}}
\label{app:theorem-base-case}

For the base case, let $n=3$ and $O=\{a_1,a_2,a_3\}$. Since $f$ is neutral, we will let $\succ=\langle a_1,a_2,a_3 \rangle$. There are six preferences in $\calR$. We list these preferences and the transfer functions for $\succ$ with each of them in Table \ref{tab:basecase}.

By the receiver invariance property, we know that for object $a_1$ and $\forall \succ',\succ'' \in \calR$
\begin{align*}
    rank(\succ',a_1)=rank(\succ'',a_1) \implies f(\succ,\succ',a_1)=f(\succ,\succ'',a_1)
\end{align*}
Let $x_i=f(\succ,\succ',a_1)$ be the transfers when $rank(\succ',a_1)=i$. By the same property, we also have that for object $a_3$ and $\forall \succ',\succ'' \in \calR$
\begin{align*}
    rank(\succ',a_3)=rank(\succ'',a_3) \implies f(\succ,\succ',a_3)=f(\succ,\succ'',a_3)
\end{align*}
Let $y_i=f(\succ,\succ',a_3)$ be the transfers when $rank(\succ',a_3)=i$.

By the sender invariance property for object $a_2$, if $rank(\succ',a_2)=rank(\succ'',a_2)=1$ or $rank(\succ',a_2)=rank(\succ'',a_2)=3$, then $f(\succ,\succ',a_2)=f(\succ,\succ'',a_2)$. Let $w_i=f(\succ,\succ',a_2)$ when $rank(\succ',a_2)=i$ for $i \in \{1,3\}$. It remains to be shown that the claim is true when $rank(\succ',a_2)=rank(\succ'',a_2)=2$.

Since $\sum_{a \in O} f(\succ, \succ', a) = c$, we have
\begin{align*}
    \sum_{a \in O} f(\succ, \langle a_1,a_3,a_2 \rangle, a) + \sum_{a \in O} f(\succ, \langle a_2,a_1,a_3 \rangle, a) &= \sum_{a \in O} f(\succ, \langle a_2,a_3,a_1 \rangle, a) + \sum_{a \in O} f(\succ, \langle a_3,a_1,a_2 \rangle, a) \\
    x_1 + w_3 + y_2 + x_2 + w_1 + y_3 &= x_3 + w_1 + y_2 + x_2 + w_3 + y_1 \\
    x_1 + y_3 &= x_3 + y_1
\end{align*}
This implies that when $rank(\succ',a_2)=rank(\succ'',a_2)=2$, $f(\succ,\succ',a_2)=f(\succ,\succ'',a_2)$ as desired.

\begin{table}[h]
        \centering
        \begin{tabular}{ccccccc}
& $\langle a_1,a_2,a_3\rangle$ & $\langle a_1,a_3,a_2\rangle$ & $\langle a_2,a_1,a_3\rangle$ & $\langle a_2,a_3,a_1\rangle$ & $\langle a_3,a_1,a_2\rangle$ & $\langle a_3,a_2,a_1\rangle$ \\
\bottomrule
$a_1$ & $x_1$ & $x_1$ & $x_2$ & $x_3$ & $x_2$ & $x_3$\\
$a_2$ & & $w_3$ & $w_1$ & $w_1$ & $w_3$ & \\
$a_3$ & $y_3$ & $y_2$ & $y_3$ & $y_2$ & $y_1$ & $y_1$\\
\bottomrule
       \end{tabular}
       \caption{Transfer Function $f$ for $n=3$}
       \label{tab:basecase}
    
\end{table}

\section{Proof of Proposition \ref{prop:rankfunction}}
\label{app:proposition-proof}

Let's prove the easy direction first. If there exists a function $g$ such that $f(\succ, \succ', a) = g(rank(\succ, a), rank(\succ',a))$, then for any two preferences $\succ, \succ'$ such that $rank(\succ', a) = rank(\succ'', a) = r$, we have 
$f(\succ, \succ', a) = g(rank(\succ, a), r) = f(\succ, \succ'', a)
$ as desired.

For the other direction, let $\succ^*$ be an arbitrary preference relation and let $\succ^* = \langle a_1, a_2, \ldots, a_n\rangle$ without loss of generality. We define a function $g : [n] \times [n] \rightarrow [-1,1]$ as follows - 
\begin{align*}
    g(r,s) = f(\succ^*, \hat{\succ}, a_r) \text{ where $\hat{\succ}$ is an arbitrary preference such that $rank(\hat{\succ}, a_r) = s$.}
\end{align*}
Note that this function is well defined since we have $f(\succ^*, \hat{\succ}, a_r) = f(\succ^*, \hat{\succ}', a_r)$ for any other preference relation $\hat{\succ}'$ that has $rank(\hat{\succ}', a_r) = s$. Now consider any two preferences $\succ$ and $\succ'$ and an object $a \in O$. Suppose $rank(\succ,a) = \ell$. Let $\pi$ be a permutation such that $\succ^* = \pi(\succ)$, and define $\tilde{\succ} = \pi(\succ')$. By definition, we have $rank(\tilde{\succ}, a_\ell) = rank(\succ', a)$.
Since the function $f$ is neutral, we must have
\begin{align*}
    f(\succ, \succ', a) &= f(\succ^*, \tilde{\succ}, a_\ell) = g(\ell, rank(\tilde{\succ}, a_\ell)) = g(rank(\succ, a), rank(\succ', a)) \qedhere
\end{align*}

\section{Proof of Theorem \ref{thm:main-charac}}
\label{app:thm-pairwise-charac}

For the first direction, we observe that Lemma \ref{lem:function-properties-necessary}, Theorem \ref{thm:main_rank}, Proposition \ref{prop:rankfunction}, and Theorem \ref{thm:g-to-linear} together imply the existence of a vector $v \in \bbR^n$ such that $f(\succ, \succ', a) = v[rank(\succ,a)] - v[rank(\succ',a)]$. The additional restrictions on the vector $v$ stem from property \eqref{item:sum} in Proposition \ref{prop:necessary_properties} and the strategy-proofness of the mechanism $\varphi^f$. If the vector $v$ is not sorted, it is easy to find a profile that violates the strategy-proofness of the mechanism. The restriction of $v \in [0,\frac{1}{n(n-1)}]^n$ with $v[0]$ is needed to ensure that $f(\succ, \succ', a) \in [-\frac{1}{n(n-1)},\frac{1}{n(n-1)}]$.

We now show that a vector $v$ with the above-mentioned properties is sufficient to guarantee that the mechanism $\varphi_f$ is neutral, strategy-proof, and envy-free. First, to see that the mechanism $\varphi_f$ is a feasible random assignment mechanism, we observe that in any profile $\bsucc$, we have $P^{\bsucc}_{i,a} = \dfrac{1}{n} + \sum_{j \neq i} (v[rank(\succ_i, a)] - v[rank(\succ_j,a)])$. But since for any $m, k$, we have $v[m] - v[k] \in [\dfrac{-1}{n(n-1)}, \dfrac{1}{n(n-1)}]$, this implies that $P^{\bsucc}_{i,a} \in [0, 1]$ as desired. The total allocation over all objects for any agent $i$ is given by
\begin{align*}
    \sum_{a \in O} P^{\bsucc}_{i,a} &= \sum_{a \in O} \dfrac{1}{n} + \sum_{a \in O} \sum_{j \neq i}(v[rank(\succ_i,a)]-v[rank(\succ_j, a)]) \\
    &= 1 + \sum_{j \neq i} (\sum_{a \in O} v[rank(\succ_i,a)] - \sum_{a \in O}v[rank(\succ_j, a)])
    = 1
\end{align*}

Similarly, the total allocation of an object $a \in O$ over all agents is 
\begin{align*}
    \sum_{i \in N} P^{\bsucc}_{i,a} &= \sum_{i \in N} \dfrac{1}{n} + \sum_{i \in N} \sum_{j \neq i}(v[rank(\succ_i,a)]-v[rank(\succ_j, a)]) \\
    &= 1 + \sum_{i \in N}((n-1)\cdot v[rank(\succ_i,a)] - \sum_{j \neq i}v[rank(\succ_j, a)]) \\
    &= 1 + (n-1)\cdot \sum_{i \in N} v[rank(\succ_i,a)] - \sum_{i \in N}\sum_{j \neq i}v[rank(\succ_j, a)] \\
    &= 1 + (n-1)\cdot \sum_{i \in N} v[rank(\succ_i,a)] - (n-1) \cdot \sum_{i \in N} v[rank(\succ_i, a)] \\
    &= 1
\end{align*}

Thus, the assignment obtained is doubly stochastic and hence the mechanism is feasible. The mechanism is neutral by definition since the function $f$ is neutral. Finally, we show that the mechanism $\varphi_f$ is also strategy-proof and envy-free.

Consider two agents $i, i' \in N$ and any preference profile $\bsucc$. Let without loss of generality that $\succ_i = \langle a_1, \ldots, a_n \rangle$. For any $t \in [n]$, let us consider the total allocation obtained by agents $i$ and $i'$ for the top $t$ objects in agent $i$'s preference. We have the following.
\begin{align}
    \sum_{k \leq t} P_{i,a_k} &= \dfrac{t}{n} + \sum_{j \in N \setminus \{i\}} \sum_{k \leq t} (v[k] - v[rank(\succ_j, a_k)])\\
    &= \dfrac{t}{n} + \sum_{j \in N \setminus \{i,i'\}} \sum_{k \leq t} (v[k] - v[rank(\succ_j, a_k)]) + \sum_{k \leq t} (v[k] - v[rank(\succ_{i'}, a_k)]) 
    \intertext{However, for any agent $i'$, $\sum_{k \leq t} v[k] \geq \sum_{k \leq t} v[rank(\succ_{i'}, a_k)]$ since the vector $v$ is sorted, and hence we have}
    &\geq \dfrac{t}{n} + \sum_{j \in N \setminus \{i,i'\}} \sum_{k \leq t} (v[rank(\succ_{i'}, a_k)] - v[rank(\succ_j, a_k)]) + \sum_{k \leq t} (v[rank(\succ_{i'}, a_k)] - v[k])\\
    &= \sum_{k \leq t} P_{i',a_k}
\end{align}
and hence the mechanism $\phi_f$ is envy-free.

To show strategy-proofness, consider a profile $\bsucc=(\succ_i, \bsucc_{-i})$ and another profile $\bsucc' = (\succ_i', \bsucc_{-i})$ where agent $i$ misreports her preferences. Let $P = \varphi_f(\bsucc)$ and $P' = \varphi_f(\bsucc')$ denote the corresponding random assignments obtained by the mechanism. We have the following.
\begin{align}
    \sum_{k \leq t} P_{i,a_k} &= \dfrac{t}{n} + \sum_{j \in N \setminus \{i\}} \sum_{k \leq t} (v[k] - v[rank(\succ_j, a_k)])
    \intertext{Again, since the vector $v$ is sorted, we have $\sum_{k \leq t} v[k] \geq \sum_{k \leq t} v[rank(\succ'_i, a_k)]$ for any $\succ'_i$, and hence we have}
    &\geq \dfrac{t}{n} + \sum_{j \in N \setminus \{i\}} \sum_{k \leq t} (v[rank(\succ'_{i}, a_k)] - v[rank(\succ_j, a_k)]) = \sum_{k \leq t} P'_{i,a_k}
\end{align}
and thus truth-telling is a dominant strategy for any agent $i$ and the mechanism is strategy-proof.

\section{Proof of Theorem \ref{thm:linear-pareto-efficient}}
\label{app:linear-pareto-efficient}

We first show that the condition above is sufficient to guarantee that $\varphi^v$ is pareto-efficient within the class $\cF$. Suppose for contradiction that there exists a vector $u \in [0, \dfrac{1}{n(n-1)}]^n$ where $u \neq v$ such that the mechanism $\varphi^u$ dominates $\varphi^v$. 

We first claim that any such vector $u$ must satisfy $u[k] \leq v[k],\ \forall k \in [n]$. Indeed, since $u[1] \leq \dfrac{1}{n(n-1)} = v[1]$, the claim is trivially true for $k=1$. For any $k > 1$, consider a profile $\bsucc$ where the agent 1 has the preference $\succ_1 = \langle a_1, a_2, \ldots, a_n\rangle$, while all other agents have object $a_1$ in the $k^{\text{th}}$ position in their preference, i.e. $rank(\succ_j, a_1) = k,\ \forall j \neq i$. Let $Q_{1,1}$ and $P_{1,1}$ denote the allocation received by agent 1 for object $a_1$ in mechanism $\varphi^u$ and $\varphi^v$ respectively.
We have,
\begin{align}
    Q_{1,1} &= \dfrac{1}{n} + \sum_{j \neq i}(u[1] - u[k]) = \dfrac{1}{n} + (n-1)(u[1] - u[k])\\
    &\leq P_{1,1} + (n-1)(v[k]-u[k])
    \intertext{Our assumption that $\varphi^u$ dominates $\varphi^v$ implies that $Q_{1,1} \geq P_{1,1}$, and hence we have}
    v[k] &\geq u[k],\ \forall k \in [n] \label{eq:vlarge}
\end{align}

On the other hand, we can also show that $u[k] \geq v[k],\ \forall k \in [n]$. Since $v[n] = 0$, this is trivially true for $k = n$. For any $k <n$, consider a profile $\bsucc$ where the agent 1 has the preference $\succ_1 = \langle a_1, a_2, \ldots, a_n\rangle$, while all other agents have object $a_n$ in the $k^{\text{th}}$ position in their preference, i.e. $rank(\succ_j, a_n) = k,\ \forall j \neq i$. Let $Q_{1,n}$ and $P_{1,n}$ denote the allocation received by agent 1 for object $a_n$ in mechanism $\varphi^u$ and $\varphi^v$ respectively.
We have,
\begin{align}
    Q_{1,n} &= \dfrac{1}{n} + \sum_{j \neq i}(u[n] - u[k]) = \dfrac{1}{n} + (n-1)(u[n] - u[k])\\
    &\leq P_{1,n} + (n-1)(v[k]-u[k])
    \intertext{Our assumption that $\varphi^u$ dominates $\varphi^v$ implies that $Q_{1,n} \leq P_{1,n}$, and hence we have}
    v[k] &\leq u[k],\ \forall k \in [n] \label{eq:vsmall}
\end{align}

Inequalities \eqref{eq:vlarge} and \eqref{eq:vsmall} together imply that $u[k] = v[k],\ \forall k \in [n]$ which is a contradiction since we assumed that $u \neq v$.

We next prove that the condition $v[1]=\frac{1}{n(n-1)}$ is necessary for Pareto-efficiency within the family of mechanisms $\mathcal{F}$. We prove the contrapositive of the statement. Suppose $v[1] < \frac{1}{n(n-1)}$.  Let $u \in [0, \dfrac{1}{n(n-1)}]^n$ be such that $u[1]=\frac{1}{n(n-1)}$ and $u[k]=v[k]$ for $k \in [n] \setminus \{1\}$. For a profile $\bsucc \in \calR$, let $P^{\bsucc}=\varphi^v(\bsucc)$ and $Q^{\bsucc}=\varphi^u(\bsucc)$. Let without loss of generality that $\succ_i = \langle a_1, \ldots, a_n \rangle$. For any $t \in [n]$, let us consider the total allocation obtained by agent $i$ for her top $t$ objects under $\varphi^u$ and $\varphi^v$.
\begin{align*}
        \sum_{m \leq t} P^{\bsucc}_{i,a_m} &= \dfrac{t}{n} + \sum_{j \in N \setminus \{i\}} \sum_{m \leq t} (v[m] - v[rank(\succ_j, a_m)])\\
        \intertext{Notice that if $rank(\succ_j,a_\ell)=1$ for an agent $j$ and some $\ell \leq t$, then $\sum_{m \leq t} (v[m] -  v[rank(\succ_j,a_m)]) =  \sum_{m \leq t} (u[m] - u[rank(\succ_j,a_m)])$. On the other hand, if $rank(\succ_j,a_m) \neq 1$ for any agent $j$ and all $m \leq t$, then $\sum_{m \leq t} (v[m] - (v[rank(\succ_j,a_m)]) <  \sum_{m \leq t} (u[m] -  u[rank(\succ_j,a_m)])$. Therefore}.
        \sum_{m \leq t} P^{\bsucc}_{i,a_m} & \leq \dfrac{t}{n} + \sum_{j \in N \setminus \{i\}} \sum_{m \leq t} (u[m] - u[rank(\succ_j, a_m)]) =  \sum_{m \leq t} Q^{\bsucc}_{i,a_m}
\end{align*}
Further, the above inequality is strict at any profile $\bsucc$ where there exists an agent $j$ with $\sigma(\succ_j,1) \neq \sigma(\succ_i,1)$ and $t=1$. Therefore, $\varphi^u$ dominates $\varphi^v$.

\end{appendix}

\end{document}